\newcommand{\bw}{\mathbf{w}}
\newcommand{\bx}{\mathbf{x}}
\newcommand{\bbR}{\mathbb{R}}
\newcommand{\U}{\mathcal{U}}
\newcommand{\given}{ \ : \ }
\newtheorem{definition}{Definition}
\newtheorem{theorem}{Theorem}
\newtheorem{lemma}{Lemma}
\newtheorem{corollary}{Corollary}
\newtheorem*{theorem*}{Theorem}
\newtheorem*{lemma*}{Lemma}
\newcommand{\CC}{\mathcal{C}}
\newcommand{\CD}{\mathcal{D}}
\newcommand{\CH}{\mathcal{H}}
\newcommand{\CX}{\mathcal{X}}
\newcommand{\CS}{\mathcal{S}}
\newcommand{\R}{\mathbb{R}}
\newcommand{\avg}{\texttt{avg}}
\providecommand{\norm}[1]{\left\lVert#1\right\rVert}
\DeclareMathOperator{\poly}{poly}
\DeclareMathOperator{\sign}{sign}
\DeclareMathOperator{\E}{E}
\newcommand{\EU}[1][]{\underset{#1}{\E}}
\newcommand{\scs}{\textsc{Statistical Cost Sharing}}
\title{Statistical Cost Sharing}
\author{Eric Balkanski\footnote{School of Engineering and Applied Sciences, Harvard University, ericbalkanski@g.harvard.edu.} 
\qquad Umar Syed \footnote{Google NYC, usyed@google.com.} 
\qquad Sergei Vassilvitskii\footnote{Google NYC, sergeiv@google.com.}}
\date{}
\begin{document}

\maketitle

\begin{abstract}
We study the cost sharing problem for cooperative games in situations where the cost function $C$ is not available via oracle queries, but must instead be derived from data, represented as tuples $(S, C(S))$, for different subsets $S$ of players. We formalize this approach, which we call \scs,
and consider the computation of the core and the Shapley value, when the tuples are drawn from some distribution $\CD$. 

Previous work by \citet{BPZ-15} in this setting showed how to compute cost shares that  satisfy the core property with high probability for limited classes of functions. We expand on their work and give an algorithm that computes such cost shares for any function with a non-empty core.  We complement these results by proving an inapproximability lower bound for a weaker relaxation.

We then turn our attention to the Shapley value. We first show that when cost functions come from the family of submodular functions with bounded curvature, $\kappa$, the Shapley value can be approximated from samples up to a $\sqrt{1 - \kappa}$ factor, and that the bound is tight. We then define statistical analogues of the Shapley axioms, and derive a notion of statistical Shapley value. We show that these can always be approximated arbitrarily well for general functions over any distribution $\CD$. 
\end{abstract}

\newpage




\section{Introduction}
\label{sec:intro}
The cost sharing problem asks for an equitable way to split the cost of a service among all of the participants. Formally, there is a cost function defined over all subsets of a ground set of elements (or players) and the objective is to fairly divide the cost of the full set among the participants. Cost sharing is central to cooperative game theory, and there is a rich literature developing the key concepts and principles to reason about this topic. Two popular cost sharing concepts are the {\em core}~\cite{gillies1959solutions}, where no group of players has an incentive to deviate, and the {\em Shapley value}~\cite{shapley2016value}, which is the unique vector of cost shares satisfying four natural axioms.

While both the core and the Shapley value are easy to define, computing them poses additional challenges. One obstacle is that the computation of the cost shares requires knowledge of costs in myriad different scenarios. For example, computing the exact  Shapley value requires one to look at the marginal contribution of a player over {\em all possible subsets}. Recent work ~\cite{liben2012computing} shows that one can find approximate Shapley values for a restricted subset of cost functions by looking at the costs for polynomially many specifically chosen examples. In practice, however, another roadblock emerges: one cannot simply query for the cost of a hypothetical scenario. Rather, the costs for scenarios that have not occurred are simply unknown. We share the opinion of \citet{BPZ-15} that the main difficulty with using  cost sharing methods in concrete applications is the information needed to compute them.

Concretely, consider the following cost sharing applications. 
\paragraph{Attributing Battery Consumption on Mobile Devices.}
A modern mobile phone or tablet is typically  running a number of distinct apps concurrently. In addition to foreground processes, a lot of activity may be happening in the background: email clients may be fetching new mail, GPS may be active for geo-fencing applications,  messaging apps are polling for new notifications, and so on. All of these activities consume power; the question is how much of the total battery consumption should be attributed to each app? This problem is non-trivial because the operating system induces cooperation between apps to save battery power.  For example there is no need to activate the GPS sensor twice if two different apps request the current location almost simultaneously. 

\paragraph{Moneyball and Player Ratings} 
The impact of an individual player on the overall performance of the team typically depends on the other players currently playing. One can infer the total  benefit from the players on the field (or on the court) from metrics like number of points scored, time of possession, etc., the question here is how to allocate this impact to the individuals. Recently many such metrics have been proposed (for example plus/minus ratio in hockey, wins above replacement in baseball.), our goal here is to find scores compatible with cooperative game theory concepts.  

\paragraph{Understanding Black Box Learning} 
Deep neural networks are prototypical examples of black box learning, and it is almost impossible to tease out the contribution of a particular feature to the final output. Particularly, in situations where the features are binary, cooperative game theory gives a formal way to analyze and derive these contributions. While one can evaluate the objective function on any subset of features, deep networks are notorious for performing poorly on certain out of sample examples~\cite{AdversarialNN, AdversarialNN2}, which may lead to misleading conclusions when using traditional cost sharing methods.

We model these cost sharing questions as follows.  Let $N$ be the set of possible players (apps or features), and for a subset $S \subseteq N$, let $C(S)$ denote the cost of $S$. This cost represents the total power consumed over a standard period of time, or the number of points scored by the team, and so on. We are given ordered pairs $(S_1, C(S_1)), (S_2, C(S_2)), \ldots,$ $(S_m, C(S_m))$, where each $S_i \subseteq N$ is drawn independently from some distribution $\CD$.  The problem of \scs\ asks to look for reasonable cost sharing strategies in this setting. 

\subsection{Our results} 
We build on the approach from \citet{BPZ-15}, which studied \scs\ in the context of the core, and assume that only partial data about the cost function is observed. The authors showed that cost shares that are likely to respect the core property  can be obtained for certain restricted classes of functions.  Our main result is an algorithm that generalizes these results for {\em all} games where the core is non-empty and we derive sample complexity bounds showing exactly the number of samples required to compute cost shares (Theorems~\ref{t:vccore} and \ref{thm:approx_core_learn}).\footnote{Concurrently and independently of our work, \citet{BPZ-16} proved a polynomial sample complexity bound for this problem of computing cost shares that are likely to respect the core property, for all functions with a non-empty core.}  While the main approach of \citet{BPZ-15} relied on first learning the cost function and then computing cost shares, we show how to proceed directly, computing cost shares  without explicitly learning a good estimate of the cost function. We also show that approximately satisfying the core with probability one is impossible in general (Theorem~\ref{t:core}).

We then focus on the Shapley value, which has never been studied in the \scs \ context. We introduce a new cost sharing method called \emph{data-dependent Shapley value}  which is the unique solution (Theorem~\ref{t:ddshapley}) satisfying four natural axioms resembling the Shapley axioms (Definition~\ref{d:ddshapley}), and which can be approximated arbitrarily well from samples for any bounded function and any distribution (Theorem~\ref{t:apxddshapley}). Regarding the traditional Shapley value,  we obtain a tight $\sqrt{1 - \kappa}$ multiplicative approximation for submodular functions with bounded curvature $\kappa$ over the uniform distribution (Theorems~\ref{t:curv} and \ref{t:lbcurv}), but show that they cannot be approximated by a bounded factor in general, even for the restricted class of coverage functions, which are learnable,  over the uniform distribution (Theorem~\ref{thm:lower}).

\subsection{Related work}
There are two avenues of work which we build upon. The first is the notion of cost sharing in cooperative games, first introduced by  \citet{von2007theory}. We consider the Shapley value and the core, two popular solution concepts for cost-sharing in cooperative games. The Shapley value \cite{shapley2016value} is studied in algorithmic mechanism design \cite{anshelevich2008price,balkanski2015mechanisms,feigenbaum2000sharing,moulin1999incremental}. For applications of the Shapley value, see the surveys by \citet{roth1988shapley} and \citet{winter2002shapley}. A naive computation of the Shapley value of a cooperative game would take exponential time; recently, methods for efficiently approximating the Shapley value have been suggested \cite{bachrach2010approximating,fatima2008linear,liben2012computing,mann1960values} for some restricted settings.  

The core, introduced by \citet{gillies1959solutions}, is another well-studied solution concept for cooperative games. \citet{bondareva1963some} and \citet{shapley1967balanced} characterized when the core is non-empty. The core has been studied in the context of multiple combinatorial games, such as facility location \cite{goemans2004cooperative} and maximum flow \cite{deng1999algorithmic}. In cases with no solutions in the core  or when it is computationally hard to find one, the balance property has been relaxed to hold approximately \cite{devanur2005strategyproof, immorlica2008limitations}. In applications where players submit bids, cross-monotone cost sharing, a concept stronger than the core that satisfies the group strategy proofness property, has attracted a lot of attention~\cite{immorlica2008limitations,jain2002equitable,moulin2001strategyproof,pal2003group}. We note that these applications are sufficiently different from the ones we are studying in this work. 

The second is the recent work in econometrics and computational economics that aims to estimate critical concepts directly from a limited data set, and reason about the sample complexity of the computational problems.  Specifically, in all of the above papers, the algorithm must be able to query or compute $C(S)$ for an arbitrary set $S \subseteq N$. In our work, we are instead given a collection of samples from some distribution; importantly the algorithm does not know $C(S)$ for sets $S$ that were not sampled. This approach was first introduced by \citet{BPZ-15}, who showed how to compute an approximate core for some families of games. Their main technique  is to first learn the cost function $C$ from samples and then to use the learned function to compute cost shares. The authors also showed that there exist games that are not PAC-learnable but that have an approximate core that can be computed.

Outside of cooperative game theory, this data-driven approach has attracted renewed focus. In auction design, a line of work \cite{chawla2014mechanism,cole2014sample, dughmi2014sampling, morgenstern2015pseudo} has studied revenue maximization from samples instead of being given a Bayesian prior. In the inductive clustering setting, the algorithm is only given a small random subset of the data set it wishes to cluster \cite{balcan2009finding,balcan2009agnostic}. More closely related to our work is the problem of optimization from samples \cite{ curvature, BRS17} where,  the goal is to approximate $\max_{S \in M} C(S)$ for some constraint $M \subseteq 2^N$ and $C : 2^N \rightarrow \bbR$ from samples for some class of combinatorial functions.


\section{Preliminaries}
A \emph{cooperative game} is defined by an ordered pair $(N, C)$, where $N$ is the ground set of \emph{elements}, also called \emph{players}, and $C : 2^N \rightarrow \mathbb{R}_{\geq 0}$ is the \emph{cost function} mapping each \emph{coalition} $S \subseteq N$ to its cost, $C(S)$. The ground set of size $n = |N|$ is called the \emph{grand coalition} and we denote the elements by $N = \{1, \ldots, n\} = [n]$. We assume  that $C(\emptyset) = 0$, $C(S) \ge 0$ for all $S \subseteq N$, and that $\max_S C(S)$ is bounded by a polynomial in $n$, which are standard assumptions. We will slightly abuse notation and use $C(i)$ instead of $C(\{i\})$ for $i \in N$ when it is clear from the context. 

We recall three specific classes of functions. \emph{Submodular} functions exhibit the property of diminishing returns: $C_S(i) \geq C_T(i)$ for all $S \subseteq T \subseteq N$ and $i \in N$ where $C_S(i)$ is the marginal contribution of element $i$ to set $S$, i.e., $C_S(i) = C(S \cup \{i\}) - C(S)$. 
\emph{Coverage} functions are the canonical example of submodular functions. A function is coverage if it can be written as $C(S) = | \cup_{i \in S} T_i|$ where $T_i \subseteq U$ for some universe $U$. 
Finally, we also consider the simple class of additive functions that are such that $C(S) = \sum_{i\in S}C(i)$.

A \emph{cost allocation} is a vector $\psi \in \R^n$ where $\psi_i$ is the \emph{share} of element $i$. 
We call a cost allocation $\psi$ \emph{balanced} if $\sum_{i \in N} \psi_i = C(N)$. Given a cooperative game $(N, C)$ the goal in the cost sharing literature is to find ``desirable" balanced cost allocations. Most proposals take an axiomatic approach, defining a set of axioms that a cost allocation should satisfy. These lead to the concepts of Shapley value and the core, which we  define next. A useful tool to describe and compute these cost sharing concepts is permutations. We denote by $\sigma$ a uniformly random permutation of $N$ and by $S_{\sigma < i}$ the players before $i$ in permutation $\sigma$.

\subsection{The core}
The core is a balanced cost allocation where no player has an incentive to deviate from the grand coalition---for any subset of players the sum of their shares does not cover their collective cost. 
\begin{definition} A cost allocation $\psi$ is in the \emph{core} of function $C$ if the following properties are satisfied:
\begin{itemize}
\item \textbf{Balance:} $\sum_{i \in N} \psi_i = C(N)$, 
\item \textbf{Core property:} for all $S \subseteq N$, $\sum_{i \in S} \psi_i \leq C(S)$.
\end{itemize}
\end{definition}
The core is a natural cost sharing concept. For example, in the battery blame scenario it translates to the following assurance:  No matter what other apps are running concurrently, an app is never blamed for more battery consumption than if it were running alone. Given that app developers are typically business competitors, and that a mobile device's battery is a very scarce resource, such a guarantee can rather neatly avoid a great deal of finger-pointing.  Unfortunately, for a given cost function $C$ the core may not exist (we say the core is empty), or there may be multiple (or even infinitely many) cost allocations in the core. For submodular functions $C$, the core is guaranteed to exist and one allocation in the core can be computed in polynomial time. Specifically, for any permutation $\sigma$, the cost allocation $\psi$ such that  $\psi_i = C(S_{\sigma < i} \cup \{i\}) - C(S_{\sigma < i})$ is in the core. 


\subsection{The Shapley value}
The Shapley value provides an alternative cost sharing method. For a game $(N, C)$ we denote it by $\phi^C$, dropping the superscript when it is clear from the context. 
  While the Shapley value may not satisfy the core property, they satisfy the following four equally natural axioms: 

\begin{itemize}
\item \textbf{Balance:} $\sum_{i \in N} \phi_i  = C(N)$.
\item \textbf{Symmetry:} For all $i, j \in N$, if $C(S \cup \{i\}) = C(S \cup \{j\})$ for all $S \subseteq N \setminus \{i, j\}$ then $\phi_i = \phi_j$.
\item \textbf{Zero element:} For all $i \in N$, if $C(S \cup \{i\}) = C(S)$ for all $S \subseteq N$ then $\phi_i = 0$.
\item \textbf{Additivity:}  For two games $(N, C_1)$ and $(N, C_2)$ with the same players, but different cost functions $C_1$ and $C_2$, let $\phi^1$ and $\phi^2$ be the respective cost allocations. Consider a new game $(N, C_1 + C_2)$, and let $\phi'$ be the cost allocation for this game. Then for all elements, $i \in N$, $\phi'_i = \phi^1_i + \phi^2_i$.
 
\end{itemize}
Each of these axioms is natural: balance ensures that the cost of the grand coalition is distributed among all of the players. Symmetry states that two identical players should have equal shares. Zero element verifies that a player that adds zero cost to any coalition should have zero share. Finally, additivity just confirms that costs combine in a linear manner.  It is surprising that the set of cost allocations that satisfies all four axioms is unique. Moreover, the Shapley value $\phi$ can be written as the following summation: 

$$ \phi_i  = \EU[\sigma][C(S_{\sigma < i} \cup\{i\}) - C(S_{\sigma < i})] = \sum_{S \subseteq N \setminus\{i\}}\frac{|S|!(n - |S| - 1)!}{n!} (C(S \cup\{i\}) - C(S)).$$

This expression is the expected marginal contribution $C(S \cup\{i\}) - C(S)$ of $i$ over a set of players $S$ who arrived before $i$ in a random permutation of $N$.  As the summation is over exponentially many terms, the Shapley value generally cannot be computed exactly in polynomial time. However, several sampling approaches have been suggested to approximate the Shapley value for specific classes of functions~\cite{bachrach2010approximating,fatima2008linear,liben2012computing,mann1960values}. 

\subsection{Statistical cost sharing}
With the sole exception of ~\cite{BPZ-15}, previous work in cost-sharing critically assumes that the algorithm is given oracle access to $C$, i.e., it can query, or determine, the cost $C(S)$ for any $S \subseteq N$. In this paper, we aim to (approximately) compute the Shapley value and other cost allocations from \emph{samples}, without oracle access to $C$, and with a number of samples that is polynomial in $n$.

\begin{definition}\label{def:scs}
Consider a cooperative game with players $N$ and cost function $C$. In the \scs\ problem we are given pairs $(S_1, C(S_1)), (S_2, C(S_2)), \ldots, (S_m, C(S_m))$ where each $S_i$ is drawn i.i.d. from a distribution $\CD$ over $2^N$. The goal is to find a cost allocation $\psi \in \R^n$.  
\end{definition}

In what follows we will often refer to an individual $(S, C(S))$ pair as a {\em sample}. It is tempting to reduce \scs\ to classical cost sharing by simply collecting enough samples to use known algorithms. For example, \citet{liben2012computing} showed how to approximate the Shapley value with polynomially many queries $C(S)$. However, if the distribution $\CD$ is not aligned with these specific queries, which is the case for the uniform distribution, emulating these algorithms in our setting requires exponentially many samples.  \citet{BPZ-15} showed how to instead first learn an approximation to $C$ from the given samples and then compute cost shares for the learned function, but their results hold only for a limited number of games and cost functions $C$.  We show that a more powerful approach is to compute cost shares directly from the data, without explicitly learning the cost function first. 


\subsection{Warm up: linear functions and product distributions}

As a simple example, we consider the special case of additive functions with $C(i) \geq 1 / \poly(n)$ and on bounded product distributions $\CD$.\footnote{A bounded product distribution has marginal probabilities bounded below and above by $1/\poly(n)$ and $1 - 1/\poly(n)$.} In this setting, the core property and the Shapley value can be  approximated arbitrarily well. It is easy to verify that the cost allocation $\psi$ such that $\psi_i = C(i)$ is in the core and that it is the Shapley value.

To compute these cost shares, we 
estimate the expected marginal contribution of an element $i$ to a random set, i.e.,  $v_i^{\CD} := \E_{S \sim \CD|i \not \in S}[C(S \cup \{i\}) - C(S)]$. 
Note that in the case of additive functions, $v_i^{\CD} = C(i)$. In addition,
$$  v_i^{\CD} = \EU[S \sim \CD|i \not \in S][C(S \cup \{i\}) - C(S)] = \EU[S \sim \CD|i  \in S][C(S)] - \EU[S \sim \CD|i \not \in S][C(S)],$$
when $\CD$ is a product distribution. Thus, this value can be estimated arbitrarily well by looking at the difference in cost between the average value of the samples containing $i$ and the average of those not containing $i$. The analysis is a simple concentration bound and is deferred to the appendix.

 \begin{restatable}{rLem}{lconcentration}
 \label{l:addconcentration}
 Let $C$ be an additive function with $C(i) \geq 1 / \poly(n)$ for all $i$. Then, given $\poly(n, \nicefrac{1}{\delta}, \nicefrac{1}{\epsilon})$ samples, we can compute an estimate  $\tilde{v}_i$ such that with probability $(1 - \delta)$: 
   $$ (1 - \epsilon) C(i) < \tilde{v}_i < (1+\epsilon) C(i).$$
\end{restatable}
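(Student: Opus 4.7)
The plan is to exploit the identity for $v_i^\CD$ stated just before the lemma. Under a product distribution, conditioning on $i \in S$ or $i \notin S$ leaves the marginals of the other coordinates unchanged, so for additive $C$ one has
\[
  \mu_i^+ := \EU[S \sim \CD \mid i \in S][C(S)] = C(i) + \sum_{j \neq i} C(j)\, p_j, \qquad
  \mu_i^- := \EU[S \sim \CD \mid i \notin S][C(S)] = \sum_{j \neq i} C(j)\, p_j,
\]
where $p_j = \Pr_{S \sim \CD}[j \in S]$. Hence $\mu_i^+ - \mu_i^- = C(i)$, so it suffices to estimate $\mu_i^+$ and $\mu_i^-$ separately by the empirical averages of $C(S)$ over the samples that do, respectively do not, contain $i$, and to output their difference $\tilde v_i := \hat\mu_i^+ - \hat\mu_i^-$.

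The execution proceeds in three routine steps. First, using that $\CD$ is a bounded product distribution, $p_i$ and $1 - p_i$ are both at least $1/\poly(n)$; a Chernoff bound then shows that with $m = \poly(n, 1/\epsilon, 1/\delta)$ samples the number of samples containing $i$, and the number not containing $i$, are each at least $m \cdot p_i / 2 \geq m / \poly(n)$ with probability at least $1 - \delta/3$. Second, conditioned on which samples contain $i$, those samples are i.i.d.\ draws from $\CD \mid i \in S$ (and likewise for $i \notin S$), so we can apply Hoeffding's inequality. The costs are bounded by $C(N) \leq \poly(n)$ by assumption, so to obtain additive error at most $\epsilon\, C(i)/2 \geq \epsilon / \poly(n)$ on each empirical mean with probability $1 - \delta/3$, it suffices to have $\poly(n, 1/\epsilon, \log(1/\delta))$ samples in each of the two buckets. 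Combining gives $|\tilde v_i - C(i)| \leq \epsilon\, C(i)$ with probability at least $1 - \delta$ for the fixed element $i$; a final union bound over $i \in N$ absorbs an extra factor of $n$ into $\poly(n)$.

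The mildly subtle point, and the only place the hypotheses are really used, is the conversion from additive to multiplicative error: since we can only get additive concentration from Hoeffding but want a multiplicative $(1 \pm \epsilon)$ statement, we need $C(i)$ to be at least inverse-polynomial in $n$ (given) and the conditional sample sizes to be at least inverse-polynomial fractions of $m$ (given by the bounded-product assumption). Everything else is bookkeeping, so the ``hard'' part is really just verifying that all the $\poly(n)$ factors — from $\|C\|_\infty$, from $1/C(i)$, and from $1/p_i$ — compose to keep the final sample complexity polynomial, which they do.
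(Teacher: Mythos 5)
Your proof is correct and follows essentially the same route as the paper's: the paper proves the more general Lemma~\ref{l:concentration} (estimating the expected marginal contribution $v_i$ for any function over a bounded product distribution) via the same identity $\E_{S\mid i\in S}[C(S)] - \E_{S\mid i\notin S}[C(S)] = v_i$, the same Chernoff bound on the bucket sizes, and the same Hoeffding bound with additive error $|v_i|\epsilon/2$ converted to a multiplicative guarantee, then specializes to additive $C$ where $v_i = C(i)$. Your writing out of $\mu_i^\pm$ explicitly in terms of $\sum_{j\neq i} C(j)p_j$ is just a more concrete instantiation of that identity for the additive case; there is no substantive difference.
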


Thus an algorithm which computes $\psi_i =\tilde{v}_i$  approximates the Shapley value and the core property arbitrarily well (the formal definitions of approximating the core and the Shapley value are deferred to the respective sections devoted to those concepts).
%

\section{Approximating the Core from Samples}
\label{s:corefromsamples}

In this section,  we consider the problem of finding cost allocations from samples that satisfy relaxations of the core. A natural approach to this problem is to first learn the underlying model, $C$, from the data and to then compute a cost allocation for the learned function. As shown in \cite{BPZ-15}, this approach works if $C$ is PAC-learnable, but there exist functions $C$ that are not PAC-learnable and for which a cost allocation that approximately satisfies the core can still be computed. The main result of this section shows that a cost allocation that  approximates the core property can be computed for \emph{any} function with a non-empty core. Moreover, we show that the number of samples from $\CD$ needed to accurately learn the core is low. 

The approach is to directly compute a cost allocation, which empirically satisfies the core property, i.e., it satisfies the core property on all of the samples. We then argue that the same cost shares will likely satisfy the core property on newly drawn samples as well. This  difference between the empirical performance of a function and its expected performance is known as the generalization error of a function and its analysis is central to theoretical machine learning. Intuitively, the generalization error is small when the number of samples, $m$, used to train the function is large, and the function itself is relatively simple. Two of the most common tools that formally capture these notions are the VC-dimension and the $m$-sample Rademacher complexity of a function class. We will use both of these to highlight different trade-offs in computing statistical cost shares. 

We begin by defining three notions of approximate core: the probably stable \cite{BPZ-15}, approximately stable, and probably approximately stable cores. 

\begin{definition} Given $\delta, \epsilon > 0$, a cost allocation $\mathbf{\psi}$ such that $\sum_{i \in N} \psi_i = C(N)$ is in
\begin{itemize}
\item the \textbf{probably stable} core \cite{BPZ-15} if, for all $\CD$, $$\Pr_{S \sim \CD} \left[ \sum_{i \in S} \psi_i \leq C(S) \right] \geq 1 - \delta,$$
\item the \textbf{approximately stable} core over $\CD$ if for all $S \subseteq N$, $$(1- \epsilon) \cdot \sum_{i \in S} \psi_i \leq C(S), $$ 
\item the \textbf{probably  approximately stable} core if, for all $\CD$, $$\Pr_{S \sim \CD} \left[ (1- \epsilon)\cdot \sum_{i \in S} \psi_i \leq C(S) \right] \geq 1 - \delta.$$
\end{itemize} 
\end{definition}

The algorithms we consider compute cost shares in polynomial time. The hardness results are information theoretic and are not due to running time limitations.

\begin{definition}
Cost shares $\mathbf{\psi}$ are  \emph{computable} for the class of functions $\CC$ over distribution $\CD$, if for all $C \in \mathcal{C}$ and any $\Delta, \delta, \epsilon > 0$, given $C(N)$ and $m = \poly(n, \nicefrac{1}{\Delta}, \nicefrac{1}{\delta}, \nicefrac{1}{\epsilon})$ samples $(S_j, C(S_j))$ with each $S_j$ drawn i.i.d. from distribution $\mathcal{D}$, there exists an algorithm that computes $\mathbf{\psi}$ with probability at least $1-\Delta$ over both the samples and the choices of the algorithm. If the algorithm has $\poly(m)$ running time, then the cost shares $\mathbf{\psi}$ are  \emph{efficiently computable}.
\end{definition}

Finally, we will refer to the number of samples $m$ required to compute approximate cores as the {\em sample complexity} of the algorithm. 

\paragraph{Our Results.} We give algorithms that efficiently compute cost shares in the probably stable core for  functions with a non-empty (traditional) core with  a simple approach using the VC-dimension (Section~\ref{s:probstable}), the algorithm has sample complexity linear in $n$. With a more complex analysis and using the Rademacher complexity, we obtain efficiently computable cost shares in the probably approximately stable core with an improved sample complexity dependence of $\log n$ but with an additional dependence on the spread of the function $C$ (Section~\ref{s:probapxstable}). Finally, we show that  cost shares in the approximately stable core are not computable even for the uniform distribution and the well-behaved class of monotone submodular functions (Section~\ref{s:apxstable}).

\subsection{Cost shares in the probably stable core are efficiently computable}
\label{s:probstable}
\citet{BPZ-15} showed that several families of functions $\CD$ have a core that is probably stable. These families include network flow, threshold task, and induced subgraph games which are all well-known classes of cooperative games; and the class of monotone simple games which are games that take values in $\{0,1\}$. We generalize their result so that it is not constrained to specific classes of functions, and show how to compute a probably stable core for any game with a non-empty core. 

Technically, we use the VC-dimension of the class of halfspaces to show that the performance on the samples generalizes well to the performance on the distribution $\CD$. We review the definition of the VC-dimension in Appendix~\ref{s:appcore} and only state VC-dimension results needed for our purposes. We first state the generalization error obtained for a class of functions with VC-dimension $d$.
 
\begin{theorem}[\cite{ML}, Theorem 6.8]
\label{t:vcmain}
Let $\mathcal{H}$ be a hypothesis class of functions from a domain $\mathcal{X}$ to $\{-1, 1\}$ and $f : \mathcal{X} \mapsto \{-1, 1\}$ be some ``correct" function. Assume that $\CH$ has VC-dimension $d$. Then, there is an absolute constant $c$ such that with  $m \geq c (d  + \log(1/ \Delta))/\delta^2$ i.i.d. samples $\bx^1, \ldots, \bx^m \sim \CD$, 
$$\left|\Pr_{\bx \sim \CD}\left[h(\bx) \neq f(\bx)\right] - \frac{1}{m}\sum_{i=1}^m  \mathds{1}_{h(\bx^i) \neq f(\bx^i)}\right| \leq \delta$$
for all $h \in \mathcal{H}$, with probability $1- \Delta$ over the samples.
\end{theorem}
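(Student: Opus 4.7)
The plan is to prove this standard VC uniform convergence bound via the classical symmetrization argument combined with Sauer's lemma. Define the empirical error $L_S(h) = \frac{1}{m}\sum_{i=1}^m \mathds{1}_{h(\bx^i) \neq f(\bx^i)}$ and the true error $L(h) = \Pr_{\bx \sim \CD}[h(\bx) \neq f(\bx)]$. The goal is to bound $\Pr_{S \sim \CD^m}\bigl[\sup_{h \in \CH} |L(h) - L_S(h)| > \delta\bigr]$ by $\Delta$ whenever $m \geq c(d + \log(1/\Delta))/\delta^2$. I would work with the loss class $\CG = \{\bx \mapsto \mathds{1}_{h(\bx) \neq f(\bx)} : h \in \CH\}$, which has VC-dimension at most $d$ since it is obtained by an XOR with a fixed function.

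First I would carry out the symmetrization step: draw a ghost sample $S' = (\bx'^1,\ldots,\bx'^m) \sim \CD^m$ independently, and show that for $m\delta^2 \geq 2$,
\[
\Pr_S\bigl[\sup_{g \in \CG} |L(g) - L_S(g)| > \delta\bigr] \leq 2\,\Pr_{S,S'}\bigl[\sup_{g \in \CG} |L_{S'}(g) - L_S(g)| > \delta/2\bigr].
\]
This is the standard trick: when some $g$ deviates a lot from its true mean on $S$, by Chebyshev the empirical mean on an independent second sample $S'$ is close to the true mean with constant probability, so the two empirical means on $S$ and $S'$ also differ a lot.

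Next I would introduce uniform random signs $\sigma_i \in \{-1,+1\}$ (Rademacher variables). Since the pair $(\bx^i,\bx'^i)$ has an exchangeable distribution, swapping them coordinate-wise doesn't change the joint law, so the RHS above equals
\[
2\,\Pr_{S,S',\sigma}\Bigl[\sup_{g \in \CG} \Bigl|\tfrac{1}{m}\sum_{i=1}^m \sigma_i \bigl(g(\bx'^i) - g(\bx^i)\bigr)\Bigr| > \delta/2\Bigr].
\]
Now condition on the combined sample $S \cup S'$ of size $2m$. The crucial observation is that although $\CG$ may be infinite, its restriction to these $2m$ points realizes at most $\tau_\CG(2m)$ distinct labelings, and by Sauer's lemma $\tau_\CG(2m) \leq (2em/d)^d$. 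Apply a union bound over these effective hypotheses, and for each fixed $g$ use Hoeffding's inequality on the Rademacher sum (which has bounded increments in $[-1,1]$) to bound the inner probability by $2\exp(-m\delta^2/8)$.

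Putting the pieces together gives an overall bound of roughly $4(2em/d)^d \exp(-m\delta^2/8)$, which is at most $\Delta$ whenever $m \geq c(d\log(m/d) + \log(1/\Delta))/\delta^2$; a short algebraic manipulation (or the usual trick of absorbing the $\log m$ into a slightly larger constant) yields the cleaner form $m \geq c(d + \log(1/\Delta))/\delta^2$ claimed in the statement. The main obstacle, and the step that requires the most care, is the symmetrization inequality: it is the step where the unbounded supremum over $\CH$ is converted into something finite that Sauer's lemma can control, and getting the constants and the conditioning right is the delicate part. Everything else reduces to Hoeffding plus a union bound.
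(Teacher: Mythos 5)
This statement is an imported textbook result (Theorem~6.8 of \cite{ML}); the paper offers no proof of its own, so there is nothing internal to compare against --- you are reproving a black-box citation. Your route (symmetrization with a ghost sample, Rademacher signs, Sauer's lemma, Hoeffding plus a union bound over the at most $(2em/d)^d$ realized labelings) is the classical argument and each individual step is sound, including the observation that the loss class $\{\bx \mapsto \mathds{1}_{h(\bx)\neq f(\bx)}\}$ has VC-dimension at most $d$.

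The genuine gap is the last step. The bound $4(2em/d)^d \exp(-m\delta^2/8) \leq \Delta$ requires $m \gtrsim \bigl(d\log(m/d) + \log(1/\Delta)\bigr)/\delta^2$, and solving this self-referential inequality gives $m = O\bigl((d\log(1/\delta) + \log(1/\Delta))/\delta^2\bigr)$: when $m \approx c\,d/\delta^2$ the term $d\log(m/d) \approx 2d\log(1/\delta)$ genuinely dominates $d$, so the extra $\log(1/\delta)$ factor cannot be ``absorbed into a slightly larger constant'' --- it diverges as $\delta \to 0$. The log-free sample complexity $c(d+\log(1/\Delta))/\delta^2$ actually stated in the theorem is strictly stronger and requires a more delicate argument: one bounds the Rademacher complexity of a VC class by $O(\sqrt{d/m})$ with no logarithmic factor, via Dudley's chaining together with Haussler's packing bound (this is how \cite{ML} proves Theorem~6.8, in its Chapter~28, after proving only the weaker logarithmic version by your argument in Chapter~6). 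For the way Theorem~\ref{t:vcmain} is used in this paper (Theorem~\ref{t:vccore}), the weaker bound you actually establish would still yield a polynomial sample complexity, just with an extra $\log(1/\delta)$ factor; but as a proof of the stated theorem, the final step does not go through.
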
 
 
We use a special case of the class of halfspaces, for which we know the VC-dimension. 
 
\begin{theorem}[\cite{ML}, Theorem 9.2]
\label{t:vchalfspace}
The class of functions $\{\bx \mapsto \sign(\bw^\intercal \bx) : \bw \in \mathbb{R}^n\}$ has VC-dimension $n$. 
\end{theorem}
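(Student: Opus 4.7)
The plan is to establish that the VC-dimension equals $n$ by proving matching lower and upper bounds on the size of shatterable sets.

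For the lower bound, I would exhibit a shattered set of size $n$: the standard basis vectors $e_1, \ldots, e_n \in \mathbb{R}^n$. For any target labeling $y \in \{-1,+1\}^n$, taking $\bw = (y_1, \ldots, y_n)$ gives $\sign(\bw^\intercal e_i) = \sign(y_i) = y_i$, so every dichotomy is realized. Hence the VC-dimension is at least $n$.

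For the upper bound, I would show that no set of $n+1$ points can be shattered. The key observation is linear-algebraic: any $n+1$ vectors $\bx^1, \ldots, \bx^{n+1}$ in $\mathbb{R}^n$ are linearly dependent, so there exist coefficients $a_1, \ldots, a_{n+1}$, not all zero, with $\sum_i a_i \bx^i = 0$. Partition the indices by the sign of $a_i$ into $I^+ = \{i : a_i > 0\}$ and $I^- = \{i : a_i < 0\}$, and consider the dichotomy that labels $I^+$ with $+1$ and $I^-$ with $-1$ (remaining indices arbitrarily). If some $\bw$ realized this labeling, then
\[
0 \;=\; \bw^\intercal \sum_i a_i \bx^i \;=\; \sum_{i \in I^+} a_i(\bw^\intercal \bx^i) \;+\; \sum_{i \in I^-} a_i(\bw^\intercal \bx^i),
\]
and every summand on the right is strictly positive by the sign choices (for $i \in I^+$, both $a_i$ and $\bw^\intercal \bx^i$ are positive; for $i \in I^-$, both are negative), forcing the sum to be strictly positive---a contradiction.

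The one subtlety is the degenerate case in which one of $I^+$ or $I^-$ is empty, i.e., all nonzero $a_i$ share a common sign. Then the dichotomy above has no indices of one label, so I would instead use the labeling that assigns to each index the sign of its $a_i$ (with arbitrary labels where $a_i = 0$); the realizing $\bw$ again makes $\sum_i a_i(\bw^\intercal \bx^i)$ a sum of nonnegative terms with at least one strictly positive, contradicting the zero sum. I expect this small case split to be the only subtle point; the heart of the argument is a one-line consequence of linear dependence in $\mathbb{R}^n$, essentially a homogeneous version of Radon's lemma tailored to halfspaces through the origin.
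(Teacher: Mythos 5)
The paper does not actually prove this statement---it is imported verbatim from the textbook reference (\cite{ML}, Theorem 9.2)---so there is no internal proof to compare against. Your argument is correct and is precisely the standard one given in that reference: the standard basis realizes every dichotomy for the lower bound, and for the upper bound linear dependence of any $n+1$ points plus the sign-splitting of the coefficients (a homogeneous Radon-type argument) yields the contradiction, with the only point to pin down being the convention for $\sign(0)$ when converting the labels $\pm 1$ into strict versus non-strict inequalities on $\bw^\intercal \bx^i$ (choosing the labeling so that the strictly-signed side is nonempty, as you do in your case split, resolves this).
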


We first define a class of functions that contains the core, and prove that it has low VC-dimension.  Given a sample $S$, define $\bx^S$ such that $x^S_i = \mathds{1}_{i \in S}$ for $i \in [n]$ and $x^S_{n+1} = C(S)$. Note that if the core property is satisfied for sample $S$, then $\sign\left(\sum_{i = 1}^n \psi_i x^S_i - x^S_{n+1}\right) = -1$ . We  now bound the VC-dimension of this hypothesis class of functions induced by cost allocations $\psi$.
\begin{corollary}
The class of functions $\mathcal{H}^{core} = \{\bx \mapsto sign(\sum_{i = 1}^n \psi_i x_i - x_{n+1}) \given \psi \in \mathbb{R}^n, \sum_i \psi_i = C(N)\}$ has VC-dimension at most $n+1$. 
\end{corollary}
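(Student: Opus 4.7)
The plan is to exhibit $\mathcal{H}^{core}$ as a subclass of a hypothesis class whose VC-dimension is already known, and then invoke monotonicity of VC-dimension under inclusion. Specifically, every hypothesis in $\mathcal{H}^{core}$ has the form $\bx \mapsto \sign(\sum_{i=1}^n \psi_i x_i - x_{n+1})$, which can be rewritten as $\bx \mapsto \sign(\bw^\intercal \bx)$ on inputs $\bx \in \mathbb{R}^{n+1}$ by taking the weight vector $\bw = (\psi_1, \ldots, \psi_n, -1) \in \mathbb{R}^{n+1}$.

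From here I would define the enveloping class $\mathcal{H}' = \{\bx \mapsto \sign(\bw^\intercal \bx) \given \bw \in \mathbb{R}^{n+1}\}$, which is exactly the hypothesis class of homogeneous halfspaces in $\mathbb{R}^{n+1}$. By Theorem~\ref{t:vchalfspace} applied in dimension $n+1$, we have $\mathrm{VCdim}(\mathcal{H}') = n+1$. The observation above shows $\mathcal{H}^{core} \subseteq \mathcal{H}'$ (the constraints $\sum_i \psi_i = C(N)$ and the fixed last coordinate $-1$ only shrink the set of admissible weight vectors). Since any set shattered by $\mathcal{H}^{core}$ is a fortiori shattered by $\mathcal{H}'$, monotonicity gives $\mathrm{VCdim}(\mathcal{H}^{core}) \leq \mathrm{VCdim}(\mathcal{H}') = n+1$, which is exactly the desired bound.

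There is essentially no obstacle: the argument is a reduction to the known VC-dimension of linear halfspaces through the origin. The only small thing to verify is that the rewriting $\sign(\sum_i \psi_i x_i - x_{n+1}) = \sign(\bw^\intercal \bx)$ with $\bw$ as above is valid on the augmented feature vectors $\bx^S = (\mathds{1}_{1 \in S}, \ldots, \mathds{1}_{n \in S}, C(S))$, which is immediate from the definition. The balance constraint $\sum_i \psi_i = C(N)$ plays no role in the upper bound (it only further restricts $\mathcal{H}^{core}$), so we do not need to exploit it; this is why the bound $n+1$ is stated without any improvement coming from balance.
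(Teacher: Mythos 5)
Your proof is correct and follows essentially the same route as the paper: embed $\mathcal{H}^{core}$ into the class of homogeneous halfspaces in $\mathbb{R}^{n+1}$ via the weight vector $(\psi_1,\ldots,\psi_n,-1)$, apply the known VC-dimension bound for that class, and conclude by monotonicity of VC-dimension under inclusion. Your writeup is if anything slightly more explicit about the embedding than the paper's.
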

\begin{proof}
We combine the observation that $\{\bx \mapsto \sign(\sum_{i = 1}^n w_i x_i - x_{n+1}) \given \bw \in \mathbb{R}^n, \sum_i w_i = C(N)\} \subseteq \{\bx \mapsto \sign(\bw^\intercal \bx) \given \bw \in \mathbb{R}^{n+1}\}$ with the well-known fact that the VC-dimension of $\mathcal{H'}$ is at most the VC-dimension of $\CH$ for  $\mathcal{H'} \subseteq \mathcal{H}$. 
\end{proof}

It remains to show how to optimize over functions in this class. 

\begin{theorem}
\label{t:vccore}
 The class of functions with a non-empty core has cost shares in the probably stable core that are efficiently computable.
The sample complexity  is $$O\left(\frac{n  + \log(1/ \Delta)}{\delta^2}\right).$$
\end{theorem}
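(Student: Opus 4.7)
My plan is to reduce the theorem to an application of the VC-dimension generalization bound stated in Theorem~\ref{t:vcmain}, using the corollary that bounds the VC-dimension of $\CH^{core}$ by $n+1$.

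The algorithm I would propose is the natural empirical risk minimization: given samples $(S_1, C(S_1)), \ldots, (S_m, C(S_m))$, solve the linear program of finding $\psi \in \R^n$ such that $\sum_{i \in N} \psi_i = C(N)$ and $\sum_{i \in S_j} \psi_i \leq C(S_j)$ for every sampled $S_j$. This LP has $n$ variables and $m+1$ linear constraints, so it is solvable in time $\poly(n, m)$, giving the efficiency claim.

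Next I would argue feasibility: since $C$ has a non-empty core by assumption, any cost allocation $\psi^\star$ in the true core satisfies the balance constraint and $\sum_{i \in S} \psi^\star_i \leq C(S)$ for all $S \subseteq N$, including the sampled sets. Hence the LP is feasible and returns some $\psi$ with zero empirical violation, i.e., $\frac{1}{m}\sum_{j=1}^m \mathds{1}\bigl[\sum_{i \in S_j}\psi_i > C(S_j)\bigr] = 0$.

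To translate empirical success into a generalization guarantee, I would instantiate Theorem~\ref{t:vcmain} with $\CH = \CH^{core}$, with the reference labels $f(\bx) \equiv -1$, and with samples $\bx^{S_j}$ constructed as in the paragraph preceding the corollary. Under this identification, $h_\psi(\bx^S) \neq f(\bx^S)$ is exactly the event that $\psi$ violates the core property on $S$. Because the corollary gives $\text{VC}(\CH^{core}) \leq n+1$, Theorem~\ref{t:vcmain} guarantees that $m = O\bigl((n + \log(1/\Delta))/\delta^2\bigr)$ samples suffice so that with probability at least $1-\Delta$, for \emph{every} $\psi \in \CH^{core}$ the empirical and true violation probabilities differ by at most $\delta$. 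Applied to the output of our LP, whose empirical violation rate is $0$, this yields $\Pr_{S \sim \CD}[\sum_{i \in S}\psi_i > C(S)] \leq \delta$, which is exactly the probably stable property.

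I do not expect any serious obstacle; the only subtlety worth verifying carefully is the encoding of the core-violation event as a binary classification error in a halfspace hypothesis class, so that the VC-dimension bound of the corollary applies verbatim and the ``correct" labeling is indeed independent of $\psi$. Beyond that, the argument is a clean ``feasibility $+$ uniform convergence" template.
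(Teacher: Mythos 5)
Your proposal is correct and follows essentially the same route as the paper: solve the feasibility LP enforcing balance and the core property on every sample (feasible because the true core is non-empty), observe the empirical violation rate is zero, and invoke Theorem~\ref{t:vcmain} with the halfspace encoding $h_\psi(\bx^S)=\sign\bigl(\sum_i \psi_i x^S_i - x^S_{n+1}\bigr)$ against the constant label $f\equiv -1$ and the VC bound of $n+1$. No gaps.
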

\begin{proof}
Let $\psi$ be a cost allocation which satisfies both the core property on all the samples and the balance property, i.e., $\sum_{i \in S} \psi_i \leq C(S)$ for all samples $S$ and $\sum_{i \in N} \psi_i = C(N)$. Note that such a cost allocation exists since we assume that $C$ has a non-empty core. Given the set of samples, it can be computed with a simple linear program. We argue that  $\psi$ is probably stable.

Define  $h(x) = \sign\left(\sum_{i = 1}^n \psi_i x^S_i - x^S_{n+1}\right)$  and $f(\bx) = - 1$ for all $\bx$. Since the core property is satisfied on all the samples, $\frac{1}{m}\sum_{i=1}^m  \mathds{1}_{h(x) \neq f(x)} = 0$. Thus, by Theorem~\ref{t:vcmain},
\begin{align*}
\Pr_{S \sim \CD}\left[ \sum_{i \in S} \psi_i \leq C(S) \right] & = 1 - \Pr_{\bx^S : S \sim \CD}\left[\sign\left(\sum_{i = 1}^n \psi_i x^S_i - x^S_{n+1}\right) \neq -1\right] \\
& = 1 - \Pr_{\bx^S : S \sim \CD}\left[h\left(\bx^S\right) \neq f\left(\bx^S\right)\right] \\
 & = 1 - \left|\Pr_{\bx^S : S \sim \CD}\left[h\left(\bx^S\right) \neq f\left(\bx^S\right)\right] - \frac{1}{m}\sum_{i=1}^m  \mathds{1}_{h\left(\bx^S\right) \neq f\left(\bx^S\right)}\right| \\
& \geq 1 - \delta
\end{align*}
with $O((n  + \log(1/ \Delta))/\delta^2)$ samples.
\end{proof}

\subsection{Logarithmic sample complexity for probably approximately stable cores}
\label{s:probapxstable}

We show that the sample complexity dependence on $n$  can be improved from linear to logarithmic.  However, this improvement comes at a cost. We now find a probably approximately stable core instead of probably stable core, and the sample complexity depends on the {\em spread} of the function $C$, defined as $\frac{\max_S C(S)}{\min_{S \not= \emptyset} C(S)}$. This approach assumes that $\min_{S \not= \emptyset} C(S) > 0$. 
We start with an overview.




\begin{enumerate}
\item As previously, we find a cost allocation which satisfies the core property on all samples. However, we restrict this search to cost allocations with bounded $\ell_1$-norm. Such a cost allocation can be found efficiently since the space of such cost allocations is convex.
\item The analysis begins by bounding the $\ell_1$-norm of any vector in the core (Lemma~\ref{l:boundedcore}). Combined with the assumption that the core is non-empty, this implies that a cost allocation $\psi$ satisfying the previous conditions exists.
\item Let $[x]_+$ denote the function $x \mapsto \max(x, 0)$. Consider the following ``loss" function:
$$ \left[\frac{\sum_{i \in S} \psi_i}{C(S)} - 1\right]_+$$
This loss function is convenient  since it is equal to $0$ if and only if  the core property is satisfied for $S$ and it is $1$-Lipschitz, which is used in the next step.

  \item  Next, we bound the difference between the empirical loss and the expected loss for all $\psi$ with a known result using the Rademacher complexity of linear predictors with low $\ell_1$ norm over $\rho$-Lipschitz loss functions (Theorem~\ref{thm:rad_main}).
  
  \item Finally, given $\psi$ which approximately satisfies the core property in expectation, we show that $\psi$ is in the probably approximately stable core  by Markov's inequality (Lemma~\ref{lem:markov}). 
  \end{enumerate}

We review the definition of the Rademacher complexity in Appendix~\ref{s:appcore}. For our purposes, the following result which follows from the Rademacher complexity of linear classes is sufficient.

\begin{theorem}[\cite{ML}, Theorem 26.15]
\label{thm:rad_main}
Suppose that $\CD$ is a distribution over $\CX \times \R$ such that with probability $1$ we have that $\|\bx\|_{\infty} \leq R.$ Let $\CH = \{\bw \in \R^d : \|\bw\|_1 \leq B\}$ and let $\ell : \CH \times (\CX \times \R) \rightarrow \R$ be a loss function of the form 
$$\ell(\bw, (\bx, y)) = \phi(\bw^{\intercal} \bx, y)$$
such that for all $y \in \R$, $a \mapsto \phi(a, y)$ is an $\rho$-Lipschitz function and such that $\max_{a \in [-BR, BR]}|\phi(a,y)| \leq c$. Then, for all $\bw \in \CH$ and any $\Delta \in (0,1)$, with probability of at least $1 - \Delta$ over  $m$  i.i.d. samples $(\bx_1, y_1), \ldots, (\bx_m, y_m)$ from $\CD$,
$$\EU[(\bx, y) \sim \CD][\ell(\bw, (\bx, y))] \leq   \frac1m \sum_{i=1}^m \ell(\bw, (\bx_i, y_i))+ 2\rho B R \sqrt{\frac{2 \log(2d)}{m} }  + c \sqrt{\frac{2 \log(2/\Delta)}{m}}.$$
\end{theorem}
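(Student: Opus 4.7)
The plan is to assemble three classical ingredients: a Rademacher-based uniform convergence bound from McDiarmid's inequality and symmetrization, Talagrand's contraction lemma to strip the Lipschitz loss, and a Massart-type maximal inequality to bound the Rademacher complexity of the $\ell_1$-ball.

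First, I would reduce the theorem to a bound on the expected Rademacher complexity of the loss class $\mathcal{L} = \{(\bx,y) \mapsto \ell(\bw,(\bx,y)) : \bw \in \CH\}$. Since $\|\bw\|_1 \le B$ and $\|\bx\|_\infty \le R$, H\"older gives $|\bw^\intercal \bx| \le BR$, so by hypothesis each loss value lies in $[-c,c]$. Replacing a single sample therefore changes the empirical average by at most $2c/m$, so McDiarmid's inequality yields, with probability at least $1-\Delta$,
\[ \sup_{\bw \in \CH} \Bigl( \EU[(\bx,y)\sim \CD][\ell(\bw,(\bx,y))] - \frac{1}{m}\sum_{i=1}^m \ell(\bw,(\bx_i,y_i)) \Bigr) \le G_m + c\sqrt{\frac{2\log(2/\Delta)}{m}}, \]
where $G_m$ is the expectation of that supremum. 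A standard ghost-sample symmetrization then bounds $G_m$ by $2 R_m(\mathcal{L})$, twice the expected Rademacher complexity of $\mathcal{L}$.

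Second, I would strip the loss. Talagrand's contraction lemma, applied coordinate by coordinate to the $\rho$-Lipschitz maps $a \mapsto \phi(a,y)$, yields $R_m(\mathcal{L}) \le \rho\, R_m(\mathcal{F})$, where $\mathcal{F} = \{\bx \mapsto \bw^\intercal \bx : \|\bw\|_1 \le B\}$. A one-line H\"older step then rewrites $R_m(\mathcal{F}) = \frac{B}{m}\,\EU[\boldsymbol{\sigma}]\bigl\| \sum_{i=1}^m \sigma_i \bx_i \bigr\|_\infty$. Each coordinate $\sum_i \sigma_i x_{i,j}$ is a sum of bounded independent symmetric variables, hence sub-Gaussian with parameter at most $\sum_i x_{i,j}^2 \le m R^2$; Massart's finite-class maximal inequality applied to the $2d$ variables $\pm \sum_i \sigma_i x_{i,j}$ gives $R_m(\mathcal{F}) \le BR\sqrt{2\log(2d)/m}$. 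Chaining the three estimates recovers the advertised $2\rho BR\sqrt{2\log(2d)/m} + c\sqrt{2\log(2/\Delta)/m}$ bound.

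The main obstacle is the contraction lemma: the hypothesis that $a \mapsto \phi(a,y)$ is Lipschitz uniformly in $y$ is exactly what is needed, but the inductive conditioning argument that strips $\phi$ off while losing only a factor of $\rho$ is the most delicate step---everything else is essentially bookkeeping once the symmetrization and sub-Gaussian maximal inequality are in hand.
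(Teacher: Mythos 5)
The paper does not prove this statement at all --- it is imported verbatim from the cited textbook (\cite{ML}, Theorem~26.15) --- so there is no in-paper proof to compare against. Your outline is correct and is exactly the standard argument given in that reference: McDiarmid plus ghost-sample symmetrization to reduce to the expected Rademacher complexity of the loss class (using $|\bw^\intercal\bx|\le BR$ and hence $|\ell|\le c$ for the bounded-differences step), the contraction lemma to remove the $\rho$-Lipschitz $\phi$ (which, as you note, may vary with $y_i$ across coordinates --- the lemma permits this), and H\"older duality followed by Massart's maximal inequality over the $2d$ sub-Gaussian coordinate sums to get $BR\sqrt{2\log(2d)/m}$; the factor $2$ in the final bound comes from symmetrization, consistent with your chaining.
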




We first bound the $\ell_1$ norm of vectors in the core to bound the space of linear functions that we search over.
\begin{lemma}
\label{l:boundedcore}
Assume that $\psi$ is a vector in the core, then $\|\psi\|_1 \leq 2 \max_S |C(S)|$. 
\end{lemma}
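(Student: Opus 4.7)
The plan is to split the coordinates of $\psi$ according to sign and apply the core inequality to the ``positive'' coalition, then use the balance equation to pin down the negative mass.

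First, I would write $P = \{i \in N : \psi_i \ge 0\}$ and $N' = \{i \in N : \psi_i < 0\}$, so that
\[
\|\psi\|_1 \;=\; \sum_{i \in P} \psi_i \;-\; \sum_{i \in N'} \psi_i.
\]
The key step is to invoke the core property on the specific coalition $S = P$, which yields
\[
\sum_{i \in P} \psi_i \;\le\; C(P) \;\le\; \max_{S} C(S).
\]
This is the main trick: feeding the positive-support set itself into the core inequality gives a one-sided bound on the positive mass directly in terms of $\max_S C(S)$.

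Next, I would use the balance condition $\sum_{i \in N} \psi_i = C(N)$ to eliminate the negative-part sum:
\[
-\sum_{i \in N'} \psi_i \;=\; \sum_{i \in P} \psi_i \;-\; C(N).
\]
Adding this to the bound on the positive part gives
\[
\|\psi\|_1 \;=\; 2 \sum_{i \in P} \psi_i \;-\; C(N) \;\le\; 2\,C(P) \;-\; C(N) \;\le\; 2 \max_{S} C(S),
\]
where the last inequality uses the paper's standing assumption $C(N) \ge 0$ (in fact $C(S) \ge 0$ for all $S$, which also makes $\max_S |C(S)| = \max_S C(S)$).

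I do not expect any real obstacle here; the only non-obvious move is recognizing that applying the core constraint to $S = P$ gives exactly the right bound on the positive mass, and then the balance equation trivially converts that into a bound on the negative mass. Everything else is arithmetic.
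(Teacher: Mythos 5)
Your proof is correct and follows essentially the same route as the paper's: both apply the core inequality to the set of nonnegatively-weighted players to bound the positive mass by $\max_S C(S)$, then use the balance condition together with $C(N)\ge 0$ to control the negative mass. The only difference is cosmetic bookkeeping in how the two bounds are combined.
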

\begin{proof}
Fix some vector $\psi$ in the core. Let $A$ be the set of elements $i$ such that $\psi_i \geq 0$ and $B$ be the remaining elements. First note that $$\sum_{i \in A} \psi_i \leq C(A) \leq \max_S |C(S)|$$ where the first inequality is by the core property. Next, note that $$0  \leq C(N) = \sum_{i \in A} \psi_i + \sum_{i \in B} \psi_i \leq \max_S |C(S)| + \sum_{i \in B} \psi_i$$ where the  equality is by the balance property, so $\sum_{i \in B} \psi_i \geq -\max_S |C(S)|.$ Thus, $$\|\psi\|_1 = \sum_{i \in A} \psi_i - \sum_{i \in B} \psi_i \leq \max_S |C(S)| + \max_S |C(S)|.$$
\end{proof}

We can thus focus on bounded cost allocations $\psi \in \CH$ where
$$\CH := \left\{\psi \given \psi \in \mathbb{R}^n, \|\psi\|_1 \leq 2 \max_S |C(S)|\right\}.$$



The next lemma shows that if the core property approximately holds in expectation, then it is likely to approximately hold.

\begin{lemma} \label{lem:markov} For any $0 < \epsilon, \delta < 1$ and cost allocation $\psi$,  $$\EU[S \sim \CD]\left[\left[\frac{\sum_{i \in S} \psi_i}{C(S)} - 1\right]_+\right] \le \frac{\epsilon\delta}{1 - \epsilon} \Rightarrow  \Pr_{S \sim \CD}\left[(1 - \epsilon) \sum_{i \in S} \psi_i \le C(S)\right] \ge 1 - \delta.$$\end{lemma}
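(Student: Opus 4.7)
The name of the lemma already gives away the whole game: this is a direct application of Markov's inequality to the nonnegative random variable
$$X := \left[\frac{\sum_{i \in S} \psi_i}{C(S)} - 1\right]_+.$$
The plan is simply to translate the event $(1-\epsilon)\sum_{i \in S}\psi_i \le C(S)$ into a tail event on $X$, then bound that tail by the hypothesized upper bound on $\E[X]$.

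First I would rewrite the conclusion. Since $C(S) > 0$ (assumed via the spread condition), the inequality $(1-\epsilon)\sum_{i \in S}\psi_i \le C(S)$ is equivalent to
$$\frac{\sum_{i \in S}\psi_i}{C(S)} \le \frac{1}{1-\epsilon} = 1 + \frac{\epsilon}{1-\epsilon},$$
i.e.\ to $\frac{\sum_{i \in S}\psi_i}{C(S)} - 1 \le \frac{\epsilon}{1-\epsilon}$. Because the right-hand side $\frac{\epsilon}{1-\epsilon}$ is positive, applying $[\cdot]_+$ to the left-hand side does not change the inequality; so the event we care about is exactly $\{X \le \frac{\epsilon}{1-\epsilon}\}$, and its complement is $\{X > \frac{\epsilon}{1-\epsilon}\}$.

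Next I would apply Markov's inequality to $X \ge 0$:
$$\Pr_{S \sim \CD}\left[X > \frac{\epsilon}{1-\epsilon}\right] \le \frac{\E_{S \sim \CD}[X]}{\epsilon/(1-\epsilon)} = \frac{1-\epsilon}{\epsilon}\,\E_{S \sim \CD}[X].$$
Plugging in the hypothesis $\E[X] \le \frac{\epsilon\delta}{1-\epsilon}$ gives
$$\Pr_{S \sim \CD}\left[X > \tfrac{\epsilon}{1-\epsilon}\right] \le \frac{1-\epsilon}{\epsilon}\cdot\frac{\epsilon\delta}{1-\epsilon} = \delta,$$
and taking the complementary event yields the claimed $\Pr_{S \sim \CD}\left[(1-\epsilon)\sum_{i \in S}\psi_i \le C(S)\right] \ge 1 - \delta$.

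There is essentially no obstacle here: the only tiny thing to be careful about is the equivalence of the two events (checking that $[\cdot]_+$ does not interfere because the threshold $\epsilon/(1-\epsilon)$ is positive), and ensuring $C(S)>0$ is handled by the spread assumption used throughout this subsection. No other structure of $\psi$, $C$, or $\CD$ is needed.
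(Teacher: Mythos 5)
Your proposal is correct and follows essentially the same route as the paper: apply Markov's inequality to the nonnegative variable $\left[\sum_{i \in S}\psi_i/C(S) - 1\right]_+$ at threshold $\epsilon/(1-\epsilon)$ and translate the tail event back into the core inequality. The only cosmetic difference is that you note the two events are exactly equivalent, whereas the paper only uses the one-directional implication it needs; both are fine.
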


\begin{proof} For any $a > 0$ and nonnegative random variable $X$, by Markov's inequality we have $\Pr[X \le a] \ge 1 - E[X]/ a$. By letting $a = \epsilon /(1 - \epsilon)$, $X = \left[(\sum_{i \in S} \psi_i)/ C(S) - 1\right]_+$, and observing that
$$
\left[\frac{\sum_{i \in S} \psi_i}{C(S)} - 1\right]_+ \le \frac{\epsilon}{1 - \epsilon} \Rightarrow \frac{\sum_{i \in S} \psi_i}{C(S)} - 1 \le \frac{\epsilon}{1 - \epsilon} \Rightarrow (1 - \epsilon) \sum_{i \in S} \psi_i \le C(S),
$$
we obtain $\Pr_{S \sim \CD}\left[(1 - \epsilon) \sum_{i \in S} \psi_i \le C(S)\right] \ge 1 - \delta$.
\end{proof}

Combining Theorem~\ref{thm:rad_main}, Lemma~\ref{l:boundedcore}, and Lemma~\ref{lem:markov}, the dependence of the sample complexity  on $n$ is improved. 
\begin{theorem} \label{thm:approx_core_learn} The class of functions with a non-empty core has cost shares in the probably approximately stable core that are efficiently computable with sample complexity
$$
\left(\frac{1 - \epsilon}{\epsilon\delta}\right)^2 \left(128 \tau(C)^2 \log(2 n) + 8 \tau(C)^2 \log (2/\Delta) \right) = O\left(\left(\frac{\tau(C)}{\epsilon \delta}\right)^2 \left(\log n +\log(1/\Delta)\right)\right).
$$
where $\tau(C) =  \frac{ \max_S C(S)}{\min_{S \neq \emptyset} C(S)}$ is the spread of $C$.
\end{theorem}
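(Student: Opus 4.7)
The proof plan is to execute the five-step roadmap laid out above, with the Rademacher bound from Theorem~\ref{thm:rad_main} doing the main work. The starting point is to set up a linear program that searches for $\psi \in \CH = \{\psi \in \R^n : \|\psi\|_1 \leq 2 \max_S |C(S)|\}$ subject to $\sum_{i \in N} \psi_i = C(N)$ and $\sum_{i \in S_j} \psi_i \leq C(S_j)$ for every sample $S_j$. This LP is feasible: by assumption the true core is non-empty, and Lemma~\ref{l:boundedcore} guarantees that any core vector already lies in $\CH$. Hence in polynomial time we obtain a $\psi$ whose empirical loss under $\ell(\psi, (S, C(S))) := [\sum_{i \in S}\psi_i / C(S) - 1]_+$ is exactly zero.

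Next I would apply Theorem~\ref{thm:rad_main} to transfer this zero empirical loss into a small expected loss. The right encoding is to let $\bx^S \in \R^n$ have entries $x^S_i = \mathds{1}[i \in S]/C(S)$, so that $\psi^\intercal \bx^S = \sum_{i \in S}\psi_i / C(S)$, and take $\phi(a, y) = [a-1]_+$, which is $1$-Lipschitz in $a$. The relevant parameters are then $R = \|\bx^S\|_\infty \leq 1/\min_{S \neq \emptyset} C(S)$, $B = 2\max_S C(S)$, $\rho = 1$, and $BR \leq 2\tau(C)$, so the bound on $|\phi(a,y)|$ over $a \in [-BR, BR]$ is of order $\tau(C)$. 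Plugging these into Theorem~\ref{thm:rad_main} yields, with probability at least $1-\Delta$,
\[
\EU[S \sim \CD]\!\left[\ell(\psi, (S, C(S)))\right] \;\leq\; 0 \;+\; 4\tau(C)\sqrt{\tfrac{2\log(2n)}{m}} \;+\; c\sqrt{\tfrac{2\log(2/\Delta)}{m}}.
\]
Forcing each of the two tail terms to be at most $\tfrac{1}{2} \cdot \tfrac{\epsilon \delta}{1-\epsilon}$ and solving for $m$ gives exactly the claimed sample complexity $((1-\epsilon)/(\epsilon\delta))^2(128\tau(C)^2\log(2n) + 8\tau(C)^2\log(2/\Delta))$.

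Finally, since we have shown $\E_{S \sim \CD}[\ell(\psi, (S, C(S)))] \leq \epsilon\delta/(1-\epsilon)$, Lemma~\ref{lem:markov} immediately converts this expectation bound into the desired tail bound $\Pr_{S \sim \CD}[(1-\epsilon)\sum_{i \in S}\psi_i \leq C(S)] \geq 1-\delta$, which is exactly membership in the probably approximately stable core.

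The main obstacle, and the only non-mechanical step, is the correct instantiation of Theorem~\ref{thm:rad_main}: one has to notice that dividing by $C(S)$ inside the loss both keeps the loss function $1$-Lipschitz and absorbs the $1/\min_S C(S)$ factor into $R$, so that the product $BR$ comes out as the spread $\tau(C)$ rather than growing with $\max_S C(S)$ alone. Once this choice of feature map is in place, the $\ell_1$-bounded linear-predictor Rademacher bound has exactly the right $\log n$ dependence and all the remaining steps are routine plug-in.
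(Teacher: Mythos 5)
Your proposal matches the paper's proof essentially step for step: the same feasibility argument via Lemma~\ref{l:boundedcore}, the same feature map $x^S_i = \mathds{1}[i \in S]/C(S)$ with hinge-type loss $[a-1]_+$, the same instantiation of Theorem~\ref{thm:rad_main} with $R = 1/\min_{S\neq\emptyset}C(S)$ and $B = 2\max_S C(S)$, and the same final conversion via Lemma~\ref{lem:markov}. The arithmetic splitting each tail term to half of $\epsilon\delta/(1-\epsilon)$ reproduces the stated constants, so the proof is correct and not meaningfully different from the paper's.
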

\begin{proof}Fix $C \in \CC$. Suppose we are given $m$ samples from $\CD$. 
We pick  $\psi^{\star} \in \CH$ such that  core property holds on all the samples and such that the balance property holds ($\sum_{i \in N} \psi_i = C(N)$). This cost allocation $\psi^{\star}$ can be found efficiently since the collection of such $\psi$ is a convex set. By the assumption that $C$ has at least one vector in the core and by Lemma~\ref{l:boundedcore}, such a $\psi^\star$ exists. Given $S \sim \CD$, define $\bx^{S}$ such that $x^S_i = \mathds{1}_{i \in S} / C(S)$. Fix $y = 1$.  Define the loss function $\ell$ as follows,
$$\ell\left(\psi, \left(\bx^S, y\right)\right) := \left[\psi^{\intercal} \bx^S - y\right]_+ = \left[\frac{\sum_{i \in S} \psi_i}{C(S)} - 1\right]_+$$

We wish to use Theorem~\ref{thm:rad_main} with $R = 1 / \min_{S \neq \emptyset} |C(S)|$, $B = 2 \max_S |C(S)|$, $\phi(a,y) = \left[a - y\right]_+$, $\rho = 1$, and $c = \tau(C)$. We verify that all the conditions hold. First note that without loss of generality, samples where $S = \emptyset$ can be ignored, so $\norm{\bx^S}_\infty \le 1 / \min_{S \neq \emptyset} |C(S)|$ for all $S$. Next, $\|\psi\|_1 \leq 2 \max_S |C(S)|$ for $\psi \in \CH$ by definition of $\CH$. The loss function $\ell$ is of the form $\ell(\psi, (\bx, y)) = \phi(\psi^{\intercal} \bx, y) = \left[\psi^{\intercal} \bx - y\right]_+$ such that
$a \mapsto \phi(a,y) =  \left[ a - y\right]_+ $ is an $1$-Lipschitz function and such that $\max_{a \in [-BR, BR]} |\phi(a,y) | \leq 2 \max_S |C(S)|/ \min_{S \neq \emptyset} |C(S)| = 2 \tau(C)$. In addition, note that 
$$\frac1m \sum_{i=1}^m \ell\left(\psi^{\star}, \left(\bx^{S_i}, 1\right)\right) = 0$$
since the core property holds on all the samples. Thus, by Theorem~\ref{thm:rad_main},

$$\EU[S \sim \CD]\left[\frac{\sum_{i \in S} \psi^{\star}_i}{C(S)} - 1\right]_+ = \EU[\bx^S : S \sim \CD]\left[\ell\left(\psi^{\star}, \left(\bx^S, 1\right)\right)\right] \leq  4 \tau(C) \sqrt{\frac{2 \log(2n)}{m} }  + \tau(C) \sqrt{\frac{2 \log(2/\Delta)}{m}}.$$


Choose any $\epsilon, \delta > 0$. If the number of samples $m$ is chosen as in the statement of the theorem, then the righthand side of the above inequality will be less than $\frac{\epsilon\delta}{1 - \epsilon}$. Thus by Lemma \ref{lem:markov},
$$
 \Pr_{S \sim \CD}\left[(1 - \epsilon) \sum_{i \in S} \psi_i \le C(S)\right] \ge 1 - \delta,
$$
 which completes the proof.\end{proof}

\subsection{The approximately stable core is not computable}
\label{s:apxstable}
Since we obtained a probably approximately stable core, a natural question is  if it is possible to compute cost allocations that are approximately stable over natural distributions. The answer is negative in general: even for the restricted class of monotone submodular functions, which always have a solution in the core, the core cannot be approximated from samples, even over the uniform distribution. 
 \begin{restatable}{rThm}{tcore}
\label{t:core}
Cost shares $\psi$ in the $(1/2 + \epsilon)$-approximately stable core, i.e., such that for all $S$,
$$\left(\frac{1}{2} + \epsilon\right) \cdot \sum_{i \in S} \psi_i \leq C(S), $$ 
 cannot be computed for monotone submodular functions over the uniform distribution, for any constant $\epsilon > 0$.
\end{restatable}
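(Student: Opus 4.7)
The plan is to rule out any algorithm by exhibiting a family of monotone submodular functions that (a) induce nearly identical sample distributions under the uniform measure, yet (b) have pairwise incompatible $(1/2+\epsilon)$-approximately stable cost shares. Concretely, for each $T \subseteq N$ with $|T| = n/2$, define
$$C_T(S) = \min(|S \cap T|, k), \qquad k = n/8.$$
This is the rank function of a uniform matroid of rank $k$ on the ground set $T$ (with elements of $T^c$ acting as loops), so it is monotone submodular, satisfies $C_T(\emptyset)=0$ and $C_T(N)=k$, and has $\psi_i = 2k/n$ on $T$ and $0$ elsewhere in its exact core; hence the $(1/2+\epsilon)$-stable core is non-empty and the problem is well-posed.

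The core inequality at the singleton $S = \{i\}$ with $i \notin T$ reads $(1/2+\epsilon)\psi_i \le C_T(\{i\}) = 0$, which forces $\psi_i \le 0$ throughout $T^c$. Now pick any $T_1 \subseteq N$ of size $n/2$ and set $T_2 = T_1^c$. A single $\psi$ that is $(1/2+\epsilon)$-stable for both $C_{T_1}$ and $C_{T_2}$ would have to satisfy $\psi_i \le 0$ on both $T_1^c = T_2$ and $T_2^c = T_1$, hence $\psi \le 0$ coordinatewise. But balance requires $\sum_i \psi_i = C_{T_j}(N) = k > 0$, a contradiction, so no cost allocation can be simultaneously valid for both functions.

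Next I would argue via a coupling that the algorithm cannot distinguish $C_{T_1}$ from $C_{T_2}$. Draw $S_1, \ldots, S_m$ i.i.d.\ uniform on $2^N$, independently of the hidden $T$. Each $|S_i \cap T|$ is $\text{Bin}(n/2, 1/2)$, so Chernoff gives $\Pr[|S_i \cap T| < n/8] \le e^{-\Omega(n)}$. A union bound over the $m = \poly(n)$ samples and over $T \in \{T_1, T_2\}$ shows that with probability $1 - e^{-\Omega(n)}$ every sample satisfies $C_{T_1}(S_i) = C_{T_2}(S_i) = k$. On this event the algorithm is handed the identical input $\{(S_i, k)\}_{i=1}^m$ in both worlds, and therefore produces the same (possibly randomized) output $\psi$.

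To finish, let the adversary choose $T^\star \in \{T_1, T_2\}$ uniformly at random. On the coupling event the output $\psi$ is valid for at most one of $C_{T_1}, C_{T_2}$ by the second paragraph, so the overall success probability is at most $1/2 + e^{-\Omega(n)}$. For $n$ large enough this is below $1 - \Delta$ for any $\Delta < 1/2$, contradicting computability. The main obstacle is calibrating the construction so that both conditions hold at once: $k$ must be small enough (relative to $|T|/2 = n/4$) that $|S \cap T| \ge k$ occurs on essentially every sample --- collapsing the observable function value to the constant $k$ and hiding $T$ --- while the rank-$k$ truncation must still be what enforces the "no positive share outside $T$" property in the core constraint, so that $T$ is essential to any approximately stable allocation.
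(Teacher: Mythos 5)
Your proof is correct and establishes the stated theorem, but it takes a genuinely different route from the paper's. The paper partitions $N$ into $\epsilon^{-1}$ blocks $A_1,\dots,A_{\epsilon^{-1}}$ of size $\epsilon n$ and uses $C^{A}(S)=|(N\setminus A)\cap S|+\min\bigl(|A\cap S|,(1+\epsilon)\epsilon n/2\bigr)$; after the same Chernoff-based indistinguishability step (all samples see $C(S)=|S|$), a pigeonhole argument shows some block must receive total share close to $\epsilon n$ while its cost is only about $\epsilon n/2$, yielding a violation by a factor approaching exactly $2$ -- which is what pins the constant at $1/2+\epsilon$. You instead use truncated matroid rank functions $C_T(S)=\min(|S\cap T|,k)$ on a hidden half-set $T$, and the contradiction comes from singletons outside $T$ having cost zero: the constraint $(1/2+\epsilon)\psi_i\le 0$ forces $\psi_i\le 0$ off $T$, so an allocation valid for both $C_{T_1}$ and $C_{T_1^c}$ is nonpositive everywhere and cannot be balanced. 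This is slick and, as literally stated, proves the theorem (and in fact rules out the $\alpha$-approximately stable core for \emph{every} constant $\alpha>0$, not just $\alpha=1/2+\epsilon$). The trade-off is that your stronger conclusion rests entirely on sets of cost zero, where a multiplicative relaxation has no slack by construction; the paper's functions satisfy $C(S)\ge 1$ for all nonempty $S$ (indeed they have spread $\tau(C)=O(1/\epsilon)$, the quantity governing the positive result of Theorem~\ref{thm:approx_core_learn}), so the paper's lower bound survives the normalization $\min_{S\ne\emptyset}C(S)>0$ that the rest of Section~\ref{s:probapxstable} assumes, and it isolates $1/2$ as the meaningful threshold for nondegenerate instances. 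In short: your argument is a valid, more elementary proof of the stated claim, while the paper's construction buys robustness to exactly the degeneracy you exploit.
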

\begin{proof}
The ground set of elements is partitioned in $\epsilon^{-1}$ sets $A_1, \ldots A_{\epsilon^{-1}}$  of size  $\epsilon n$  for some small constant $\epsilon > 0$. Let $\CC = \{C^{A_i} \given i \in [\epsilon^{-1}]\}$ where
$$C^A(S) = |(N \setminus A) \cap S| + \min \left(|A \cap S|, (1 + \epsilon) \frac{\epsilon n}{2}\right).$$
The expected number of elements of $A_i$ in a sample $S$ from the uniform distribution is $|A_i| / 2 = \epsilon n /2$, so by the Chernoff bound
$$\Pr \left[|A_i \cap S| \geq (1 + \epsilon)\frac{n\epsilon}{2}\right] \leq e^{-\frac{\epsilon^2 n }{6}},$$
Thus, by a union bound, $C^{A_i}(S) = |S|$ over all $i$ and all samples $S$ with probability $1 - O(e^{- n})$ and we henceforth assume this is the case. It is therefore impossible to learn any information about the partition  $A_1, \ldots A_{\epsilon^{-1}}$ from samples. Any cost allocation $\psi$ computed by an algorithm given samples from $C^{A_i}$ is  thus \emph{independent} of $i$. 

Next, consider such a cost allocation $\mathbf{\psi}$ independent of $i$ satisfying the balance property. There exists $A_i$ such that $\sum_{j \in A_i} \psi_j > (1 - \epsilon) n  \epsilon$ since $\sum_{j \in N} \psi_j = C(N) > (1 - \epsilon)n$ by the balance property. In addition, $C^{A_i}(A_i) = (1 + \epsilon)\epsilon n / 2$. We obtain
$$\sum_{j \in A_i} \psi_j > (1 - \epsilon) n \epsilon = \frac{(1 - \epsilon) }{(1 + \epsilon)} 2 C^{A_i}(A_i). $$
Thus, the core property is violated by a $1/2 + \epsilon'$ factor for set $A_i$ and function $C^{A_i}$, and  for any constant $\epsilon' > 0$ by picking $\epsilon$ sufficiently small.
\end{proof}

\section{Approximating the Shapley Value from Samples}
\label{s:shapleyFunction}

We turn our attention to the \scs \ problem in the context of the Shapley value. Since the properties (axioms) of the Shapley value are over elements and not sets, there is no simple relaxation of the Shapley value where the properties hold ``probably" over $\CD$ as we had for the core. However, since the Shapley value exists and is unique for all functions, a natural relaxation is to simply approximate this value from samples. 

We begin by observing that there exists a distribution such that Shapley value can be approximated arbitrarily well from samples. However, in this paper, we are motivated by applications where the algorithm does not control the distribution over the samples, but where the samples are drawn from some ``natural" distribution. Thus, the distributions we consider in this section are the uniform distribution, and more generally product distributions, which are the standard distributions studied in the learning literature for combinatorial functions \cite{balcan2011learning,balcan2012learning,feldman2014learning,
feldman2014optimal}. It is easy to see that we need some restrictions on the distribution $\CD$  (for example, if the empty set if drawn with probability one, the Shapley value cannot be approximated). 

In the case of submodular functions with bounded curvature, we prove a tight approximation bound in terms of the curvature when samples are drawn from bounded product distributions.  However, we show that the Shapley value cannot be approximated from samples even for coverage functions (which are a special case of submodular functions) and the uniform distribution. Since coverage functions are learnable from samples, this implies the counter-intuitive observation that learnability does not imply that the Shapley value is approximable from samples. We begin by formally defining $\alpha$-approximability of the Shapley value in the statistical setting.

\begin{definition}  An algorithm $\alpha$-approximates, $\alpha \in (0,1]$, the Shapley value of  a family of cost functions $\mathcal{C}$ over distribution $\CD$, if, for all $C \in \mathcal{C}$ and all $\delta > 0$, given $\poly(n, \nicefrac{1}{\delta}, \nicefrac{1}{1-\alpha})$ samples from $\CD$, it computes Shapley value estimates $\tilde{\phi}_C$ such that for
\begin{itemize}
\item positive bounded Shapley value, if $\phi_i \geq 1 / \poly(n)$, then   $\alpha \phi_i \leq \tilde{\phi}_i \leq \frac{1}{\alpha}\phi_i$;
\item negative bounded Shapley value, if $\phi_i \leq - 1 / \poly(n)$, then   $\frac{1}{\alpha} \phi_i \leq \tilde{\phi}_i \leq \alpha\phi_i$;
\item small Shapley value, if $ |\phi_i| < 1 / \poly(n) $, then $|\phi_i - \tilde{\phi}_i| = o(1) $ .  
\end{itemize} 
 for all $i \in N$ with probability at least $1 - \delta$ over both the samples and the choices made by the algorithm. 
\end{definition}

\paragraph{Controlling $\CD$.} We begin by noting that there exists a distribution $\CD$ such that the Shapley value can be approximated arbitrarily well for bounded functions. Other sampling methods  have previously been suggested (\cite{bachrach2010approximating,fatima2008linear,liben2012computing,
mann1960values}), but the samples $(S, C(S))$ used in these methods are not i.i.d. and the value query model is assumed. 

\begin{definition}
The \emph{Shapley distribution} $\CD^{sh}$  is the distribution which first picks a size $j \in \{0, \ldots, n\}$ uniformly at random and then draws a uniformly random set of size $j$.
\end{definition}
Let $\CS^j_{i}$ and $\CS^j_{-i}$ be the collections of samples of size $j$ containing element $i$ and not containing it respectively. Define $\avg(\CS) := (\sum_{S \in \CS} C(S)) / |\CS| $ to be the average value of the samples in $\CS$. Consider the following cost allocation:
$$\tilde{\phi}_i = \sum_{j=1}^n \avg(\CS^j_{i}) - \sum_{j=0}^{n-1} \avg(\CS^j_{-i})$$
 When the distribution is the Shapley distribution $\CD^{sh}$, the expected value of this cost allocation is the Shapley value and concentration bounds kick in. 

\begin{restatable}{rPro}{pcontrolledD}
\label{p:controlledD} The Shapley value is $(1 - \epsilon)$-approximable, for any constant $\epsilon > 0$, over the Shapley distribution $\CD^{sh}$. 
\end{restatable}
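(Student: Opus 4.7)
The plan is to view $\tilde\phi_i$ (up to the proper $1/n$ normalization) as a Monte Carlo estimator for a telescoping rewriting of the Shapley formula, show it is unbiased under $\CD^{sh}$, and then bound its deviation by standard concentration. First I would rewrite the Shapley value by grouping terms of equal size $k=|S|$ and using the identity $\binom{n-1}{k}\cdot \tfrac{k!(n-k-1)!}{n!}=\tfrac{1}{n}$ to obtain
$$\phi_i \;=\; \frac{1}{n}\sum_{k=0}^{n-1}\Big(\EU[T\,:\,|T|=k+1,\ i\in T][C(T)] \;-\; \EU[T\,:\,|T|=k,\ i\notin T][C(T)]\Big).$$
Since a draw from $\CD^{sh}$ conditioned on $|S|=j$ is uniform over the size-$j$ subsets of $N$, each sample inside $\CS^j_{i}$ (resp.\ $\CS^j_{-i}$) is uniform among the size-$j$ sets containing (resp.\ not containing) $i$; hence $\avg(\CS^j_{i})$ is unbiased for the corresponding conditional expectation, and reindexing the two sums confirms that $\tilde\phi_i$ is an unbiased estimator of $n\phi_i$.

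Next I would check that every bucket is well populated. A single draw from $\CD^{sh}$ falls in $\CS^j_{i}$ with probability $\tfrac{1}{n+1}\cdot \tfrac{j}{n}$ and in $\CS^j_{-i}$ with probability $\tfrac{1}{n+1}\cdot\tfrac{n-j}{n}$, both $\Omega(1/n^2)$ on the relevant range of $j$. A multiplicative Chernoff bound together with a union bound over the $O(n^2)$ pairs $(i,j)$ then shows that, for $m = \poly(n,\log(1/\delta))$ samples, every bucket has $\Omega(m/n^2)$ elements with probability at least $1-\delta/2$. Since $\max_S C(S)=\poly(n)$, Hoeffding's inequality applied inside each bucket yields an additive deviation at most $\eta$ with failure probability $\delta'$, provided the bucket holds $\Omega(\poly(n)\log(1/\delta')/\eta^2)$ samples.

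I would then choose $\eta$ of order $\epsilon/\poly(n)$, small enough that after summing the errors across the $2n$ buckets and dividing by $n$ the total additive error on $\phi_i$ is bounded by $\epsilon/\poly(n)$, union-bounding over all $n$ elements and all $2n$ buckets. For positive bounded $\phi_i \geq 1/\poly(n)$, this additive bound gives $(1-\epsilon)\phi_i \leq \tilde\phi_i \leq (1+\epsilon)\phi_i \leq \phi_i/(1-\epsilon)$, which is exactly the multiplicative guarantee required; the negative-bounded case is symmetric; and for $|\phi_i|<1/\poly(n)$ the additive bound already delivers $|\tilde\phi_i-\phi_i|=o(1)$.

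The hard part will be the accounting in the concentration step: the per-bucket additive precision must be tight enough that the $2n$ bucket errors, after summation and division by $n$, still lie below the multiplicative tolerance allowed when $\phi_i$ is as small as $1/\poly(n)$, while keeping the total sample size polynomial in $n$, $1/\epsilon$, and $1/\delta$. Because the smallest relevant bucket probabilities are $\Omega(1/n^2)$ and the range of $C$ is only polynomial, this trade-off is achievable with $m=\poly(n,1/\epsilon,1/\delta)$ samples, but the careful balancing of $\eta$, bucket size, and union-bound terms is where essentially all of the proof work sits.
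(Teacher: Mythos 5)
Your proposal follows essentially the same route as the paper's proof: the same estimator, the same bucketing of samples by size $j$ and membership of $i$, the same unbiasedness computation, and the same Chernoff-for-bucket-occupancy, Hoeffding-per-bucket, union-bound accounting with per-bucket precision of order $\epsilon|\phi_i|/n$. You are also right to flag the $1/n$ normalization: as written, $\tilde{\phi}_i = \sum_j \avg(\CS^j_{i}) - \sum_j \avg(\CS^j_{-i})$ is an unbiased estimator of $n\phi_i$ (the paper's display silently treats $\E[\avg(\CS^j_{i})]$ as carrying the $1/n$ weight), so your version is if anything slightly more careful.
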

\begin{proof}
Recall the definition of the Shapley value $\phi$ and observe that
\begin{align*}
\phi_i  & = \EU[\sigma][C(S_{\sigma < i} \cup\{i\}) - C(S_{\sigma < i})]\\
& = \EU[S:|S| \sim \mathcal{U}(\{1,\ldots,n\}), i \in S][C(S)] -   \EU[S:|S| \sim \mathcal{U}(\{0,\ldots,n-1\}), i \not  \in S][C(S)] \\ 
& = \sum_{j=1}^n \E[\avg(\CS^j_{i})] - \sum_{j=0}^{n-1} \E[\avg(\CS^j_{-i})] \\
& = \E[\tilde{\phi}_i]
\end{align*}
Next, observe that by standard concentration bounds and a sufficiently large polynomial number of samples drawn from the Shapley distribution $\CD^{sh}$, $m/\poly(n)$ samples are in $\CS^j_{i}$ and $\CS^j_{-i}$ for all $j$ and $i$. Then, by Hoeffding's inequality,
$$\Pr\left[\left|\avg(\CS^j_{i}) - \E[\avg(\CS^j_{i})]\right| \geq \frac{\epsilon |\phi_i| }{ 2n}\right] \leq 2e^{-\frac{m(\epsilon |\phi_i|)^2}{\poly(n)}}$$
By a union bound, we have
$$\Pr\left[\tilde{\phi}_i - \left(\left|\sum_{j=1}^n \E[\avg(\CS^j_{i})] - \sum_{j=0}^{n-1} \E[\avg(\CS^j_{-i})] \right) \right| \geq \epsilon|\phi_i|\right] \leq 2e^{-\frac{m (\epsilon|\phi_i|)^2}{\poly(n)}}.$$
We get that either $(1 - \epsilon) \phi_i \leq \tilde{\phi}_i \leq  (1+\epsilon) \phi_i$ if $\phi_i > 0$ or $(1 + \epsilon) \phi_i \leq  \tilde{\phi}_i  \leq (1-\epsilon) \phi_i$ if $\phi_i < 0$ with probability $1 - 2e^{-\frac{m \epsilon^2}{\poly(n)}}$, if $|\phi_i| \geq 1/ \poly(n)$.

If $|\phi_i| = o(1/\poly(n))$, we bound the first inequality by $\epsilon/ 2n$ instead of $\epsilon |\phi_i| / 2n$ and obtain
$|\phi_i  - \tilde{\phi}_i | = o(1) $ with  probability $1 - 2e^{-\frac{m\epsilon^2}{\poly(n)}}$.
\end{proof}

A known method to estimate an expected value according to some distribution while given samples from another distribution is called \emph{importance sampling}. Importance sampling reweighs samples according to their probabilities of being sampled. Although the above method achieves accurate estimates with a sufficiently large number of samples,  the number of samples required may be exponential, see  Theorem~\ref{thm:lower}.

\subsection{Submodular functions with bounded curvature}

We consider submodular functions with bounded curvature, a common assumption in the submodular maximization literature \cite{iyer2013submodular,iyer2013curvature,sviridenko2015optimal,vondrak2010submodularity}. We show that the Shapley value of these functions is approximable from samples, for which we derive a tight bound. Intuitively, the curvature of a submodular function bounds by how much the marginal contribution of an element can decrease. This property is useful since the Shapley value of an element can be written as a weighted sum of its marginal contributions over all sets.

\begin{definition} A monotone submodular function $C$ has curvature $\kappa \in [0,1]$ if $C_{N \setminus \{i\}}(i) \geq (1-\kappa) C(i)$ for all $i \in N$. This curvature is bounded if $\kappa < 1$. 
\end{definition}
An immediate consequence of this definition is that $C_S(i) \geq (1-\kappa) C_T(i)$ for all $S,T$ such that $i \not \in S \cup T$ by monotonicity and submodularity.  
The main idea for the approximation is that the expected marginal contribution of an element $i$ to a random set approximates the Shapley value of $i$ by the curvature property. We use the same tool $\tilde{v}_i$  to estimate expected marginal contributions $v_i = \E_{S \sim \CD| i \not \in S}[C_S(i)]$ as for additive functions. Recall that $\tilde{v}_i = \avg(\CS_i) - \avg(\CS_{-i})$ is the difference between the average value of samples containing $i$ and the average value of samples not containing $i$.

 \begin{restatable}{rThm}{tcurv}
\label{t:curv}
Monotone submodular functions with bounded curvature $\kappa$ have Shapley value that is $\sqrt{1-\kappa}-\epsilon$ approximable from samples over the uniform distribution and $1 - \kappa-\epsilon$ approximable over any bounded product distribution  for any constant $\epsilon >0$.
\end{restatable}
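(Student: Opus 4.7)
The plan is to leverage the curvature bound on marginal contributions together with the estimator $\tilde{v}_i = \avg(\CS_i) - \avg(\CS_{-i})$ from Lemma~\ref{l:addconcentration}, which concentrates around $v_i := \E_{S \sim \CD | i \notin S}[C_S(i)]$. The entire argument rests on bounding the ratio $v_i/\phi_i$ using curvature, with the tightness of the bound depending on the distribution.

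For the weaker $(1-\kappa)$ bound under any bounded product distribution, I would first invoke Lemma~\ref{l:addconcentration} to argue that with $\poly(n)$ samples, $\tilde{v}_i$ is within a $(1\pm\epsilon)$ factor of $v_i$. Next, curvature combined with submodularity implies that every marginal contribution $C_S(i)$ for $S \subseteq N \setminus \{i\}$ lies in $[(1-\kappa)C(i), C(i)]$: the upper bound is submodularity and the lower bound is $C_S(i) \geq C_{N \setminus \{i\}}(i) \geq (1-\kappa)C(i)$. Hence both $v_i$ (an average over $S \sim \CD$) and $\phi_i$ (an average over permutations) lie in this interval, so $v_i/\phi_i \in [1-\kappa, 1/(1-\kappa)]$, yielding the $(1-\kappa)-\epsilon$ approximation once the $\epsilon$ concentration slack is absorbed.

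For the sharper $\sqrt{1-\kappa}$ bound under the uniform distribution I would refine the estimator per-size. Under uniform, samples concentrate around size $n/2$ with density $\Theta(1/\sqrt{n})$, so with $\poly(n)$ samples we can accurately estimate $u_i^k := \E_{|S|=k,\, i \notin S}[C_S(i)]$ for each $k$ in a window around $n/2$. For sizes outside this window (where samples are exponentially sparse), I would use monotonicity of $u_i^k$ together with curvature: for $k < n/2$, $u_i^k \in [u_i^{n/2}, u_i^{n/2}/(1-\kappa)]$, and for $k > n/2$, $u_i^k \in [(1-\kappa)u_i^{n/2}, u_i^{n/2}]$. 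In either case the interval has multiplicative width $1/(1-\kappa)$, so taking the geometric mean of the endpoints as $\tilde{u}_i^k$ approximates $u_i^k$ within a factor $\sqrt{1-\kappa}$. The estimator $\tilde{\phi}_i = \frac{1}{n}\sum_k \tilde{u}_i^k$ then matches $\phi_i = \frac{1}{n}\sum_k u_i^k$ within the same factor.

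The hard part is the $\sqrt{1-\kappa}$ bound for uniform. The naive estimator $\tilde{v}_i$ alone effectively weights $u_i^k$ by a binomial distribution rather than uniformly over $k$, and a symmetric submodular function with a sharp drop in $u_i^k$ placed just below $n/2$ can exhibit $v_i/\phi_i$ as small as $2(1-\kappa)/(2-\kappa) < \sqrt{1-\kappa}$ in the limit. Circumventing this requires the per-size refinement and the geometric-mean correction above, which implicitly uses knowledge of $\kappa$ to rescale unobserved sizes. Tightness of the $\sqrt{1-\kappa}$ factor is handled separately in Theorem~\ref{t:lbcurv}.
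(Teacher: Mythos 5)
Your proposal is correct, and the $(1-\kappa)$ half for bounded product distributions coincides with the paper's argument: curvature pins every marginal contribution (hence both $v_i$ and $\phi_i$) inside $[(1-\kappa)C(i),\, C(i)]$, and the concentration of $\tilde{v}_i$ around $v_i$ (Lemma~\ref{l:concentration}) does the rest. For the $\sqrt{1-\kappa}$ half you take a genuinely different route. The paper keeps the single aggregate estimator and simply outputs $\tilde{\phi}_i = \frac{2-\kappa}{2\sqrt{1-\kappa}}\tilde{v}_i$: it shows that $u_i^j := \E_{S \sim \mathcal{U}_j \mid i \notin S}[C_S(i)]$ is decreasing in $j$ (Lemma~\ref{l:decreasing}) and that $v_i$ is sandwiched between $u_i^L$ and $u_i^H$ with $L,H = (1\mp\epsilon')n/2$ (Lemma~\ref{l:vi}); splitting $\phi_i = \frac{1}{n}\sum_j u_i^j$ at $L$ and $H$ and applying curvature only to the extreme halves gives $\phi_i/v_i \in \left[\frac{2-\kappa}{2},\, \frac{2-\kappa}{2(1-\kappa)}\right]$, and rescaling $\tilde{v}_i$ by the geometric mean of the two endpoints yields $\sqrt{1-\kappa}$. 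You instead estimate $u_i^k$ per size inside the concentration window and apply the geometric-mean correction term by term to the unobservable sizes, anchored at the window's edges. Both estimators are valid and both hit the same tight constant, since in either case the loss is dominated by the sizes outside the window; your version dispenses with Lemma~\ref{l:vi} at the price of a per-size concentration argument (polynomially many samples of each size near $n/2$, which the uniform distribution indeed supplies), whereas the paper needs only the aggregate concentration of Lemma~\ref{l:concentration}. Your diagnosis that the unrescaled $\tilde{v}_i$ achieves only $\frac{2(1-\kappa)}{2-\kappa}$ is right, but the conclusion that a per-size refinement is \emph{required} to do better is not: a single global rescaling of $\tilde{v}_i$ already suffices. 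Two minor points worth adding for completeness: the case $|\phi_i| < 1/\poly(n)$ must be handled separately (it follows in one line since curvature then forces every $u_i^k$, and hence $v_i$, to be small), and the multiplicative accuracy of each $\tilde{u}_i^k$ relies on $u_i^k \geq (1-\kappa)\,\poly(n)^{-1}$ in the bounded-Shapley-value case, which again follows from curvature.
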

  First,  the Shapley value of monotone functions is non-negative since marginal contributions are non-negative by monotonicity. Next, if a Shapley value $\phi_i$ is ``small" ($o(1/\poly(n))$), then $C_S(i)$ is small for all $S$ by the curvature property, implying that $v_i = \E_{S \sim \CD| i \not \in S}[C_S(i)]$ is small as well. By Lemma~\ref{l:concentration}, a generalization of Lemma~\ref{l:addconcentration} showing that $\tilde{v}_i$ is a good estimate of $v_i$, and with $\epsilon = 1/n$, $|v_i - \tilde{v}_i| = o(1)$. With $\tilde{\phi}_i =  \tilde{v}_i$, we then obtain $|\tilde{\phi}_i - \phi_i| = o(1)$.

The interesting case is positive bounded Shapley value $\phi_i$, which we assume for the rest of the analysis. We first show a $1 - \kappa$ approximation for product distributions, which is a straightforward application of the curvature property combined with Lemma~\ref{l:concentration}. Consider the algorithm which computes $\tilde{\phi}_i = \tilde{v}_i$. Note that 
$$\phi_i =\EU[\sigma][C_{A_{\sigma<i}}(i)] \geq (1-\kappa) v_i > \frac{1-\kappa}{1 +\epsilon}   \tilde{v}_i > (1- \kappa-\epsilon)\tilde{v}_i$$
where the first inequality is by curvature and the second by Lemma~\ref{l:concentration}. Similarly, for the other direction, $\phi_i\leq  v_i / (1-\kappa) <  \tilde{v}_i/(1-\kappa -\epsilon)$. The $\sqrt{1-\kappa}$ result is the main technical component of this proof. We begin with a technical overview. 
\begin{enumerate}
\item Denote the uniform distribution over all sets of size $j$ by  $\mathcal{U}_{j}$. Lemma~\ref{l:decreasing} shows that the expected marginal contribution $\E_{S \sim \mathcal{U}_{j}| i \not \in S}[C_S(i)]$ of $i$ to a uniformly random set $S$ of size $j$ is decreasing in $j$, which is by submodularity.
\item Consider $L := (1-\epsilon)n/2$, $H := (1+\epsilon)n/2$. Lemma~\ref{l:vi} shows that $(1+ \epsilon) \cdot \E_{S \sim \mathcal{U}_{L}| i \not \in S}[C_S(i)] \geq v_i$ and $(1- \epsilon)\cdot \E_{S \sim \mathcal{U}_{H}| i \not \in S}[C_S(i)]\leq v_i$, which is because a uniformly random set $S$ is likely to have size between $L$ and $H$ and by submodularity.
\item Combining these two lemmas, roughly \emph{half} of the terms (when $j \leq L$) in the summation $\phi_i = ( \sum_{j=0}^{n-1} \E_{S \sim \mathcal{U}_{j}| i \not \in S}[C_S(i)])/n$ are greater than $v_i$ and the other half (when $j \geq H$) of the terms are smaller. This is the main observation for the improvement from $1 - \kappa$.
\item The above and curvature imply that $(1/2 + (1 - \kappa)/2) v_i \leq   \phi_i \leq (1/2 + 1/(2(1-\kappa))v_i $.
\item By scaling $v_i$ to obtain the best approximation possible with the previous inequality, we obtain a $\sqrt{1-\kappa}$  approximation.
\end{enumerate}

Let $$\tilde{\phi}_i = \frac{2-\kappa}{2\sqrt{1-\kappa}} \cdot \tilde{v}_i$$ be the estimated Shapley value. Denote by $\mathcal{U}_j$ the uniform distribution over all sets of size $j$, so $$\phi_i = \EU[\sigma][C_{A_{\sigma < i}}(i)] = \frac{1}{n}\sum_{j = 0}^{n-1}\EU[S \sim \mathcal{U}_{j}| i \not \in S][C_S(i)] .$$   The main idea to improve the loss from $1-\kappa$ to $\sqrt{1 - \kappa}$ is to observe that $v_i$ can be a factor $1-\kappa$ away from the contribution $\E_{S \sim \mathcal{U}_{j_l}| i \not \in S}[C_S(i)]$ of $j$ to sets of low sizes $j_l \leq L := (1-\epsilon') \cdot n / 2$ or $1-\kappa$ away from its contribution $\E_{S \sim \mathcal{U}_{j_h}| i \not \in S}[C_S(i)]$ to sets of high sizes $j_h \geq H := (1+\epsilon') \cdot n / 2 $, but not both, otherwise the curvature property would be violated as illustrated in Figure~\ref{f:curv}. The following lemma shows that $\E_{S \sim \mathcal{U}_{j}| i \not \in S}[C_S(i)]$ is decreasing in $j$ by submodularity.

\begin{figure}
\centering
\includegraphics[width = 0.6\linewidth]{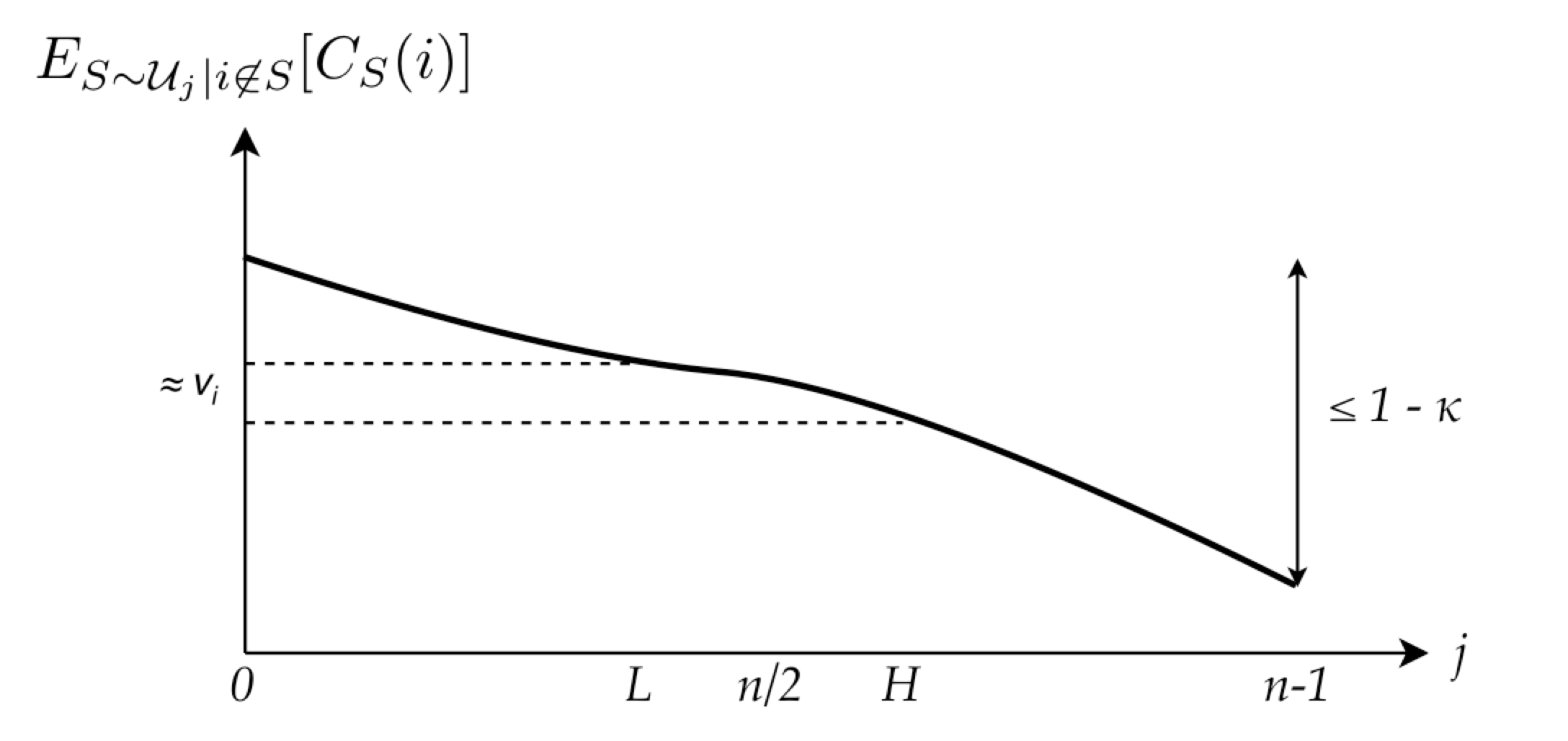}
\caption{The expected marginal contribution $E_{S \sim \mathcal{U}_{j}| i \not \in S}[C_S(i)]$ of an element $i$ to a set of size $j$ as a function of $j$. The curvature property implies that any two points are at most a factor $1-\kappa$ from each other. Lemma~\ref{l:decreasing} shows that it is decreasing. Lemma~\ref{l:vi} shows that the expected marginal contribution $v_i$ of $i$ to a uniformly random set is approximately between $E_{S \sim \mathcal{U}_{L}| i \not \in S}[C_S(i)]$ and $E_{S \sim \mathcal{U}_{H}| i \not \in S}[C_S(i)]$. The Shapley value of $i$ is the average value of this expected marginal contribution over all integers $j \in \{0, \ldots, n-1\}$.}
\label{f:curv}
\end{figure}

\begin{lemma} 
\label{l:decreasing}
Let $C$ be a submodular function, then for all $j \in \{0, \ldots, n-1\}$ and all $i \in N$,
$$\EU[S \sim \mathcal{U}_{j}| i \not \in S][C_S(i)] \geq \EU[S \sim \mathcal{U}_{j+1}| i\not  \in S][C_S(i)] $$
\end{lemma}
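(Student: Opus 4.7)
The plan is to prove this by a coupling argument that exploits submodularity. I would fix $i \in N$ and $j \in \{0, \ldots, n-1\}$, and compare the two expectations by expressing the expectation over size-$(j+1)$ sets as a nested expectation: first draw $T \sim \mathcal{U}_{j+1}$ conditioned on $i \notin T$, then draw $S$ uniformly among the $j+1$ subsets of $T$ of size $j$. The key observation is that the induced marginal distribution on $S$ is precisely $\mathcal{U}_{j}$ conditioned on $i \notin S$.

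To verify the marginalization, I would do a short counting check: the probability that a specific $S$ of size $j$ with $i \notin S$ is produced equals the number of size-$(j+1)$ supersets $T \not\ni i$ of $S$ (which is $n - 1 - j$) times $\frac{1}{(j+1)\binom{n-1}{j+1}}$; a quick simplification reduces this to $1/\binom{n-1}{j}$, confirming uniformity.

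Once the coupling is in place, submodularity finishes the argument: whenever $S \subset T$ and $i \notin T$, we have $C_S(i) \geq C_T(i)$ by the diminishing returns property. Taking expectations under the joint distribution,
\[
\EU[T \sim \mathcal{U}_{j+1}|i \notin T][C_T(i)] = \EU[T][\EU[S \subset T,|S|=j][C_T(i)]] \le \EU[T][\EU[S][C_S(i)]] = \EU[S \sim \mathcal{U}_{j}|i \notin S][C_S(i)],
\]
which is the claimed inequality.

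I do not expect any serious obstacle here, since the statement is essentially a standard ``averaged diminishing returns'' fact; the only thing to be careful about is the conditioning on $i \notin S$ throughout both distributions (which is why the coupling is restricted to supersets $T$ that also avoid $i$), and the trivial counting identity $\frac{n-1-j}{(j+1)\binom{n-1}{j+1}} = \frac{1}{\binom{n-1}{j}}$. The proof should fit in a few lines.
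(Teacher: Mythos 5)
Your proof is correct and is essentially the same argument as the paper's: both rest on the submodularity inequality $C_S(i) \geq C_{S \cup \{i'\}}(i)$ together with a double count over the pairs $(S,T)$ with $S \subset T$, $|T|=|S|+1$, $i \notin T$. The paper phrases this as a summation identity (each size-$(j+1)$ set is counted $j+1$ times when extending size-$j$ sets), whereas you phrase it as a coupling with a marginal-uniformity check; the two are interchangeable and your counting identity is verified correctly.
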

\begin{proof}
By submodularity,
$$\sum_{S : |S| = j,  i \not \in S} C_{S}(i) \geq \sum_{S : |S| = j,  i \not \in S} \frac{1}{n -j -1}\sum_{i' \not \in S \cup \{i\}} C_{S \cup \{i'\}}(i). $$
 In addition, observe that by counting in two ways,
$$\sum_{S : |S| = j,  i \not \in S} \sum_{i' \not \in S \cup \{i\}} C_{S \cup \{i'\}}(i)= (j+1) \sum_{S: |S| = j+1, i \not \in S}C_{S}(i).$$
By combining the two previous observations,
\begin{align*}
\EU[S \sim \mathcal{U}_{j}| i \not \in S][C_S(i)] & = \frac{1}{|\{S : |S| = j,  i \not \in S\} |}\sum_{S : |S| = j,  i \not \in S} C_{S}(i)\\
 &\geq \frac{1}{\binom{n-1}{j}} \frac{j+1}{n-j-1}\sum_{S: |S| = j+1, i \not \in S}C_{S}(i) \\
 &=  \frac{1}{|\{S : |S| = j+1,  i \not \in S\} |}\sum_{S: |S| = j+1, i \not \in S}C_{S}(i) \\
&= \EU[S \sim \mathcal{U}_{j+1}| i\not  \in S][C_S(i)] 
\end{align*}
\end{proof}
The next lemma shows that for $j$ slightly lower than $n/2$, the expected marginal contribution $v_i$ of element $i$ to a random set cannot be much larger than $\E_{S \sim \mathcal{U}_{j}| i \not \in S}[C_S(i)]$, and similarly for $j$ slightly larger than $n/2$, it cannot be much smaller.
\begin{lemma}
\label{l:vi}
Let $C$ be a submodular function, then for all $i \in N$, $$\left(1 +  \frac{e^{-\frac{\epsilon n}{6}}}{1 - \kappa} \right) \cdot \EU[S \sim \mathcal{U}_{L}| i \not \in S][C_S(i)] \geq v_i \geq \left(1 - e^{-\frac{\epsilon n}{6}}\right) \cdot \EU[S \sim \mathcal{U}_{H}| i \not \in S][C_S(i)].$$
\end{lemma}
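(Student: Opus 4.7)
The plan is to decompose $v_i$ by the size of $S$ and combine three ingredients: (i) the monotonicity of $j \mapsto \EU[S \sim \mathcal{U}_{j}| i \not \in S][C_S(i)]$ from Lemma~\ref{l:decreasing}, (ii) standard multiplicative Chernoff concentration of $|S|$ around $n/2$, and (iii) the curvature property to handle the one regime where (i) pushes the wrong way. When $S$ is drawn from the uniform distribution on $2^N$ conditioned on $i \notin S$, the size $|S|$ is distributed as $\mathrm{Bin}(n-1,1/2)$, so Chernoff gives $\Pr[|S| < L]$ and $\Pr[|S| > H]$ both at most $e^{-\epsilon^2 n/6}$, matching the bound used for the Chernoff step in the proof of Theorem~\ref{t:core} (the exponent in the statement of the lemma appears to be a typographical $\epsilon$ in place of $\epsilon^{2}$).

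For the lower bound, I would write
\[
v_i \;=\; \sum_{j=0}^{n-1} \Pr[|S|=j \mid i \notin S] \cdot \EU[S \sim \mathcal{U}_{j}| i \not \in S][C_S(i)],
\]
restrict the sum to $j \le H$ (discarding the non-negative $j>H$ terms, which are non-negative by monotonicity of $C$), and apply Lemma~\ref{l:decreasing} to replace each surviving expected marginal by the smaller quantity $\EU[S \sim \mathcal{U}_{H}| i \not \in S][C_S(i)]$. The Chernoff tail then bounds the retained probability mass by $1-e^{-\epsilon^2 n/6}$, giving the right-hand inequality.

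For the upper bound, I would split the same sum at $j=L$. Terms with $j \ge L$ are at most $\EU[S \sim \mathcal{U}_{L}| i \not \in S][C_S(i)]$ by Lemma~\ref{l:decreasing}, contributing at most this quantity in total. The hard part is the tail $j < L$, where Lemma~\ref{l:decreasing} gives the wrong direction and the marginal contribution to a small set can a priori dwarf that to a size-$L$ set; this is the only place a non-trivial structural assumption is needed. The curvature consequence noted right after the definition, $C_S(i) \le C_T(i)/(1-\kappa)$ for any $S,T$ with $i \notin S \cup T$, lets me bound every $j<L$ term by $\EU[S \sim \mathcal{U}_{L}| i \not \in S][C_S(i)]/(1-\kappa)$, and since Chernoff caps $\Pr[|S|<L]$ by $e^{-\epsilon^{2}n/6}$, the two contributions add to the claimed $\bigl(1 + e^{-\epsilon^{2}n/6}/(1-\kappa)\bigr)\,\EU[S \sim \mathcal{U}_{L}| i \not \in S][C_S(i)]$. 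The main obstacle is precisely this small-$j$ tail: without bounded curvature it cannot be controlled, and the fact that curvature costs only a $1/(1-\kappa)$ factor on an exponentially small event is what keeps the inequality tight enough to feed the subsequent $\sqrt{1-\kappa}$ approximation argument in Theorem~\ref{t:curv}.
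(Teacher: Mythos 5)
Your proof is correct and follows essentially the same route as the paper: decompose $v_i$ by $|S|$, use Lemma~\ref{l:decreasing} together with a Chernoff tail for both directions, and invoke curvature only to control the small-$|S|$ tail in the upper bound (the paper routes that step through $C(i)$, the $j=0$ marginal, before applying curvature, which is the same estimate as your direct pairwise bound). Your remark that the exponent should read $\epsilon^2 n/6$ is also consistent with the Chernoff bound the paper applies elsewhere, so nothing further is needed.
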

\begin{proof}
By Chernoff bound,  $L \leq |S|$ and $|S| \leq H$ with probability at least $1-e^{-\frac{\epsilon n}{6}}$ each  for $S$ drawn from the uniform distribution. Denote the uniform distribution over all sets by $\mathcal{U}$. So, 
\begin{align*}
v_i & = \sum_{j=0}^{n-1} \Pr_{S \sim \U| i \not \in S}[|S| = j  ] \cdot \EU[S \sim \mathcal{U}_{j}| i \not \in S][C_S(i)] &\\
& \geq \sum_{j=0}^{H} \Pr_{S \sim \U| i \not \in S}[|S| = j] \cdot \EU[S \sim \mathcal{U}_{j}| i \not \in S][C_S(i)] &\\
& \geq \Pr_{S \sim \U| i \not \in S}\left[|S| \leq H \right]\cdot \EU[S \sim \mathcal{U}_{H}| i \not \in S][C_S(i)] & \text{Lemma~\ref{l:decreasing}} \\
& \geq (1-e^{-\frac{\epsilon n}{6}}) \cdot \EU[S \sim \mathcal{U}_{H}| i \not \in S][C_S(i)] . &
\end{align*}
Similarly,
\begin{align*}
v_i & = \sum_{j=0}^{n-1} \Pr_{S \sim \U| i \not \in S}[|S| = j ] \cdot \EU[S \sim \mathcal{U}_{j}| i \not \in S][C_S(i)]  & \\
& \leq \Pr_{S \sim \U| i \not \in S}\left[|S| < L \right] \cdot C(i) + \Pr_{S \sim \U|i \not \in S}\left[|S| \geq L \right] \cdot \EU[S \sim \mathcal{U}_{L}| i \not \in S][C_S(i)] & \text{Lemma~\ref{l:decreasing}} \\
& \leq e^{-\frac{\epsilon n}{6}} \cdot  \frac{1}{1-\kappa} \cdot \EU[S \sim \mathcal{U}_{L}| i \not \in S][C_S(i)] + \EU[S \sim \mathcal{U}_{L}| i \not \in S][C_S(i)]  & \text{curvature} \\
\end{align*}
\end{proof}
We are now ready to prove Theorem~\ref{t:curv}:
\begin{align*}
\phi_i & = \frac{1}{n}\sum_{j=0}^{n-1} \EU[S \sim \mathcal{U}_{j}| i \not \in S][C_S(i)] &\\
& = \frac{1}{n} \left(\sum_{j=0}^{H -1} \EU[S \sim \mathcal{U}_{j}| i \not \in S][C_S(i)] + \sum_{i = H}^{n-1} \EU[S \sim \mathcal{U}_{j}| i \not \in S][C_S(i)]  \right) & \\
& \leq \frac{1+\epsilon'}{2} \cdot C(i) + \frac{1}{2} \cdot \EU[S \sim \mathcal{U}_{H}| i \not \in S][C_S(i)] & \text{Lemma~\ref{l:decreasing}} \\
& \leq \frac{1+\epsilon'}{2(1-\kappa)} \cdot v_i + \frac{1}{2(1-e^{-\frac{\epsilon' n}{6}})} \cdot v_i & \text{curvature and Lemma~\ref{l:vi}} \\
& \leq \left(\frac{2- \kappa}{2(1-\kappa)} + c_1 \epsilon' \right) \cdot v_i  \\
& \leq \left(\frac{2- \kappa}{2(1-\kappa)} + c_2 \epsilon'\right) \cdot \tilde{v}_i & \text{Lemma~\ref{l:concentration}} \\
& = \left(\frac{1}{\sqrt{1-\kappa} - c_3 \epsilon'} \right) \cdot \tilde{\phi}_i & \text{ definition of } \tilde{\phi}_i \\
\end{align*}
for some constants $c_1, c_2, c_3$ and let $\epsilon' = \epsilon / c_3 $ to obtain the desired result for any $\epsilon$. 

Similarly,
\begin{align*}
\phi_i & = \frac{1}{n} \sum_{j=0}^{L} \EU[S \sim \mathcal{U}_{j}| i \not \in S][C_S(i)]  + \frac{1}{n} \sum_{j = L + 1}^{n-1} \EU[S \sim \mathcal{U}_{j}| i \not \in S][C_S(i)] & \\
& \geq \frac{1-\epsilon'}{2}  \cdot \EU[S \sim \mathcal{U}_{L}| i \not \in S][C_S(i)]  + \frac{1}{2} (C(N) - C(N \setminus\{i\}) & \text{Lemma~\ref{l:decreasing}} \\
& \geq \frac{1-\epsilon'}{2\left(1 +  \frac{e^{-\frac{\epsilon' n}{6}}}{1 - \kappa} \right)} \cdot v_i + \frac{1-\kappa}{2} \cdot v_i & \text{Lemma~\ref{l:vi} and curvature} \\
& \geq \left(\frac{2- \kappa}{2} - c_1 \epsilon' \right) v_i \\
& \geq \left(\frac{2- \kappa}{2} - c_2 \epsilon' \right) \tilde{v}_i & \text{Lemma~\ref{l:concentration}} \\
&= \left(\sqrt{1-\kappa} - c_3 \epsilon'\right) \tilde{\phi}_i & \text{ definition of } \tilde{\phi}_i \\
\end{align*} 
\qed

We show that this approximation is optimal. We begin with a general lemma to derive information theoretic  inapproximability results for the Shapley value. This lemma shows that if there exists two functions in $\CC$ that cannot be distinguished from samples with high probability and that have an element with Shapley value which differs by an $\alpha^2$ factor, then $\CC$ does not have a Shapley value that is  $\alpha$-approximable from samples.

\begin{lemma}
\label{l:imp}
Consider a family of cost functions $\mathcal{C}$,  a constant $\alpha \in (0, 1)$, and assume there exist $C^1, C^2 \in \mathcal{C}$ such that: 
\begin{itemize}
\item \textbf{Indistinguishable from samples.} With probability $1 - O(e^{- \beta  n})$ over $S \sim \mathcal{D}$ for some constant $\beta > 0$,  $$C^1(S) = C^2(S).$$ 
\item \textbf{Gap in Shapley value.} There exists $i \in N$ such that $$\phi^{C^1}_i  < \alpha^2 \phi^{C^2}_i.$$ 
\end{itemize}
Then,  $\CC$ does not have  Shapley value that is  $\alpha$-approximable from samples over $\CD$.
\end{lemma}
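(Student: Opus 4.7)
The plan is to prove Lemma~\ref{l:imp} by contradiction via a coupling argument. Suppose, for the sake of contradiction, that some algorithm $A$ is an $\alpha$-approximation for $\CC$ over $\CD$; thus, for every $C \in \CC$, when given $m = \poly(n, \nicefrac{1}{\delta}, \nicefrac{1}{1-\alpha})$ i.i.d.\ samples from $\CD$, $A$ outputs estimates $\tilde{\phi}$ satisfying the approximation conditions with probability at least $1 - \delta$, for any $\delta > 0$. Fix $\delta$ to be a small constant, say $\delta = 1/4$, and take $m$ to be the corresponding polynomial sample size.

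Next, I would couple the two sample sequences. Draw one sequence $S_1, \ldots, S_m$ i.i.d.\ from $\CD$ and form two datasets $D^1 = \{(S_j, C^1(S_j))\}$ and $D^2 = \{(S_j, C^2(S_j))\}$. By the indistinguishability hypothesis, each $j$ satisfies $C^1(S_j) = C^2(S_j)$ with probability at least $1 - O(e^{-\beta n})$; applying a union bound over $m = \poly(n)$ indices yields an event $E$ on which $D^1 = D^2$, with $\Pr[E] \geq 1 - m \cdot O(e^{-\beta n}) = 1 - o(1)$. On $E$, algorithm $A$ receives an identical input in the two worlds, so (fixing its internal randomness by the same coupling) it produces the same output vector $\tilde{\phi}$. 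Taking $n$ large enough, the joint event that $E$ holds \emph{and} $A$'s output is a valid $\alpha$-approximation to $\phi^{C^1}$ \emph{and} to $\phi^{C^2}$ occurs with probability at least $1 - \Pr[\overline{E}] - 2\delta > 0$. In particular, there exists at least one realization in which a single vector $\tilde{\phi}$ simultaneously satisfies the approximation guarantees for both $C^1$ and $C^2$.

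The final step is to derive a contradiction from these simultaneous guarantees. Since $\phi^{C^1}_i < \alpha^2 \phi^{C^2}_i$ and $\alpha^2 < 1$, both quantities are positive and $\phi^{C^2}_i > \phi^{C^1}_i \geq 1/\poly(n)$ (the small/negative cases can be ruled out by the same gap inequality; if either Shapley value were $o(1/\poly(n))$ the corresponding small-value bound $|\tilde{\phi}_i - \phi_i| = o(1)$ together with the positive-value bound on the other function would give an even sharper contradiction). The positive-bounded definition forces
\[
\alpha\,\phi^{C^2}_i \;\leq\; \tilde{\phi}_i \;\leq\; \tfrac{1}{\alpha}\,\phi^{C^1}_i,
\]
which rearranges to $\phi^{C^1}_i \geq \alpha^2 \phi^{C^2}_i$, directly contradicting the Shapley-gap hypothesis. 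Hence no such algorithm $A$ can exist.

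The conceptually delicate point is arranging the coupling so that the two guarantees are evaluated on the \emph{same} random output; this is what turns the indistinguishability event $E$ into an incompatible joint event. The main obstacle is therefore bookkeeping rather than mathematics: making sure $m$, $\delta$, and the failure probability $O(m e^{-\beta n})$ are chosen in the right order (first fix $\delta$ constant, then use $m = \poly(n)$, then take $n$ large) so that the three bad events—$\overline{E}$, failure on $C^1$, and failure on $C^2$—jointly miss the full measure with room to spare.
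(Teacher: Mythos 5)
Your proof is correct and follows essentially the same route as the paper's: both arguments use a union bound to show the two labeled datasets coincide with high probability, conclude that the algorithm's output cannot depend on which of $C^1, C^2$ generated the data, and derive the contradiction from the fact that the $\alpha^2$ gap makes the two approximation intervals $[\alpha\phi^{C^1}_i, \phi^{C^1}_i/\alpha]$ and $[\alpha\phi^{C^2}_i, \phi^{C^2}_i/\alpha]$ disjoint. The paper phrases this as a uniformly random choice of $C \in \{C^1, C^2\}$ forcing a failure probability of at least $1/4$ on one of them, while you phrase it as a positive-probability joint success event yielding an impossible single vector $\tilde{\phi}$; these are two presentations of the same two-point indistinguishability argument.
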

\begin{proof}
 By a union bound, given $m$ sets $S_1, \ldots, S_m$ drawn i.i.d. from $\mathcal{D}$ with $m$ polynomial in $n$, $C^1(S_j) = C^2(S_j)$ for all $S_j$ with probability $1 - O(e^{- \beta n})$.

Let $C = C^1$ or $C = C^2$ with probability $1/2$ each. Assume the algorithm is given $m$ samples such that  $C^1(S_j) = C^2(S_j) $ and consider its (possibly randomized) choice $\tilde{\phi}_i$. Note that  $\tilde{\phi}_i$ is independent of the randomization of $C$ since $C^1$ and $C^2$ are indistinguishable to the algorithm. Since $\phi^{C^1}_i / \phi^{C^2}_i < \alpha^2$, $\tilde{\phi}_i$ is at least a factor $\alpha$ away from $\phi^{C}_i$ with probability at least $1/2$ over the choices of the algorithm and $C$. Label the cost functions so that $\tilde{\phi}_i$ is at least a factor $\alpha$ away from $\phi^{C^1}_i$ with probability at least $1/2$ over the choices of the algorithm. Thus, with $\delta = 1/4$, there exists no algorithm such that for all $C' \in \{C_1, C^2\}$,   $\alpha \cdot \phi^{C'}_i \leq \tilde{\phi}^{C'}_i \leq \frac{1}{\alpha} \cdot \phi^{C'}_i $ with probability at least $3/4$ over both the samples and the choices of the algorithm. 
\end{proof}

We obtain the inapproximability result by constructing two such functions.

\begin{restatable}{rThm}{tlbcurv} 
\label{t:lbcurv}
For every $\kappa < 1$, there exists a hypothesis class of submodular functions with curvature $\kappa$ that have Shapley value that is not $\sqrt{1-\kappa} + \epsilon$-approximable from samples over the uniform distribution, for every constant $\epsilon > 0$.
\end{restatable}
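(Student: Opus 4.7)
My plan is to apply Lemma~\ref{l:imp} with $\alpha = \sqrt{1-\kappa}+\epsilon$. Unpacking the lemma, it suffices to exhibit two monotone submodular functions $C^1,C^2:2^N\to\mathbb{R}_{\geq 0}$, each of curvature $\kappa$, for which (i) $C^1(S) = C^2(S)$ on all but an $e^{-\Omega(n)}$ fraction of the uniform measure on $2^N$, and (ii) some element $i\in N$ satisfies $\phi^{C^1}_i < (1-\kappa+O(\epsilon))\,\phi^{C^2}_i$. The construction follows the hidden-partition template used in the proof of Theorem~\ref{t:core}: distinguish an element $1\in N$ and define, for a subset $A\subseteq N\setminus\{1\}$ of size $a=\Theta(n)$, a small constant $\epsilon'>0$, threshold $K=(1+\epsilon')a/2$, and $\beta=\kappa/(a-K)$,
\[
C^A(S)\;=\;|S|\;-\;\mathbf{1}_{1\in S}\cdot\beta\cdot\max\bigl(|S\cap A|-K,\,0\bigr).
\]
A direct computation of marginals shows $C^A$ is monotone submodular: the marginal of element $1$ decreases from $1$ (while $|T\cap A|\leq K$) linearly down to $1-\kappa$ (at $|T\cap A|=a$), giving element $1$ curvature exactly $\kappa$, while every other element has marginal either $1$ or $1-\beta$ and carries curvature at most $\kappa$. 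Two choices $A_1,A_2$ that differ on whether the "test element" $i$ is inside the hidden subset will play the roles of $C^1,C^2$.

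Indistinguishability is immediate from a Chernoff bound. For $S\sim\mathcal{U}(2^N)$, $|S\cap A|\sim\mathrm{Bin}(a,1/2)$, so $\Pr[|S\cap A|>K]\leq e^{-\epsilon'^2 a/6}=e^{-\Omega(n)}$; on the complementary event the penalty term is identically zero, hence $C^{A_1}(S)=C^{A_2}(S)=|S|$ with probability $1-e^{-\Omega(n)}$, fulfilling hypothesis~(i) of Lemma~\ref{l:imp} for a constant $\beta>0$.

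For the Shapley gap in~(ii), the key asymmetry is that while $|S\cap A|$ is binomially concentrated around $a/2$ under uniform sampling (driving indistinguishability), the position of $1$ among the elements of $A\cup\{1\}$ in a uniformly random permutation is uniform on $\{1,\dots,a+1\}$, so $|\sigma_{<1}\cap A|$ is uniformly distributed on $\{0,1,\dots,a\}$. Thus the penalty region $|T\cap A|>K$, which is essentially invisible to uniform samples, occupies a constant fraction of the permutation positions that define $\phi^{C^A}_1$. Averaging the marginal of element $1$ over this uniform distribution gives a closed-form expression of the form $1-\Theta(\kappa)(1-\epsilon')$ whose multiplicative constant I can then tune by varying the shape and parameters of the penalty, in particular by choosing $A_1,A_2$ so that on one function element $i$'s marginal profile matches (up to $o(1)$) the tight endpoint $C(i)=v_i/(1-\kappa)$ of the upper bound in the proof of Theorem~\ref{t:curv}, and on the other it matches the tight endpoint $C(i)=v_i$ of the lower bound. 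After the tuning, shrinking $\epsilon'$ produces Shapley values whose ratio approaches $1-\kappa$ and, for any fixed $\epsilon>0$, stays strictly below $(\sqrt{1-\kappa}+\epsilon)^2$, which invokes Lemma~\ref{l:imp}.

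The step requiring the most care is the last one: a naive one-parameter cap as written above only yields a $(2-\kappa)/2$ Shapley ratio, which would give inapproximability $\sqrt{(2-\kappa)/2}$ rather than the target $\sqrt{1-\kappa}$. Closing the remaining gap requires designing the two penalty profiles so that the marginal of element $1$ simultaneously realizes (a) tight curvature $C(i)=v_i/(1-\kappa)$ at the empty set in one of the two functions and (b) a near-flat profile equal to $v_i$ throughout the typical regime in the other, while both profiles remain concave and both functions remain indistinguishable from $|S|$ on typical samples. Once this two-sided tightness is arranged, the Shapley values pinch the estimator's range from both sides by a factor $1/\sqrt{1-\kappa}$, matching the geometric-mean structure of the upper bound and completing the proof.
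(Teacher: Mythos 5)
Your high-level strategy is the same as the paper's: invoke Lemma~\ref{l:imp} with a Shapley gap of $\alpha^2 = 1-\kappa$, obtain indistinguishability from the fact that uniform samples have size concentrated in $[L,H]$ while the permutation positions defining the Shapley value are spread uniformly over $\{0,\dots,n-1\}$. But the construction you actually write down does not produce the required gap, and the fix you sketch at the end is both unverified and not quite the right target. Concretely: in $C^A(S)=|S|-\mathbf{1}_{1\in S}\cdot\beta\cdot\max(|S\cap A|-K,0)$, the Shapley value of element $1$ is $1-\beta\,\E\left[\max(U-K,0)\right]$ with $U$ uniform on $\{0,\dots,a\}$, which is the \emph{same} for $A_1$ and $A_2$; and for any $i\neq 1$ the marginal changes by at most $\beta=\Theta(\kappa/n)$ depending on membership in $A$, so swapping $i$ in or out of the hidden set moves its Shapley value by $O(1/n)$ around $1$. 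The hidden-partition device borrowed from Theorem~\ref{t:core} therefore does no work here: no element exhibits a constant-factor gap between $C^{A_1}$ and $C^{A_2}$. The paper hides nothing combinatorial; it fixes a single distinguished element $i^{\star}$ and hides only the \emph{marginal profile on atypical set sizes}.

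The remaining gap is exactly the part you defer. Your target profiles are also slightly off: a function whose marginal for $i$ is ``near-flat equal to $v_i$'' has $\phi_i\approx v_i$, and paired with a function achieving $\phi_i\approx\frac{2-\kappa}{2(1-\kappa)}v_i$ this gives ratio $\frac{2-\kappa}{2(1-\kappa)}<\frac{1}{1-\kappa}$, i.e.\ only $\sqrt{2(1-\kappa)/(2-\kappa)}$-inapproximability. To hit $\sqrt{1-\kappa}$ you need \emph{both} endpoints of the interval in Theorem~\ref{t:curv}: one function whose marginal is $1$ for $|S|<L$ and $1-\kappa$ for $|S|\geq L$ (Shapley $\approx\frac{2-\kappa}{2}$, i.e.\ $v_i/(1-\kappa)$ on the lower half of positions), and a second whose marginal is $1-\kappa$ for $|S|\leq H$ and decays to $(1-\kappa)^2$ above $H$ (Shapley $\approx\frac{(1-\kappa)(2-\kappa)}{2}$, i.e.\ $(1-\kappa)v_i$ on the upper half of positions), both agreeing with $L+(|S|-L)(1-\kappa)$ on $[L,H]$ so that $v_i$ is common to both. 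Each function individually then has total marginal swing exactly $1-\kappa$, hence curvature $\kappa$. Writing down closed-form set functions realizing these marginal profiles while remaining monotone, submodular, of curvature exactly $\kappa$, and identical on all sets of size in $[L,H]$ is the substantive content of the paper's proof (it requires a careful piecewise definition and a $\sqrt{n}$-length ramp to keep the non-distinguished elements submodular), and it is absent from your argument.
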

\begin{proof}

\begin{figure}
\captionsetup{width=0.85\textwidth}
\centering
\begin{subfigure}{.5\textwidth}
  \centering
  \includegraphics[width=1.0\linewidth]{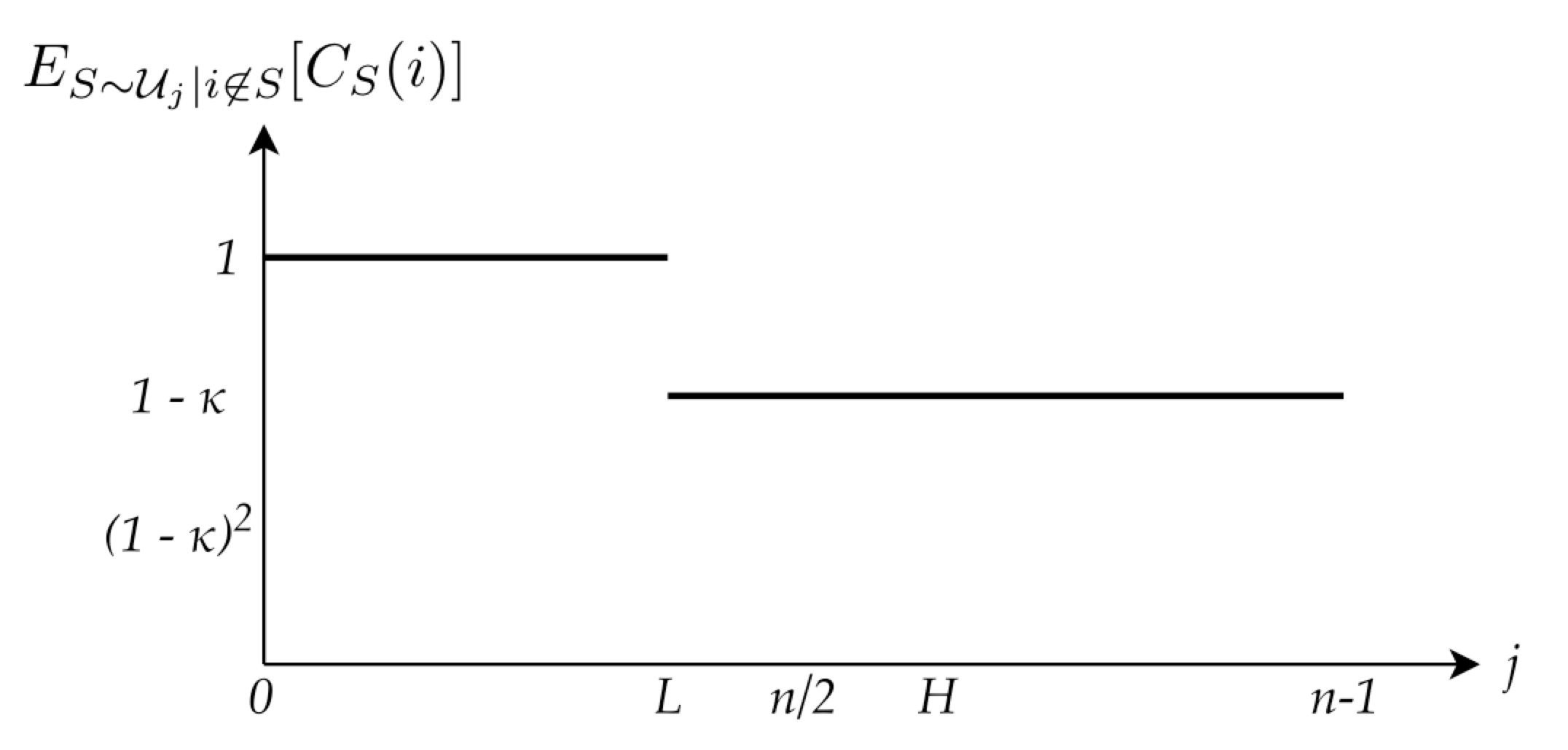}
  \caption{}
   \label{f:1}	
\end{subfigure}%
\begin{subfigure}{.5\textwidth}
  \centering
  \includegraphics[width=1.0\linewidth]{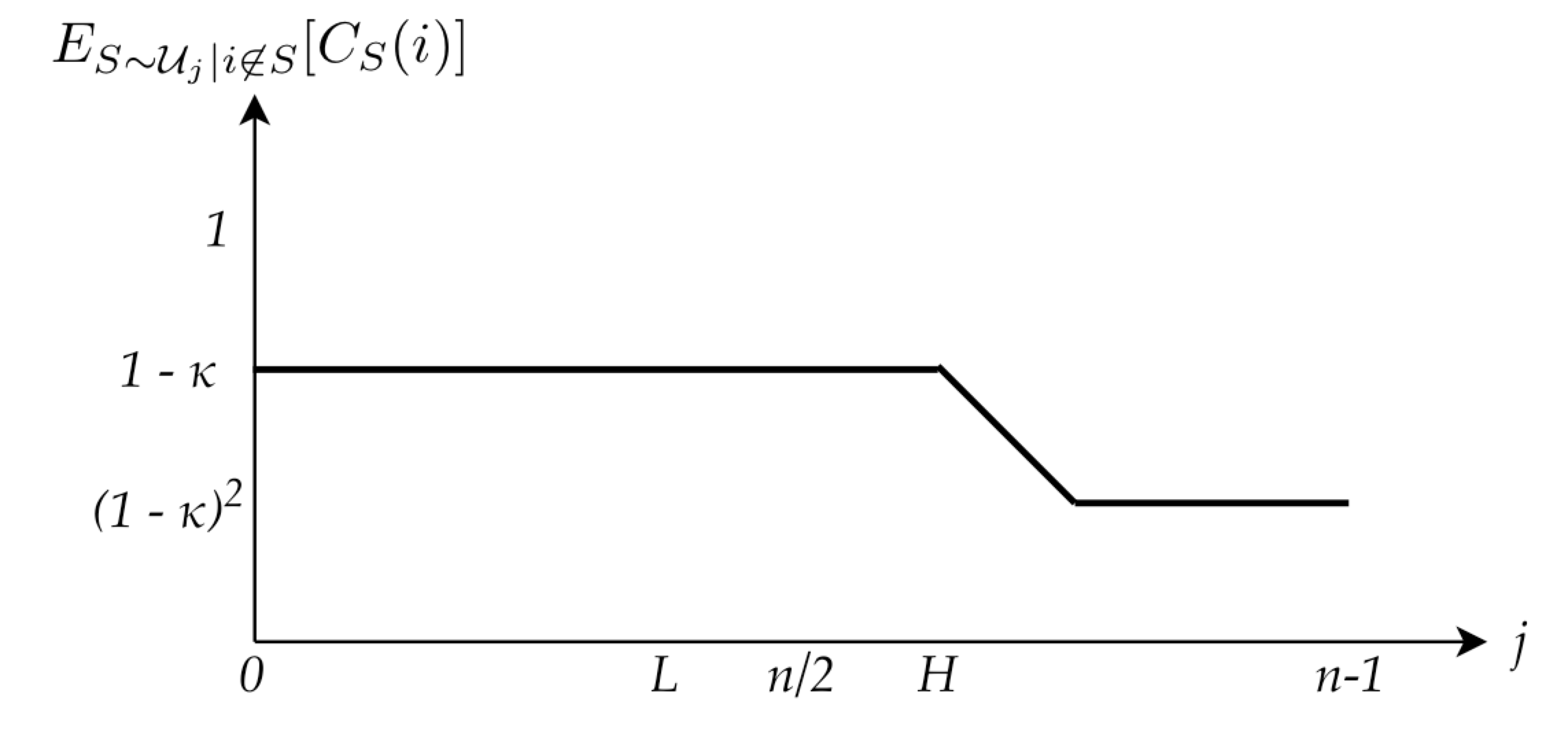}
  \label{f:2}
  \caption{}
\end{subfigure}
\caption{The marginal contributions $C^1_S(i^{\star})$, (a), and $C^2_S(i^{\star})$, (b), of $i^{\star}$ to a set $S$ of size $j$. }
\label{f:lb}
\end{figure}
We first give a technical overview. 
\begin{itemize}
\item We construct two functions $C^1$ and $C^2$ which are indistinguishable from samples but that have an element $i^{\star}$ for which its marginal contribution differs by a factor of $1- \kappa$ for the two functions and then Lemma~\ref{l:imp} concludes the proof. 

\item The expected marginal contribution $\E_{S \sim \mathcal{U}_{j}| i^{\star} \not \in S}[C_S(i^{\star})]$ for both of these functions is illustrated in Figure~\ref{f:lb} as a function of $j$. Informally, $\E_{S \sim \mathcal{U}_{j}| i^{\star} \not \in S}[C_S(i^{\star})]$ is equal for both functions between $L$ and $H$ to obtain indistinguishability from samples since samples are of size between $L$ and $H$ with high probability. Combining this constraint with the submodular and curvature constraints, the gap between $\E_{S \sim \mathcal{U}_{j}| i^{\star} \not \in S}[C^1_S(i^{\star})]$ and $\E_{S \sim \mathcal{U}_{j}| i^{\star} \not \in S}[C^2_S(i^{\star})]$ is maximized for all $j < L$ and $j > H$ to maximize the gap in the Shapley value for $i^{\star}$.
\end{itemize}

  These two functions have a simpler definition via their marginal contributions, so we start by defining them in terms of these marginal contributions and we later give their formal definition to show that they are well-defined. The marginal contributions of $i^{\star}$ are illustrated in Figure~\ref{f:lb}.
$$C^1_S(i) = \begin{cases} 1 & \text{ if } |S| < L \\ 1 - \kappa & \text{ otherwise} \end{cases}$$
$$C^2_S(i^{\star}) = \begin{cases} 1 - \kappa & \text{ if } |S| \leq H \\ 1- \kappa - (|S| - H) \cdot \frac{1 - \kappa - (1 - \kappa)^2}{\sqrt{n}} & \text{ if } H < |S| \leq H+ \sqrt{n} \\ (1 - \kappa)^2 & \text{ otherwise} \end{cases}$$
For $i \neq i^{\star}$:
$$C^2_S(i) = \begin{cases} \frac{L - (1 - \kappa)}{L-1} & \text{ if } |S| < L-1 \text{ or } (|S| = L - 1 \text{ and } i^{\star} \in S) \\ 1 - \kappa & \text{ if } (|S| = L - 1 \text{ and } i^{\star} \not \in S) \text{ or } L \leq |S| \leq H \text{ or } \\ & \ \ \ (H \leq |S| \leq H + \sqrt{n} \text{ and } i^{\star} \not \in S) \\ 1- \kappa - \frac{1 - \kappa - (1 - \kappa)^2}{\sqrt{n}} & \text{ otherwise }    \end{cases}$$
The formal definitions of the functions are
$$C^1(S) = \begin{cases}  |S|  & \text{ if } |S| < L \\ L + (|S| - L) \cdot (1 - \kappa) & \text{ otherwise} \end{cases}$$
and
$$C^2(S) =  \begin{cases} 1_{i^{\star} \in S}  \cdot (1 - \kappa) + (|S| - 1_{i^{\star} \in S}) \cdot  \frac{L - (1 - \kappa)}{L-1}  & \text{if } |S| < L \\ 
                   L + (|S| - L) \cdot (1- \kappa) & \text{if } L \leq |S| \leq H \\ & \text{or } (H < |S| \leq H + \sqrt{n} \text{ and } i^{\star} \not \in S) \\
                          L + (|S| - L) \cdot (1- \kappa) & \\ \hspace{0.35cm} +  1- \kappa - (|S| - H) \cdot \frac{1 - \kappa - (1 - \kappa)^2}{\sqrt{n}} & \text{if } H < |S| \leq H + \sqrt{n} \text{ and } i^{\star}  \in S \\
                           L  + (H + \sqrt{n} - L) \cdot (1- \kappa) + 1_{i^{\star} \in S} \cdot (1 - \kappa)^2 &\\ \hspace{0.35cm} + (|S| - 1_{i^{\star} \in S} -  (H + \sqrt{n}))(1 - \kappa - \frac{1 - \kappa - (1 - \kappa)^2}{\sqrt{n}})  & \text{otherwise}
                          \end{cases}$$
The Shapley value of $i^{\star}$ with respect to $C^1$ and $C^2$ is then: 
$$\phi^{C^1}_{i^{\star}} = 1 \cdot  \frac{1 - \epsilon'}{2} + (1- \kappa) \cdot \frac{1 + \epsilon'}{2} \geq \frac{2 - \kappa}{2} - \epsilon$$
and 
$$\phi^{C^2}_{i^{\star}} \leq (1-\kappa) \cdot \frac{1 + \epsilon'}{2} + (1 - \kappa)^2 \cdot \frac{1 - \epsilon'}{2} \leq \frac{(1- \kappa)(2 - \kappa)}{2} + \epsilon$$
for an appropriate choice of $\epsilon'$. Next, by Chernoff bound and a union bound, $L \leq |S| \leq H$ for polynomially many samples $S$ from the uniform distirbution, with probability $1 - e^{-\Omega(n)}$. Thus, $C^1(S) = C^2(S)$ for all samples $S$ with probability $1 - e^{-\Omega(n)}$.

It remains to show that $C^1$ and $C^2$ are submodular with curvature $\kappa$, i.e., for any $S \subseteq T$ and $i \not \in T$, $$C_S(i) \geq C_T(i) \geq (1 - \kappa) C_S(i),$$ which is immediate for $C^1$. Regarding $C^2$, it is also immediate that $C^2_S(i^{\star}) \geq C^2_T(i^{\star}) \geq (1 - \kappa) C^2_S(i^{\star})$. For $i \neq i^{\star}$, observe that $$\frac{L - (1 - \kappa)}{L-1} \leq 1 + \epsilon\hspace{0.5cm} \text{ and } \hspace{0.5cm} 1- \kappa - \frac{1- \kappa - (1 - \kappa)^2}{\sqrt{n}} \geq 1 - \kappa - \epsilon,$$ so $C^2_S(i) \geq C^2_T(i) \geq (1 - \kappa - \epsilon) C^2_S(i)$.
\end{proof}

\subsection{Learnability does not imply approximability of the Shapley value}
Although the Shapley value is approximable for the class of submodular functions with bounded curvature, we show that the Shapley value of coverage (and submodular) functions are not approximable from samples in general.  The impossibility result is information theoretic and is not due to computational limitations. Coverage functions are an interesting class of functions because they are learnable from samples over any distribution \cite{badanidiyuru2012sketching}, according to the PMAC learning model  \cite{balcan2011learning}, which is a generalization of PAC learnability  for real valued functions. In addition, by Theorem~\ref{p:controlledD}, coverage functions have Shapley value that can  efficiently be approximated arbitrarily well in the value query model. Thus, this impossibility result implies that learnability and approximability in the value query model are not strong enough conditions for approximability of the Shapley value from samples.

\begin{theorem}
\label{thm:lower}
 There exists no constant $\alpha > 0$ such that coverage functions have Shapley value that is $\alpha$-approximable from samples over the uniform distribution.
\end{theorem}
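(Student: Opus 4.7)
The plan is to apply Lemma~\ref{l:imp} by exhibiting two coverage functions $C^1, C^2$ on $N = [n]$ with polynomial range that are statistically indistinguishable under the uniform distribution, yet assign Shapley values to some fixed player $i^\star = 1$ whose ratio is $\Theta(1/n)$. Lemma~\ref{l:imp} will then rule out $\alpha$-approximation for every constant $\alpha > 0$.

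I will introduce $m = n^2$ ``shared'' universe elements $v_1, \ldots, v_m$ together with a private element $w$, and fix blocking sets $B_1, \ldots, B_m \subseteq N \setminus \{1\}$, each of size $\lfloor n/3 \rfloor$. In both $C^1$ and $C^2$, every player $j \neq 1$ will own $T_j = \{v_k : j \in B_k\}$; the only difference is that player $1$ owns $\{w\}$ in $C^1$ but $\{w\} \cup \{v_1, \ldots, v_m\}$ in $C^2$. Both are genuine coverage functions whose maximum value is at most $m+1 = n^2 + 1$. The indistinguishability step is then a routine union bound: whenever $1 \notin S$ the two functions agree automatically, and whenever $1 \in S$ they agree on the event that $S \setminus \{1\}$ meets every $B_k$, whose complement has probability at most $m \cdot 2^{-\lfloor n/3 \rfloor} = e^{-\Omega(n)}$.

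The Shapley gap will come from the observation that in $C^1$ player $1$ owns only $w$ exclusively, so $\phi_1^{C^1} = 1$; whereas in $C^2$ each $v_k$ additionally contributes to $\phi_1$ the probability that no element of $B_k$ precedes $1$ in a uniformly random permutation, namely $1/(|B_k|+1) = \Theta(1/n)$. Summing the $m = n^2$ contributions will yield $\phi_1^{C^2} = \Theta(n)$, so $\phi_1^{C^1}/\phi_1^{C^2} = \Theta(1/n)$, which drops below $\alpha^2$ for every constant $\alpha > 0$ once $n$ is large enough, and Lemma~\ref{l:imp} finishes the proof.

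The hard part will be tuning the construction so that three competing demands hold simultaneously: each $B_k$ must be linear in $n$ so that the per-sample disagreement probability is exponentially small enough to survive a union bound over $\poly(n)$ samples and $m$ hidden elements, yet each per-element Shapley contribution $1/(|B_k|+1)$ multiplied by $m$ must blow up; at the same time $m$ and the universe size must remain polynomial so that $\max_S C^i(S) = \poly(n)$. The choice $|B_k| = \lfloor n/3 \rfloor$ and $m = n^2$ threads this needle, and any linear size $|B_k| = cn$ with $c \in (0,1)$ would serve equally well.
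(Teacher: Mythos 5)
Your proposal is correct, and it reaches the conclusion via a genuinely different construction than the paper's. The paper partitions $N$ into two equal halves $A$ and $B$, gives every player in one half the single universe element $\{a\}$ and every player in the other half the block $\{b_1,\ldots,b_{1/\alpha^2}\}$, and exploits the symmetry: a uniform sample almost surely hits both halves, so $C^1(S)=C^2(S)=1+1/\alpha^2$ and the algorithm cannot tell which half is the heavy one, while each player's Shapley value is $2/n$ or $2/(\alpha^2 n)$ depending on its half. That construction is very short, but it is parametrized by $\alpha$ (the universe has size $1/\alpha^2+1$) and yields only the constant-factor gap $\alpha^2$, so a fresh pair of functions is needed for each constant. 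Your construction instead hides $n^2$ universe elements behind linear-size blocking sets and exposes them all through a single player, producing one fixed pair $C^1,C^2$ with $\phi_1^{C^1}/\phi_1^{C^2}=\Theta(1/n)$; this handles all constants $\alpha$ simultaneously and in fact rules out any $\alpha=\omega(1/\sqrt{n})$ approximation, a quantitatively stronger statement, at the cost of a slightly more involved verification (the union bound over the $m=n^2$ blocking sets and the $1/(|B_k|+1)$ permutation computation, both of which you carry out correctly). Both arguments feed into Lemma~\ref{l:imp} in the same way, and all the side conditions (polynomial range, genuine coverage structure, $\phi_i\ge 1/\poly(n)$) are satisfied in your construction.
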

\begin{proof}
Partition $N$ into two parts $A$ and $B$ of equal size. Consider the following two functions:
$$ C^1(S) = \begin{cases} 0  & \text{if } S = \emptyset \\ 1 & \text{if } |S \cap A| > 0 \text{,   } |S \cap B| = 0  \\ \frac{1}{\alpha^2} & \text{if } |S \cap A| = 0 \text{,   } |S \cap B| > 0 \\ 1 + \frac{1}{\alpha^2} & \text{otherwise}\end{cases}  \hspace{0.3cm}    C^2(S) = \begin{cases} 0  & \text{if } S = \emptyset \\ 1 & \text{if } |S \cap B| > 0 \text{,   } |S \cap A| = 0  \\ \frac{1}{\alpha^2} & \text{if } |S \cap B| = 0 \text{,   } |S \cap A| > 0 \\ 1 + \frac{1}{ \alpha^2} & \text{otherwise}\end{cases}$$
These functions are coverage functions with $U = \{a, b_1, \ldots, b_{1/ \alpha^2}\}$ and $T_i = \{a\}$ or $T_i = \{b_1, \ldots, b_{1/ \alpha^2}\}$.  By the Chernoff bound (Lemma~\ref{l:chernoff}) with $\delta = 1/2$ and $\mu = n/2$, if $S$ is a sample from the uniform distribution, then $$ \Pr\left[|S \cap A| = 0\right] = \Pr\left[|S \cap B| = 0\right] < \Pr\left[|S \cap B| \leq n / 4\right] \leq e^{-n/16},$$ so $C^1(S) \neq C^2(S)$  with probability at most $2e^{-n/16}$.  It is easy to see that for any $i$, its Shapley value is either $2/n$ or $2/(\alpha^2 n)$ depending on which partition it is in. Combining this with Lemma~\ref{l:imp} concludes the proof.
\end{proof}

\section{Data Dependent Shapley Value}

The general impossibility result for computing the Shapley value from samples arises from the fact that the concept was geared towards the query model, where the algorithm can ask for the cost of any set $S \subseteq N$.  In this section, we develop an analogue that is data- or distribution-dependent. We denote it by $\phi^{C, \mathcal{D}}$  with respect to both $C$ and $\mathcal{D}$.  We define four natural distribution-dependent axioms resembling the Shapley value axioms, and then prove that our proposed value is the unique solution satisfying them. This value can be approximated arbitrarily well in the statistical model for all functions. We start by stating the four axioms.

\begin{definition}
\label{d:ddshapley}
The \emph{data-dependent axioms} for cost sharing functions are: 
\begin{itemize}
\item \textbf{Balance:} $\sum_{i \in N} \phi^{\mathcal{D}}_i = \E_{S \sim \mathcal{D}}[C(S)]$,
\item \textbf{Symmetry:} for all $i$ and  $j$, if $\Pr_{S \sim \mathcal{D}}\left[|S \cap \{i, j\}| =1\right] = 0 $ then $\phi^{\mathcal{D}}_i = \phi^{\mathcal{D}}_j$,
\item \textbf{Zero element:} for all $i$, if
$\Pr_{S \sim \mathcal{D}}\left[i \in S\right] = 0 $ then $\phi^{\CD}_i = 0$,
\item \textbf{Additivity:} for all $i$, if $\mathcal{D}_1$, $\mathcal{D}_2$, $\alpha$, and $\beta$ such that $\alpha + \beta = 1$,
$\phi^{\alpha \mathcal{D}_1 + \beta \mathcal{D}_2}_i = \alpha \phi^{\mathcal{D}_1}_i + \beta \phi^{\mathcal{D}_2}_i$
where $\Pr\left[S \sim \alpha \mathcal{D}_1 + \beta \mathcal{D}_2\right] = \alpha \cdot \Pr\left[S \sim \mathcal{D}_1\right] + \beta \cdot \Pr\left[S \sim \mathcal{D}_2\right]$.
\end{itemize}
\end{definition}
The similarity to the original Shapley value axioms is readily apparent. The main distinction is that we expect these to hold with regard to $\CD$, which captures the frequency with which different coalitions $S$ occur.  Note that we no longer require that $\CD$ has full support over $2^N$.  Interpreting the axioms one by one, the balance property ensures that the expected cost is always accounted for. The symmetry axiom states that if two elements always occur together, they should have the same share, since they are indistinguishable. If an element is never observed, then it should have zero share. Finally costs should combine in a linear manner according to the distribution.  

These axioms are specifically designed to provide some guarantees on the shares of elements to functions with complex interactions where recovery is hard from samples.

%
 We define the \emph{data-dependent Shapley value}:
$$\phi^{\CD}_i := \sum_{S \given i \in S} \Pr\left[S \sim \mathcal{D}\right] \cdot \frac{C(S)}{|S|}.$$
Informally, for all set $S$, the cost $C(S)$ is divided equally between all elements in $S$ and is weighted with the probability that $S$ occurs according to $\CD$. The main appeal of this cost allocation is the following theorem.
\begin{theorem}
\label{t:ddshapley}
The data-dependent Shapley value is the unique value satisfying the four data-dependent axioms.
\end{theorem}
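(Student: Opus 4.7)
The plan is to mirror the classical uniqueness proof for the Shapley value, but substituting convex combinations of point-mass distributions for the role usually played by unanimity games. First I would verify existence: each of the four axioms is a direct calculation on the formula $\phi^{\CD}_i = \sum_{S : i \in S} \Pr_{\CD}[S] \cdot C(S)/|S|$. Balance follows from swapping the order of summation and noting that each $S$ contributes $|S|$ copies of $C(S)/|S|$. Symmetry follows because the hypothesis $\Pr[|S \cap \{i,j\}|=1]=0$ means that in every set $S$ with $\Pr_{\CD}[S]>0$, either both or neither of $i,j$ appear, so the sums defining $\phi^{\CD}_i$ and $\phi^{\CD}_j$ range over identical terms. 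Zero element is immediate since $\Pr_{\CD}[i\in S]=0$ kills every summand. Additivity is linearity of $\Pr[S]$ in the mixing weights.

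For uniqueness, let $\psi^{\CD}$ be any rule satisfying the four axioms. The key idea is that any distribution on $2^N$ can be decomposed as $\CD = \sum_{T \subseteq N} p_T \, \delta_T$ with $p_T = \Pr_{\CD}[S = T]$, a (finite) convex combination of point masses $\delta_T$. I would first pin down $\psi^{\delta_T}$ for each $T$ using the other three axioms:
\begin{itemize}
\item If $i \notin T$ then $\Pr_{\delta_T}[i \in S] = 0$, so zero element forces $\psi^{\delta_T}_i = 0$.
\item If $i,j \in T$ then under $\delta_T$ either both or neither are sampled, so $\Pr_{\delta_T}[|S \cap \{i,j\}|=1]=0$ and symmetry gives $\psi^{\delta_T}_i = \psi^{\delta_T}_j$.
\item Balance then fixes the common value: $|T| \cdot \psi^{\delta_T}_i = \E_{\delta_T}[C(S)] = C(T)$, so $\psi^{\delta_T}_i = C(T)/|T|$ for $i \in T$ (with the $T=\emptyset$ case consistent since $C(\emptyset)=0$).
\end{itemize}

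Next I would extend additivity from the binary combinations given in the axiom to arbitrary finite convex combinations by induction on the size of the support. For the step, write $\CD = \alpha \CD' + (1-\alpha)\delta_{T^\star}$ where $T^\star$ is one atom of $\CD$ with $p_{T^\star} \in (0,1)$ and $\CD'$ is the renormalized distribution on the remaining atoms; apply the additivity axiom to peel off $\delta_{T^\star}$, then apply the inductive hypothesis to $\CD'$. This yields $\psi^{\CD}_i = \sum_T p_T \, \psi^{\delta_T}_i$, and substituting the formula from the previous paragraph gives exactly $\phi^{\CD}_i$.

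I expect no serious obstacle here; the argument is essentially bookkeeping once the right analogue of unanimity games (the point masses $\delta_T$) is identified. The only subtle point is ensuring that the axiomatically-given binary additivity really does propagate to arbitrary convex combinations of the finitely many $\delta_T$, which the induction above handles. A minor check is that the empty-set atom $\delta_\emptyset$ contributes nothing (every $\psi^{\delta_\emptyset}_i = 0$ by zero element, consistent with balance via $C(\emptyset)=0$), so it can be absorbed harmlessly into the decomposition.
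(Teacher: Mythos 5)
Your proof is correct and follows essentially the same route as the paper's: verify the axioms directly for the explicit formula, then for uniqueness decompose $\CD$ into point masses $\CD_T$, pin down the value on each point mass via zero element, symmetry, and balance, and recombine via additivity. Your explicit induction extending the binary additivity axiom to arbitrary finite convex combinations is a point the paper's proof glosses over, but it is the same argument made slightly more careful.
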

We first show that if there exists a value satisfying the axioms, it must be the data-dependent Shapley value. Then, we show that the data-dependent Shapley value satisfies the axioms, which concludes the proof.

\begin{lemma}
If there exists a value satisfying the four data-dependent Shapley axioms, then this value is the data-dependent Shapley value.
\end{lemma}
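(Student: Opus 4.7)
The plan mirrors the classical Shapley uniqueness argument, but we decompose over \emph{distributions} rather than over cost functions, since additivity here is on the distribution side. Specifically, for each $T \subseteq N$ let $\CD_T$ denote the point-mass distribution that puts all its weight on the single coalition $T$. Any distribution $\CD$ on $2^N$ can then be written as the convex combination
$$\CD \;=\; \sum_{T \subseteq N} p_T\, \CD_T, \qquad p_T := \Pr_{S \sim \CD}[S = T],$$
and by iterating the additivity axiom finitely many times (writing $\CD = p_{T_1}\CD_{T_1} + (1-p_{T_1})\CD'$ and recursing on $\CD'$), any value $\phi$ satisfying the axioms must satisfy
$$\phi^{\CD}_i \;=\; \sum_{T \subseteq N} p_T\, \phi^{\CD_T}_i$$
for every $i \in N$. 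Hence it suffices to pin down $\phi^{\CD_T}_i$ for each single-coalition distribution $\CD_T$.

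Fix $T$ and split into cases. If $i \notin T$, then $\Pr_{S \sim \CD_T}[i \in S] = 0$, so the zero-element axiom forces $\phi^{\CD_T}_i = 0$. If $i, j \in T$, then under $\CD_T$ the event $|S \cap \{i,j\}| = 1$ never occurs (both elements are always present), so the symmetry axiom gives $\phi^{\CD_T}_i = \phi^{\CD_T}_j$; hence every element of $T$ receives a common share $s_T$. The balance axiom then yields
$$|T|\cdot s_T \;=\; \sum_{i \in N} \phi^{\CD_T}_i \;=\; \EU[S \sim \CD_T][C(S)] \;=\; C(T),$$
so $s_T = C(T)/|T|$ (the case $T = \emptyset$ is consistent, since then $C(\emptyset) = 0$ and the sum is empty). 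Combining, $\phi^{\CD_T}_i = \mathds{1}_{i \in T}\cdot C(T)/|T|$.

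Substituting back into the additivity decomposition,
$$\phi^{\CD}_i \;=\; \sum_{T \subseteq N} p_T\, \phi^{\CD_T}_i \;=\; \sum_{T \ni i} \Pr_{S \sim \CD}[S = T] \cdot \frac{C(T)}{|T|},$$
which is exactly the data-dependent Shapley value in the definition. The only mildly delicate step is the iterated application of additivity, which is stated only for two-component mixtures; but since $2^N$ is finite one can simply induct on the number of coalitions in the support of $\CD$ (renormalizing the residual distribution at each stage, which is legitimate whenever $p_{T_1} < 1$; the edge case $p_{T_1} = 1$ reduces to a single $\CD_T$). No substantive obstacle remains, and the companion lemma (verifying that the explicit formula does satisfy all four axioms) is a direct calculation that completes the uniqueness claim into Theorem~\ref{t:ddshapley}.
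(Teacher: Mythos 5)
Your proof is correct and follows essentially the same route as the paper's: decompose $\CD$ into point-mass distributions $\CD_T$, pin down $\phi^{\CD_T}$ using the zero-element, symmetry, and balance axioms, and recombine via additivity. You are in fact slightly more careful than the paper in spelling out how the two-component additivity axiom extends to the full convex combination by induction on the support.
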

\begin{proof}
Define $\CD_S$ to be the distribution such that $\Pr\left[S \sim \CD_S\right] = 1$. Observe that the unique value satisfying the balance, symmetry, and zero element axioms must satisfy 
$$\phi^{\CD_S}_i = \begin{cases} \frac{C(S)}{|S|} &  \text{ if } i\in S \\ 0 & \text{ otherwise.} \end{cases}$$
Since $\CD = \sum_{S} \Pr\left[S \sim \CD\right] \cdot \CD_S$, the unique value satisfying the four axioms must satisfy
$$\phi^{\CD}_i = \sum_{S} \Pr\left[S \sim \CD\right] \cdot \phi^{\CD_S}_i = \sum_{S : i \in S} \Pr\left[S \sim \CD\right] \cdot \frac{C(S)}{|S|}$$
where the first equality is by additivity and the second equality by the above observation.
\end{proof}
\begin{lemma}
The data-dependent Shapley value satisfies the four data-dependent Shapley axioms.
\end{lemma}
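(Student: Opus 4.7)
The plan is to verify each of the four data-dependent axioms directly from the definition
$$\phi^{\CD}_i = \sum_{S : i \in S} \Pr[S \sim \CD] \cdot \frac{C(S)}{|S|},$$
with each verification reducing to a short algebraic manipulation. There is no deep obstacle here; the only point requiring any thought is the symmetry axiom, where one must correctly interpret the hypothesis $\Pr_{S \sim \CD}[|S \cap \{i,j\}| = 1] = 0$.

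For \textbf{balance}, I would swap the order of summation:
$$\sum_{i \in N} \phi^{\CD}_i = \sum_{i \in N}\sum_{S : i \in S} \Pr[S \sim \CD]\cdot\frac{C(S)}{|S|} = \sum_{S \subseteq N} \Pr[S \sim \CD]\cdot \frac{C(S)}{|S|}\cdot |S| = \EU[S \sim \CD][C(S)],$$
using that each set $S$ contributes exactly $|S|$ terms to the double sum.

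For \textbf{symmetry}, the key observation is that $\Pr[|S \cap \{i,j\}| = 1] = 0$ means that every set $S$ in the support of $\CD$ either contains both $i$ and $j$ or neither. Hence the sum defining $\phi^{\CD}_i$ ranges effectively over $\{S : i,j \in S\}$, and likewise for $\phi^{\CD}_j$, giving equality term by term. For \textbf{zero element}, if $\Pr[i \in S]=0$ then every set $S$ with $i \in S$ has $\Pr[S \sim \CD]=0$, so every summand vanishes and $\phi^{\CD}_i = 0$.

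For \textbf{additivity}, I would use linearity: since $\Pr[S \sim \alpha \CD_1 + \beta \CD_2] = \alpha \Pr[S \sim \CD_1] + \beta \Pr[S \sim \CD_2]$, substituting into the definition and splitting the sum yields
$$\phi^{\alpha \CD_1 + \beta \CD_2}_i = \sum_{S : i \in S} \bigl(\alpha \Pr[S \sim \CD_1] + \beta \Pr[S \sim \CD_2]\bigr) \frac{C(S)}{|S|} = \alpha\,\phi^{\CD_1}_i + \beta\,\phi^{\CD_2}_i.$$
This completes the verification, and combined with the previous lemma establishes uniqueness and existence, proving Theorem~\ref{t:ddshapley}.
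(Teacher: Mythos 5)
Your proposal is correct and follows essentially the same route as the paper's proof: balance by swapping the order of summation, symmetry by observing that the supports $\{S : i \in S,\ \Pr[S \sim \CD] > 0\}$ and $\{S : j \in S,\ \Pr[S \sim \CD] > 0\}$ coincide, zero element by noting every summand vanishes, and additivity by linearity of the mixture probabilities. No gaps.
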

\begin{proof} We show that each axiom is satisfied.
\begin{itemize}
\item \textbf{Balance:} By definition,
$\sum_{i \in N} \phi^{\mathcal{D}}_i = \sum_{i \in N}\sum_{S:i \in S} \Pr\left[S \sim \mathcal{D}\right] C(S) /|S|$, then by switching the order of the summations, $$\sum_{i \in N}\sum_{S : i \in S} \Pr\left[S \sim \mathcal{D}\right] \frac{C(S)}{|S|} =   \sum_{S \subseteq N} \sum_{i \in S} \Pr\left[S \sim \mathcal{D}\right] \frac{C(S)}{|S|} = \sum_{S \subseteq N} \Pr\left[S \sim \mathcal{D}\right]  C(S) = \EU[S \sim \CD][C(S)].$$
\item \textbf{Symmetry:} Let $\CS_i = \{S: i \in S, \Pr\left[S \sim \CD\right] > 0\}$. If $\Pr_{S \sim \mathcal{D}}\left[|S \cap \{i, j\}| =1\right] = 0$, then $\CS_{i} = \CS_{j}$ and 
$$\phi^{\mathcal{D}}_i = \sum_{S \in \CS_{i}} \Pr\left[S \sim \mathcal{D}\right]  \frac{C(S)}{|S|} = \sum_{S \in \CS_{j}} \Pr\left[S \sim \mathcal{D}\right] \frac{C(S)}{|S|} = \phi^{\mathcal{D}}_j.$$
\item \textbf{Zero element:} If $\Pr_{S \sim \mathcal{D}}\left[i \in S\right] = 0$, then $\Pr\left[S \sim \CD\right] = 0$ if $i \in S$. Thus,  $\phi^{\CD}_i = 0$.
\item \textbf{Additivity:} By definition of $\phi$ and $\alpha \mathcal{D}_1 + \beta \mathcal{D}_2$, 
\begin{align*}
\phi^{\alpha \mathcal{D}_1 + \beta \mathcal{D}_2}_i & =  \sum_{S \given i \in S} \Pr\left[S \sim \alpha \mathcal{D}_1 + \beta \mathcal{D}\right]  \frac{C(S)}{|S|} 
 \\
& = \alpha \sum_{S: i \in S}  \Pr\left[S \sim \mathcal{D}_1\right]   \frac{C(S)}{|S|} + \beta \sum_{S:i \in S} \Pr\left[S \sim \mathcal{D}_2\right]  \frac{C(S)}{|S|}  \\
& = \alpha \phi^{\mathcal{D}_1}_i + \beta \phi^{\mathcal{D}_2}_i.
\end{align*}
\end{itemize}
\end{proof}

The data-dependent Shapley value can be approximated from samples  with the following empirical data-dependent Shapley value:
$$\tilde{\phi}^{\CD}_i = \frac{1}{m} \sum_{S_j \given i \in S_j} \frac{C(S_j)}{|S_j|}.$$

These estimates are arbitrarily good with arbitrarily high probability. 

\begin{theorem} 
\label{t:apxddshapley}
The empirical data-dependent Shapley value approximates the data-dependent Shapley value arbitrarily well, i.e., $|\tilde{\phi}^{\CD}_i - \phi^{\CD}_i| < \epsilon$ with $\poly(n, 1/\epsilon, 1/\delta)$ samples and with probability at least $1 - \delta$ for any $\delta, \epsilon > 0$.
\end{theorem}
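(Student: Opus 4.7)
The plan is a direct application of a concentration inequality to the natural unbiased estimator, followed by a union bound over the $n$ elements.

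First, I would fix $i \in N$ and define, for each sample $S_j$, the random variable
$$X_j^{(i)} := \frac{C(S_j)}{|S_j|} \cdot \mathds{1}_{i \in S_j},$$
with the convention $X_j^{(i)} = 0$ when $i \notin S_j$ (so $|S_j|$ in the denominator is well-defined whenever the indicator is nonzero). The empirical data-dependent Shapley value is exactly $\tilde{\phi}^{\CD}_i = \tfrac{1}{m}\sum_{j=1}^{m} X_j^{(i)}$. Since the $S_j$ are drawn i.i.d.\ from $\CD$, so are the $X_j^{(i)}$, and
$$\E\!\left[X_j^{(i)}\right] = \sum_{S : i \in S} \Pr[S \sim \CD]\cdot \frac{C(S)}{|S|} = \phi^{\CD}_i,$$
so $\tilde{\phi}^{\CD}_i$ is an unbiased estimator of $\phi^{\CD}_i$.

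Next, I would observe that each $X_j^{(i)}$ is bounded: since $|S_j| \ge 1$ whenever $i \in S_j$ and $C(S) \ge 0$ is bounded by a polynomial in $n$ by assumption, we have $X_j^{(i)} \in [0, M]$ for $M = \max_S C(S) = \poly(n)$. Hoeffding's inequality then gives
$$\Pr\!\left[\left|\tilde{\phi}^{\CD}_i - \phi^{\CD}_i\right| \ge \epsilon\right] \le 2\exp\!\left(-\frac{2 m \epsilon^2}{M^2}\right).$$
Taking $m = \Theta\!\left(M^2 \epsilon^{-2}\log(n/\delta)\right) = \poly(n, 1/\epsilon, 1/\delta)$ makes the right-hand side at most $\delta/n$ for the chosen $i$.

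Finally, a union bound over the $n$ elements of $N$ yields $|\tilde{\phi}^{\CD}_i - \phi^{\CD}_i| < \epsilon$ simultaneously for all $i \in N$ with probability at least $1-\delta$, which proves the theorem. There is essentially no hard step here; the only small point worth stating carefully is that $|S_j| \ge 1$ on the event $\{i \in S_j\}$, so the summand $C(S_j)/|S_j|$ is always well-defined and bounded by $\max_S C(S)$, which is the quantity controlling the sample complexity.
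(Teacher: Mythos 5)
Your proposal is correct and follows essentially the same route as the paper: define the indicator-weighted random variable $X_j = \mathds{1}_{i \in S_j}\,C(S_j)/|S_j|$, note that the empirical value is an unbiased estimator of $\phi^{\CD}_i$ bounded in $[0,\poly(n)]$, and apply Hoeffding's inequality. The union bound over the $n$ elements is a harmless (and sensible) addition beyond what the paper writes explicitly.
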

\begin{proof}
Define $X_j = \begin{cases} \frac{C(S_j)}{ |S_j|} & \text{ if } i \in S_j \\ 0 & \text{ otherwise} \end{cases}$
and observe that $(\sum_{j=1}^m X_j)/m = \tilde{\phi}^{\CD}_i$ and\\ $\E[(\sum_{j=1}^m X_j)/m] = \phi^{\CD}_i$. Clearly, $X_j \in [0, b]$ where $b := \max_S C(S) / |S|$, so by Hoeffding's inequality,
$\Pr\left[|\tilde{\phi}^{\CD}_i - \phi^{\CD}_i| \geq \epsilon\right] \leq 2e^{-\frac{2m \epsilon^{2}}{\poly(n)}}$ with $0 < \epsilon < 1$.
\end{proof}

\section{Discussion and Future Work}

We follow a recent line of work that studies classical algorithmic problems from a statistical perspective, where the input is restricted to a collection of samples.  Our results fall into two categories, we give results for approximating the Shapley value and the core and propose new cost sharing concepts that are tailored for the statistical  framework. We use techniques from multiple fields that encompass statistical machine learning, combinatorial optimization, and, of course, cost sharing. The cost sharing literature being very rich, the number of directions for future work are considerable. Obvious avenues include studying other cost sharing methods in this statistical framework, considering other classes of functions to approximate known methods, and improving the sample complexity of previous algorithms. More conceptually, an exciting modeling question arises when designing ``desirable" axioms from data. Traditionally these axioms only depended on the cost function, whereas in this model they can depend on both the cost function and the distribution, providing an interesting interplay.

\newpage
\bibliographystyle{plainnat}
\bibliography{biblio} 

\begin{thebibliography}{44}
\providecommand{\natexlab}[1]{#1}
\providecommand{\url}[1]{\texttt{#1}}
\expandafter\ifx\csname urlstyle\endcsname\relax
  \providecommand{\doi}[1]{doi: #1}\else
  \providecommand{\doi}{doi: \begingroup \urlstyle{rm}\Url}\fi

\bibitem[Anshelevich et~al.(2008)Anshelevich, Dasgupta, Kleinberg, Tardos,
  Wexler, and Roughgarden]{anshelevich2008price}
Elliot Anshelevich, Anirban Dasgupta, Jon Kleinberg, Eva Tardos, Tom Wexler,
  and Tim Roughgarden.
\newblock The price of stability for network design with fair cost allocation.
\newblock \emph{SIAM Journal on Computing}, 38\penalty0 (4):\penalty0
  1602--1623, 2008.

\bibitem[Bachrach et~al.(2010)Bachrach, Markakis, Resnick, Procaccia,
  Rosenschein, and Saberi]{bachrach2010approximating}
Yoram Bachrach, Evangelos Markakis, Ezra Resnick, Ariel~D Procaccia, Jeffrey~S
  Rosenschein, and Amin Saberi.
\newblock Approximating power indices: theoretical and empirical analysis.
\newblock \emph{Autonomous Agents and Multi-Agent Systems}, 20\penalty0
  (2):\penalty0 105--122, 2010.

\bibitem[Badanidiyuru et~al.(2012)Badanidiyuru, Dobzinski, Fu, Kleinberg,
  Nisan, and Roughgarden]{badanidiyuru2012sketching}
Ashwinkumar Badanidiyuru, Shahar Dobzinski, Hu~Fu, Robert Kleinberg, Noam
  Nisan, and Tim Roughgarden.
\newblock Sketching valuation functions.
\newblock In \emph{Proceedings of the twenty-third annual ACM-SIAM symposium on
  Discrete Algorithms}, pages 1025--1035. Society for Industrial and Applied
  Mathematics, 2012.

\bibitem[Balcan and Braverman(2009)]{balcan2009finding}
Maria-Florina Balcan and Mark Braverman.
\newblock Finding low error clusterings.
\newblock In \emph{COLT}, volume~3, pages 3--4, 2009.

\bibitem[Balcan and Harvey(2011)]{balcan2011learning}
Maria-Florina Balcan and Nicholas~JA Harvey.
\newblock Learning submodular functions.
\newblock In \emph{Proceedings of the forty-third annual ACM symposium on
  Theory of computing}, pages 793--802. ACM, 2011.

\bibitem[Balcan et~al.(2009)Balcan, R{\"o}glin, and Teng]{balcan2009agnostic}
Maria~Florina Balcan, Heiko R{\"o}glin, and Shang-Hua Teng.
\newblock Agnostic clustering.
\newblock In \emph{International Conference on Algorithmic Learning Theory},
  pages 384--398. Springer, 2009.

\bibitem[Balcan et~al.(2012)Balcan, Constantin, Iwata, and
  Wang]{balcan2012learning}
Maria-Florina Balcan, Florin Constantin, Satoru Iwata, and Lei Wang.
\newblock Learning valuation functions.
\newblock In \emph{COLT}, volume~23, pages 4--1, 2012.

\bibitem[Balcan et~al.(2015)Balcan, Procaccia, and Zick]{BPZ-15}
Maria{-}Florina Balcan, Ariel~D. Procaccia, and Yair Zick.
\newblock Learning cooperative games.
\newblock In \emph{Proceedings of the Twenty-Fourth International Joint
  Conference on Artificial Intelligence, {IJCAI} 2015, Buenos Aires, Argentina,
  July 25-31, 2015}, pages 475--481, 2015.

\bibitem[Balcan et~al.(2016)Balcan, Procaccia, and Zick]{BPZ-16}
Maria-Florina Balcan, Ariel~D Procaccia, and Yair Zick.
\newblock Learning cooperative games.
\newblock \emph{arXiv preprint arXiv:1505.00039v2}, 2016.

\bibitem[Balkanski and Singer(2015)]{balkanski2015mechanisms}
Eric Balkanski and Yaron Singer.
\newblock Mechanisms for fair attribution.
\newblock In \emph{Proceedings of the Sixteenth ACM Conference on Economics and
  Computation}, pages 529--546. ACM, 2015.

\bibitem[Balkanski et~al.(2016)Balkanski, Rubinstein, and Singer]{curvature}
Eric Balkanski, Aviad Rubinstein, and Yaron Singer.
\newblock The power of optimization from samples.
\newblock In \emph{Advances in Neural Information Processing Systems}, 2016.

\bibitem[Balkanski et~al.(2017)Balkanski, Rubinstein, and Singer]{BRS17}
Eric Balkanski, Aviad Rubinstein, and Yaron Singer.
\newblock The limitations of optimization from samples.
\newblock \emph{49th Annual ACM Symposium on the Theory of Computing}, 2017.

\bibitem[Bondareva(1963)]{bondareva1963some}
Olga~N Bondareva.
\newblock Some applications of linear programming methods to the theory of
  cooperative games.
\newblock \emph{Problemy kibernetiki}, 10:\penalty0 119--139, 1963.

\bibitem[Chawla et~al.(2014)Chawla, Hartline, and
  Nekipelov]{chawla2014mechanism}
Shuchi Chawla, Jason Hartline, and Denis Nekipelov.
\newblock Mechanism design for data science.
\newblock In \emph{Proceedings of the fifteenth ACM conference on Economics and
  computation}, pages 711--712. ACM, 2014.

\bibitem[Cole and Roughgarden(2014)]{cole2014sample}
Richard Cole and Tim Roughgarden.
\newblock The sample complexity of revenue maximization.
\newblock In \emph{Proceedings of the 46th Annual ACM Symposium on Theory of
  Computing}, pages 243--252. ACM, 2014.

\bibitem[Deng et~al.(1999)Deng, Ibaraki, and Nagamochi]{deng1999algorithmic}
Xiaotie Deng, Toshihide Ibaraki, and Hiroshi Nagamochi.
\newblock Algorithmic aspects of the core of combinatorial optimization games.
\newblock \emph{Mathematics of Operations Research}, 24\penalty0 (3):\penalty0
  751--766, 1999.

\bibitem[Devanur et~al.(2005)Devanur, Mihail, and
  Vazirani]{devanur2005strategyproof}
Nikhil~R Devanur, Milena Mihail, and Vijay~V Vazirani.
\newblock Strategyproof cost-sharing mechanisms for set cover and facility
  location games.
\newblock \emph{Decision Support Systems}, 39\penalty0 (1):\penalty0 11--22,
  2005.

\bibitem[Dughmi et~al.(2014)Dughmi, Han, and Nisan]{dughmi2014sampling}
Shaddin Dughmi, Li~Han, and Noam Nisan.
\newblock Sampling and representation complexity of revenue maximization.
\newblock In \emph{International Conference on Web and Internet Economics},
  pages 277--291. Springer, 2014.

\bibitem[Fatima et~al.(2008)Fatima, Wooldridge, and Jennings]{fatima2008linear}
Shaheen~S Fatima, Michael Wooldridge, and Nicholas~R Jennings.
\newblock A linear approximation method for the shapley value.
\newblock \emph{Artificial Intelligence}, 172\penalty0 (14):\penalty0
  1673--1699, 2008.

\bibitem[Feigenbaum et~al.(2000)Feigenbaum, Papadimitriou, and
  Shenker]{feigenbaum2000sharing}
Joan Feigenbaum, Christos Papadimitriou, and Scott Shenker.
\newblock Sharing the cost of muliticast transmissions (preliminary version).
\newblock In \emph{Proceedings of the thirty-second annual ACM symposium on
  Theory of computing}, pages 218--227. ACM, 2000.

\bibitem[Feldman and Kothari(2014)]{feldman2014learning}
Vitaly Feldman and Pravesh Kothari.
\newblock Learning coverage functions and private release of marginals.
\newblock In \emph{COLT}, pages 679--702, 2014.

\bibitem[Feldman and Vondrak(2014)]{feldman2014optimal}
Vitaly Feldman and Jan Vondrak.
\newblock Optimal bounds on approximation of submodular and xos functions by
  juntas.
\newblock In \emph{Information Theory and Applications Workshop (ITA), 2014},
  pages 1--10. IEEE, 2014.

\bibitem[Gillies(1959)]{gillies1959solutions}
Donald~B Gillies.
\newblock Solutions to general non-zero-sum games.
\newblock \emph{Contributions to the Theory of Games}, 4\penalty0
  (40):\penalty0 47--85, 1959.

\bibitem[Goemans and Skutella(2004)]{goemans2004cooperative}
Michel~X Goemans and Martin Skutella.
\newblock Cooperative facility location games.
\newblock \emph{Journal of Algorithms}, 50\penalty0 (2):\penalty0 194--214,
  2004.

\bibitem[Goodfellow et~al.(2014)Goodfellow, Shlens, and Szegedy]{AdversarialNN}
Ian~J. Goodfellow, Jonathon Shlens, and Christian Szegedy.
\newblock Explaining and harnessing adversarial examples.
\newblock \emph{CoRR}, abs/1412.6572, 2014.
\newblock URL \url{http://arxiv.org/abs/1412.6572}.

\bibitem[Immorlica et~al.(2008)Immorlica, Mahdian, and
  Mirrokni]{immorlica2008limitations}
Nicole Immorlica, Mohammad Mahdian, and Vahab~S Mirrokni.
\newblock Limitations of cross-monotonic cost-sharing schemes.
\newblock \emph{ACM Transactions on Algorithms (TALG)}, 4\penalty0
  (2):\penalty0 24, 2008.

\bibitem[Iyer and Bilmes(2013)]{iyer2013submodular}
Rishabh~K Iyer and Jeff~A Bilmes.
\newblock Submodular optimization with submodular cover and submodular knapsack
  constraints.
\newblock In \emph{Advances in Neural Information Processing Systems}, pages
  2436--2444, 2013.

\bibitem[Iyer et~al.(2013)Iyer, Jegelka, and Bilmes]{iyer2013curvature}
Rishabh~K Iyer, Stefanie Jegelka, and Jeff~A Bilmes.
\newblock Curvature and optimal algorithms for learning and minimizing
  submodular functions.
\newblock In \emph{Advances in Neural Information Processing Systems}, pages
  2742--2750, 2013.

\bibitem[Jain and Vazirani(2002)]{jain2002equitable}
Kamal Jain and Vijay~V Vazirani.
\newblock Equitable cost allocations via primal-dual-type algorithms.
\newblock In \emph{Proceedings of the thiry-fourth annual ACM symposium on
  Theory of computing}, pages 313--321. ACM, 2002.

\bibitem[Liben-Nowell et~al.(2012)Liben-Nowell, Sharp, Wexler, and
  Woods]{liben2012computing}
David Liben-Nowell, Alexa Sharp, Tom Wexler, and Kevin Woods.
\newblock Computing shapley value in supermodular coalitional games.
\newblock In \emph{International Computing and Combinatorics Conference}, pages
  568--579. Springer, 2012.

\bibitem[Mann(1960)]{mann1960values}
Irwin Mann.
\newblock \emph{Values of large games, IV: Evaluating the electoral college by
  Montecarlo techniques}.
\newblock Rand Corporation, 1960.

\bibitem[Morgenstern and Roughgarden(2015)]{morgenstern2015pseudo}
Jamie~H Morgenstern and Tim Roughgarden.
\newblock On the pseudo-dimension of nearly optimal auctions.
\newblock In \emph{Advances in Neural Information Processing Systems}, pages
  136--144, 2015.

\bibitem[Moulin(1999)]{moulin1999incremental}
Herv{\'e} Moulin.
\newblock Incremental cost sharing: Characterization by coalition
  strategy-proofness.
\newblock \emph{Social Choice and Welfare}, 16\penalty0 (2):\penalty0 279--320,
  1999.

\bibitem[Moulin and Shenker(2001)]{moulin2001strategyproof}
Herv{\'e} Moulin and Scott Shenker.
\newblock Strategyproof sharing of submodular costs: budget balance versus
  efficiency.
\newblock \emph{Economic Theory}, 18\penalty0 (3):\penalty0 511--533, 2001.

\bibitem[P{\'a}l and Tardos(2003)]{pal2003group}
Martin P{\'a}l and {\'E}va Tardos.
\newblock Group strategy proof mechanisms via primal-dual algorithms.
\newblock In \emph{Foundations of Computer Science, 2003. Proceedings. 44th
  Annual IEEE Symposium on}, pages 584--593. IEEE, 2003.

\bibitem[Roth(1988)]{roth1988shapley}
Alvin~E Roth.
\newblock \emph{The Shapley value: essays in honor of Lloyd S. Shapley}.
\newblock Cambridge University Press, 1988.

\bibitem[Shalev-Shwartz and Ben-David(2014)]{ML}
Shai Shalev-Shwartz and Shai Ben-David.
\newblock \emph{Understanding machine learning: From theory to algorithms}.
\newblock 2014.

\bibitem[Shapley(1967)]{shapley1967balanced}
Lloyd~S Shapley.
\newblock On balanced sets and cores.
\newblock \emph{Naval research logistics quarterly}, 14\penalty0 (4):\penalty0
  453--460, 1967.

\bibitem[Shapley(1953)]{shapley2016value}
LS~Shapley.
\newblock A value for n-person games1.
\newblock 1953.

\bibitem[Sviridenko et~al.(2015)Sviridenko, Vondr{\'a}k, and
  Ward]{sviridenko2015optimal}
Maxim Sviridenko, Jan Vondr{\'a}k, and Justin Ward.
\newblock Optimal approximation for submodular and supermodular optimization
  with bounded curvature.
\newblock In \emph{Proceedings of the Twenty-Sixth Annual ACM-SIAM Symposium on
  Discrete Algorithms}, pages 1134--1148. SIAM, 2015.

\bibitem[Szegedy et~al.(2013)Szegedy, Zaremba, Sutskever, Bruna, Erhan,
  Goodfellow, and Fergus]{AdversarialNN2}
Christian Szegedy, Wojciech Zaremba, Ilya Sutskever, Joan Bruna, Dumitru Erhan,
  Ian~J. Goodfellow, and Rob Fergus.
\newblock Intriguing properties of neural networks.
\newblock \emph{CoRR}, abs/1312.6199, 2013.
\newblock URL \url{http://arxiv.org/abs/1312.6199}.

\bibitem[Von~Neumann and Morgenstern(1944)]{von2007theory}
John Von~Neumann and Oskar Morgenstern.
\newblock \emph{Theory of games and economic behavior}.
\newblock 1944.

\bibitem[Vondr{\'a}k(2010)]{vondrak2010submodularity}
Jan Vondr{\'a}k.
\newblock Submodularity and curvature: the optimal algorithm.
\newblock \emph{RIMS Kokyuroku Bessatsu B}, 23:\penalty0 253--266, 2010.

\bibitem[Winter(2002)]{winter2002shapley}
Eyal Winter.
\newblock The shapley value.
\newblock \emph{Handbook of game theory with economic applications},
  3:\penalty0 2025--2054, 2002.

\end{thebibliography}

\newpage

\appendix
\section*{Appendix}
\label{s:appendix}

\section{Concentration Bounds}
\label{s:concbounds}

\begin{lemma}[Chernoff Bound]
\label{l:chernoff}
Let $X_1, \dots, X_n$ be independent indicator random variables with values in $\{0,1\}$. Let $X = \sum_{i=1}^n X_i$ and $\mu = \E[X]$. For $0 < \delta < 1$,
$$\Pr\left[X \leq (1-\delta) \mu\right] \leq  e^{- \mu \delta^2 / 2} \ \ \ \text{ and } \ \ \ \Pr\left[X \geq (1+\delta) \mu\right] \leq  e^{- \mu \delta^2 / 3}.$$
\end{lemma}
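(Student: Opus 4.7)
The plan is to prove both tail bounds via the standard Bernstein--Chernoff exponential moment method: transform the tail event into a statement about $e^{tX}$, apply Markov's inequality, exploit independence to factor the moment generating function, then optimize the free parameter $t$.

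For the upper tail, I would first observe that for any $t > 0$, $\Pr[X \geq (1+\delta)\mu] = \Pr[e^{tX} \geq e^{t(1+\delta)\mu}] \leq \E[e^{tX}] e^{-t(1+\delta)\mu}$ by Markov. Writing $p_i = \E[X_i]$, independence gives $\E[e^{tX}] = \prod_i \E[e^{tX_i}] = \prod_i (1 - p_i + p_i e^t)$, and using $1 + x \leq e^x$ yields $\E[e^{tX}] \leq \prod_i \exp(p_i(e^t - 1)) = \exp(\mu(e^t - 1))$. Choosing the minimizer $t = \ln(1+\delta) > 0$ produces the bound $\bigl(e^{\delta}/(1+\delta)^{1+\delta}\bigr)^{\mu}$. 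It then remains to verify the analytic inequality $(1+\delta)\ln(1+\delta) - \delta \geq \delta^2/3$ for $\delta \in (0,1)$, which follows from a Taylor expansion or the standard estimate $\ln(1+\delta) \geq 2\delta/(2+\delta)$.

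For the lower tail, I would run the same argument with $t < 0$: $\Pr[X \leq (1-\delta)\mu] = \Pr[e^{tX} \geq e^{t(1-\delta)\mu}] \leq \E[e^{tX}] e^{-t(1-\delta)\mu}$, again bound $\E[e^{tX}] \leq \exp(\mu(e^t - 1))$, and optimize at $t = \ln(1-\delta) < 0$ to obtain $\bigl(e^{-\delta}/(1-\delta)^{1-\delta}\bigr)^{\mu}$. The required analytic fact here is $(1-\delta)\ln(1-\delta) + \delta \geq \delta^2/2$ for $\delta \in (0,1)$, which again follows from expanding $\ln(1-\delta) = -\sum_{k \geq 1} \delta^k/k$ and simplifying.

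The only nonroutine step is the calculus inequality at the end, in particular distinguishing why the upper tail only affords $\delta^2/3$ while the lower tail gives $\delta^2/2$; this asymmetry is inherent to the function $(1\pm\delta)\ln(1\pm\delta) \mp \delta$ and is where care is needed to get the constants the lemma states. Everything else (Markov, independence factorization, MGF bound via $1+x \leq e^x$, optimization) is mechanical.
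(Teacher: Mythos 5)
Your proposal is correct: the exponential moment method, the MGF bound $\E[e^{tX}] \leq \exp(\mu(e^t-1))$ via $1+x \leq e^x$, the optimal choices $t = \ln(1\pm\delta)$, and the two closing analytic inequalities (e.g.\ $(1+\delta)\ln(1+\delta)-\delta \geq \delta^2/(2+\delta) \geq \delta^2/3$ from $\ln(1+\delta) \geq 2\delta/(2+\delta)$, and the series expansion giving $\delta^2/2$ for the lower tail) all check out. The paper states this lemma as a standard concentration bound without proof, so there is nothing to compare against; your argument is the canonical one and fills that gap correctly.
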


\begin{lemma}[Hoeffding's inequality]
\label{l:hoeffding}
Let $X_1, \dots, X_n$ be independent random variables with values in $[0,b]$. Let $X = \frac{1}{m}\sum_{i=1}^m X_i$ and $\mu = \E[X]$. Then for every $0 < \epsilon < 1$,
$$\Pr\left[|X - \E[X]| \geq \epsilon\right] \leq 2 e^{- 2 m \epsilon^2 /  b^2}.$$
\end{lemma}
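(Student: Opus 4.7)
The plan is to prove Hoeffding's inequality via the classical Chernoff/moment-generating-function (MGF) technique, following Hoeffding's original argument.

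First I would establish the key auxiliary fact (Hoeffding's lemma): for any random variable $Y$ taking values in an interval $[a,b]$ with $\E[Y]=0$, the MGF satisfies $\E[e^{sY}] \leq e^{s^2(b-a)^2/8}$ for every $s \in \bbR$. The proof of this bound exploits the convexity of $y \mapsto e^{sy}$: writing $y = \lambda a + (1-\lambda) b$ with $\lambda = (b-y)/(b-a)$, convexity gives $e^{sy} \leq \lambda e^{sa} + (1-\lambda) e^{sb}$; taking expectations and using $\E[Y]=0$ yields an upper bound depending only on $s$, $a$, $b$. One then shows this bound is at most $e^{s^2(b-a)^2/8}$ by examining its logarithm $L(s)$, noting $L(0)=L'(0)=0$, and bounding $L''(s)\le (b-a)^2/4$ uniformly (the latter follows from viewing $L''(s)$ as the variance of a tilted distribution supported on $\{a,b\}$, which is at most $(b-a)^2/4$).

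Next, I would set $Y_i = X_i - \E[X_i]$, which lies in $[-\E[X_i],\, b-\E[X_i]]$, an interval of length $b$ and mean zero. For any $s>0$, Markov's inequality applied to the nonnegative random variable $e^{s\sum_i Y_i}$ gives
\[
\Pr\!\left[X - \E[X] \ge \epsilon\right] = \Pr\!\left[\sum_{i=1}^{m} Y_i \ge m\epsilon\right] \le e^{-sm\epsilon}\,\E\!\left[e^{s\sum_i Y_i}\right] = e^{-sm\epsilon} \prod_{i=1}^{m} \E[e^{sY_i}],
\]
where the last equality uses independence of $Y_1,\dots,Y_m$. By Hoeffding's lemma each factor is at most $e^{s^2 b^2/8}$, so the right-hand side is bounded by $\exp(-sm\epsilon + m s^2 b^2/8)$. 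Optimizing by taking $s = 4\epsilon/b^2$ yields $\Pr[X - \E[X] \ge \epsilon] \le e^{-2m\epsilon^2/b^2}$.

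Finally, applying the same argument to $-Y_i$ (which also has mean zero and lies in an interval of length $b$) gives the symmetric tail bound $\Pr[X - \E[X] \le -\epsilon] \le e^{-2m\epsilon^2/b^2}$, and a union bound produces the two-sided statement with the factor of $2$. The main obstacle is Hoeffding's lemma itself: although the Chernoff/optimize-over-$s$ machinery is routine, the convexity-plus-Taylor-expansion argument controlling the MGF of a zero-mean bounded random variable by $e^{s^2(b-a)^2/8}$ is the substantive technical step, and care is needed to verify $L''(s)\le (b-a)^2/4$ uniformly in $s$ rather than only at $s=0$.
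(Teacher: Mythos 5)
Your proof is correct: it is the standard Chernoff/MGF argument via Hoeffding's lemma, with the right centering (each $Y_i = X_i - \E[X_i]$ lives in an interval of length $b$), the correct optimization $s = 4\epsilon/b^2$ yielding the exponent $-2m\epsilon^2/b^2$, and a union bound for the two-sided form. The paper itself states this lemma in its appendix as a standard concentration bound and gives no proof at all, so there is nothing to compare against; your write-up supplies exactly the textbook derivation one would cite. (One cosmetic note: the paper's statement has a typo, introducing $X_1,\dots,X_n$ but summing to $m$; you correctly work with $m$ variables throughout.)
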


\section{Estimating the Expected Marginal Contribution of an Element}
\label{s:appprelim}
Recall that $\CS_i$ and $\CS_{-i}$ are the collections of samples containing element $i$ and not containing it respectively and that $\avg(\CS) = (\sum_{S \in \CS} C(S)) / |\CS| $ is the average value of the samples in $\CS$. Let $v_i = C(i)$ and $\tilde{v}_i = \avg(\CS_i) - \avg(\CS_{-i})$. 

\lconcentration*
This lemma is a special case of the following stronger lemma.
 \begin{lemma}
\label{l:concentration} The expected marginal contribution  of an element $i$ to a random set from a bounded product distribution $\CD$ not containing $i$ is estimated arbitrarily well by $\tilde{v}_i$, i.e.,  for all $i \in N $ and given $\poly(n, \nicefrac{1}{\delta}, \nicefrac{1}{\epsilon})$ samples,   
\begin{align*}
 & (1 - \epsilon) v_i \leq \tilde{v}_i \leq (1+\epsilon) v_i & \text{ if } & v_i \geq 1 / \poly(n) \\
& |v_i - \tilde{v}_i| \leq  \epsilon            & \text{if } &|v_i| < 1 / \poly(n) \\
 & (1 + \epsilon) v_i \leq \tilde{v}_i \leq  (1-\epsilon) v_i &  \text{ if } &v_i \leq -  1 / \poly(n)
\end{align*} 
   with probability at least $1 - \delta$ for any $\delta> 0$.
\end{lemma}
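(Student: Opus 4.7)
\textbf{Proof proposal for Lemma~\ref{l:concentration}.}

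The plan is to exploit the product structure of $\CD$ to rewrite $v_i$ as a difference of two \emph{unconditional-style} averages and then apply standard concentration. Since $\CD$ is a product distribution, conditioning on $i \in S$ versus $i \notin S$ leaves the marginal over the remaining coordinates unchanged. Consequently, for $S' := S \setminus \{i\}$ distributed according to the product of the other marginals,
\[
\EU[S \sim \CD \mid i \in S][C(S)] - \EU[S \sim \CD \mid i \notin S][C(S)]
= \EU[S'][C(S' \cup \{i\}) - C(S')] = v_i.
\]
Thus it suffices to show that $\avg(\CS_i)$ and $\avg(\CS_{-i})$ concentrate around these two conditional expectations.

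First I would use the bounded product distribution assumption: each marginal lies in $[1/\poly(n),\,1-1/\poly(n)]$, so $\Pr_{S \sim \CD}[i \in S]$ and $\Pr_{S \sim \CD}[i \notin S]$ are both at least $1/\poly(n)$. A Chernoff bound (Lemma~\ref{l:chernoff}) then guarantees that with $m = \poly(n, 1/\delta, 1/\epsilon)$ samples, both $|\CS_i|$ and $|\CS_{-i}|$ are at least $m/\poly(n)$ except with probability $\delta/3$. Conditioned on this event, the samples in $\CS_i$ are i.i.d.\ from $\CD$ conditioned on $i \in S$ (and analogously for $\CS_{-i}$), and the costs $C(S)$ lie in $[0,\poly(n)]$ by the bounded-cost assumption.

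Next, apply Hoeffding's inequality (Lemma~\ref{l:hoeffding}) separately to $\avg(\CS_i)$ and $\avg(\CS_{-i})$ with an additive error target $\eta$ to be chosen. Since $|\CS_i|, |\CS_{-i}| \ge m/\poly(n)$ and $b \le \poly(n)$, taking $m$ polynomial in $n, 1/\eta, 1/\delta$ guarantees that each empirical average is within $\eta$ of its expectation except with probability $\delta/3$. By a union bound, with probability at least $1-\delta$ we get
\[
|\tilde{v}_i - v_i| \le 2\eta.
\]

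Finally, I would convert this additive guarantee into the three cases. For the small regime $|v_i| < 1/\poly(n)$, set $\eta = \epsilon/2$ and the additive conclusion is immediate. For $v_i \ge 1/\poly(n)$, set $\eta = \epsilon v_i / 2 \ge \epsilon/(2\poly(n))$; since $m$ is polynomial in $1/\eta$, this only absorbs another $\poly(n)$ factor into the sample complexity, and the $2\eta$ additive bound rearranges to $(1-\epsilon)v_i \le \tilde v_i \le (1+\epsilon)v_i$. The case $v_i \le -1/\poly(n)$ is symmetric with inequalities flipped. The only mildly delicate point — and the main obstacle worth flagging — is ensuring that $\eta$ can be chosen proportional to $|v_i|$ without the algorithm knowing $v_i$ in advance; this is resolved by noting that $|v_i| \ge 1/\poly(n)$ gives a uniform lower bound on $\eta$, so the polynomial sample size can be chosen up-front from the promised lower bound on $|v_i|$. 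The special case Lemma~\ref{l:addconcentration} then follows because for additive $C$ we have $v_i = C(i) \ge 1/\poly(n)$ by hypothesis.
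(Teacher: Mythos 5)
Your proposal is correct and follows essentially the same route as the paper's proof: rewrite $v_i$ as a difference of the two conditional expectations using the product structure, lower-bound $|\CS_i|$ and $|\CS_{-i}|$ via Chernoff using the bounded marginals, apply Hoeffding to each empirical average with error proportional to $|v_i|\epsilon$ (or to $\epsilon$ in the small-$v_i$ regime), and union bound. Your added remark that the promised bound $|v_i| \ge 1/\poly(n)$ lets the sample size be fixed up-front is a point the paper leaves implicit, but it does not change the argument.
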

\begin{proof}
Note that 
$$v_i = \EU[S \sim \CD|i \not \in S][C_S(i)] = \EU[S \sim \CD|i \not \in S][C(S \cup i)] - \EU[S \sim \CD|i \not \in S][C(S)] = \EU[S \sim \CD|i \in S][C(S)] - \EU[S \sim \CD|i \not \in S][C(S)].$$
where the second equality is since $\CD$ is a product distribution. In addition, $\E[\avg(\CS_i)] = \E_{S \sim \CD|i \in S}[C(S)] $ and  $\E[\avg(\CS_{-i})] = \E_{S \sim \CD|i \not \in S}[C(S)] $. Since marginal probabilities of the product distributions are assumed to be bounded from below and above by $1/\poly(n)$ and $1 - 1/\poly(n)$ respectively, $|\CS_{i}| = m / \poly(n)$ and $|\CS_{-i}| = m / \poly(n)$ for all $i$ by Chernoff bound. In addition, $\max_S C(S)$ is  assumed to be bounded by $\poly(n)$. So by Hoeffding's inequality, 
$$\Pr\left(\left|\avg(\CS_i) - \EU[S \sim \CD|i \in S][C(S)]\right| \geq |v_i| \epsilon/2\right) \leq 2 e^{-\frac{  m(|v_i|\epsilon)^2}{\poly(n)}},$$
for $0 < \epsilon < 2/v_i$ and 
$$\Pr\left(\left|\avg(\CS_{-i}) - \EU[S \sim \CD|i \not \in S][C(S)]\right| \geq |v_i| \epsilon / 2\right) \leq 2 e^{-\frac{  m(|v_i|\epsilon)^2}{\poly(n)}}.$$
Thus,
$$\Pr(|\tilde{v}_i - v_i| \geq |v_i| \epsilon)  \leq 2 e^{-\frac{  m(|v_i|\epsilon)^2}{\poly(n)}}$$
and,  either $(1 - \epsilon) v_i \leq \tilde{v}_i \leq  (1+\epsilon) v_i$ if $v_i > 0$ or $(1 + \epsilon) v_i \leq  \tilde{v}_i \leq (1-\epsilon) v_i$ if $v_i < 0$, with probability at least $1 -2 e^{-\frac{  m(|v_i|\epsilon)^2}{\poly(n)}}$. If $|v_i| < 1 / \poly(n) $, we obtain  $|v_i - \tilde{v}_i| < \epsilon$ with a similar analysis without any assumption on $v_i$. Otherwise, the bounds on the estimation hold with probability at least $1 - 2 e^{\frac{m\epsilon^2}{\poly(n)}}$.
\end{proof}

\begin{corollary}
Let $C$ be an additive function such that $C(i) \geq 1 / \poly(n)$. Then, $C$ has Shapley value and a core that are $(1-\epsilon)$-approximable from samples over bounded product distributions for any constant $\epsilon > 0$.
\end{corollary}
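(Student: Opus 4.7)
The plan is straightforward given the warm-up concentration lemma. For an additive function $C$, a direct computation shows that every marginal contribution satisfies $C_S(i)=C(i)$, so the Shapley value is $\phi_i=C(i)$, and the cost allocation $\psi_i=C(i)$ lies in the core (both balance and the core property hold as equalities). Thus the entire problem reduces to estimating each $C(i)$ from samples.

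For the estimator I would take $\tilde{\psi}_i:=\tilde{v}_i=\avg(\CS_i)-\avg(\CS_{-i})$. Since by hypothesis $v_i=C(i)\geq 1/\poly(n)$, Lemma~\ref{l:concentration} applies: with $\poly(n,1/\epsilon,1/\delta)$ samples from the bounded product distribution $\CD$, we obtain $(1-\epsilon)C(i)\leq \tilde{v}_i\leq (1+\epsilon)C(i)$ for every $i\in N$ with probability at least $1-\delta$, where the union bound over the $n$ elements is absorbed into the polynomial sample size.

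For the Shapley value, since $\phi_i=C(i)$, the inequality above is exactly $(1-\epsilon)\phi_i\leq \tilde{\phi}_i\leq (1+\epsilon)\phi_i$, and using $1+\epsilon\leq 1/(1-\epsilon)$ this matches the positive bounded Shapley criterion with approximation factor $\alpha=1-\epsilon$. For the core, I exploit additivity: for every $S\subseteq N$, $\sum_{i\in S}\tilde{v}_i\leq (1+\epsilon)\sum_{i\in S}C(i)=(1+\epsilon)C(S)$, which rearranges to $(1-\epsilon')\sum_{i\in S}\tilde{v}_i\leq C(S)$ with $\epsilon'=\epsilon/(1+\epsilon)$, placing $\tilde{v}$ in the approximately stable core on \emph{every} subset simultaneously. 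To enforce exact balance I rescale $\hat{\psi}_i:=(C(N)/\sum_j\tilde{v}_j)\cdot \tilde{v}_i$; since $\sum_j\tilde{v}_j\in[(1-\epsilon)C(N),(1+\epsilon)C(N)]$, this rescaling only perturbs the multiplicative approximation by a factor that is $1+O(\epsilon)$, which is absorbed by selecting the internal target $\epsilon$ slightly smaller than the external one.

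There is no genuine obstacle in this proof; the only point requiring care is threading the approximation constants through the balance renormalization so that both sides of the output guarantee retain the $(1-\epsilon)$ form. Everything else is a direct consequence of Lemma~\ref{l:concentration} combined with the elementary identities $\phi_i=C(i)$ and $C(S)=\sum_{i\in S}C(i)$ that make additive functions trivial in both the Shapley and core senses once their singleton values are known.
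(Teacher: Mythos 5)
Your proposal is correct and follows essentially the same route as the paper: the Shapley claim is read off directly from Lemma~\ref{l:concentration}, and for the core you rescale $\tilde{v}_i$ by $C(N)/\sum_j \tilde{v}_j$ to enforce balance and then absorb the resulting $1+O(\epsilon)$ distortion by choosing the internal accuracy parameter slightly smaller, exactly as in the paper's argument. No gaps.
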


\begin{proof} For the Shapley value, it follows  immediately from Lemma~\ref{l:concentration}. Regarding the core, let
 $\psi_i =\tilde{v}_i \cdot \frac{C(N)}{\sum_{j \in N} \tilde{v}_j }$, so roughly $\tilde{v}_i$ but slightly scaled to obtain the balance property, which holds since $\sum_{i \in N} \psi_i = \sum_{i \in N} \tilde{v}_i \cdot \frac{C(N)}{\sum_{j \in N} \tilde{v}_j } = C(N)$. For the approximate core property, first note that $\frac{C(N)}{\sum_{j \in N} \tilde{v}_j } \leq \frac{C(N)}{(1-\epsilon')\sum_{j \in N} C(j) } = \frac{1}{1-\epsilon'}$, so, 
$$(1 - \epsilon) \sum_{i \in S} \psi_i \leq  (1 - \epsilon) \sum_{i \in S}  \tilde{v}_i \cdot \frac{C(N)}{\sum_{j \in N} \tilde{v}_j } \leq  \frac{(1 - \epsilon)(1+\epsilon')}{1 - \epsilon'}\sum_{i \in S}  C(i) \leq  C(S)$$ for $\epsilon'$ picked accordingly small compared to $\epsilon$. 
\end{proof}

\section{VC-Dimension and Rademacher Complexity Review}
\label{s:appcore}

We formally define the VC-dimension and the Rademacher complexity using definitions from \cite{ML}. We begin with the VC-dimension, which is for classes of binary functions. We first define the concepts of restriction to a set and of shattering, which are useful to define the VC-dimension. 

\begin{definition}
(Restriction of $\CH$ to $A$). Let $\CH$ be a class of functions from $\mathcal{X}$ to $\{0,1\}$ and let $A = \{a_1, \ldots, a_m\} \subset \CX$. The restriction of $\CH$ to $A$ is the set of functions from $A$ to $\{0,1\}$ that can be derived from $\CH$. That is,
$$\CH_A =\{(h(a_1), \ldots, h(a_m)) \given h \in \CH\},$$
where we represent each function from $A$ to $\{0,1\}$ as a vector in $\{0,1\}^{|A|}$.
\end{definition}
\begin{definition}(Shattering). A hypothesis class $\CH$ shatters a finite set $A \subset \CX$ if the restriction of $\CH$ to $A$ is the set of all functions from $A$ to $\{0,1\}$. That is, $|\CH_A| = 2^{|A|}$.
\end{definition}
\begin{definition}(VC-dimension). The VC-dimension of a hypothesis class $\CH$ is the maximal size of a set $S \subset \CX$ that can be shattered by $\CH$. If $\CH$ can shatter sets of arbitrarily large size we say that $\CH$ has infinite VC-dimension.
\end{definition}

Next, we define the Rademacher complexity, which is for more complex classes of functions than binary functions, such as real-valued functions. 

\begin{definition}(Rademacher complexity). Let $\mathbf{\sigma}$ be distributed i.i.d. with $\Pr[\sigma_i = 1] = \Pr[\sigma_i = -1] = 1/2$. The Rademacher complexity $R(A)$ of a set of vectors $A \subset \R^m$ is  
$R(A) := \frac{1}{m} \EU[\sigma]\left[\sup_{a \in A} \sum_{i=1}^m \sigma_i a_i\right].$
\end{definition}

\end{document}